\def\dOi{13(4:29)2017}
\subjclass{D.3.1 Formal Definitions and Theory, F.3.2 Semantics of Programming Language, F.4.1 Mathematical Logic.}
\theoremstyle{plain}
\newcommand{\Chole}[1]{\llparenthesis #1\rrparenthesis} 
\newcommand{\ctxC}[1]{\mathtt{C}\Chole{#1}} 
\newcommand{\ctxL}[1]{\mathtt{E}\Chole{#1}} 
\newcommand{\Fv}[1]{\mathsf{fv}(#1)} 
\newcommand{\Nat}{\mathbb{N}} 
\newcommand{\Sem}[2]{\llbracket #1 \rrbracket_{#2}}
\newcommand{\sequence}[2]{\mathop{\lceil #1 \rceil^\mathit{#2}}}
\newcommand{\vg}{\mathsf{v}} 
\newcommand{\sub}[2]{\{{#1}/{#2}\}} 
\newcommand{\msub}[4]{\{{#1}/{#2}, \dots, {#3}/{#4}\}}		
\newcommand{\head}{head}
\newcommand{\Rule}{\mathsf{r}}
\newcommand{\Int}{\scriptscriptstyle\mathit{int}}
\newcommand{\ToInt}{\overset{\Int\ }{\to_\vg}}
\newcommand{\ToIntTrans}{\,\xrightarrow{\!\Int\!}_\vg^{{}_{\scriptstyle{*}}}}
\newcommand{\Head}{\scriptscriptstyle\mathit{h}}
\newcommand{\ToHead}{\overset{\Head\ }{\to_\vg}} 
\newcommand{\ToHeadRule}{\overset{\Head\ }{\to_\Rule}} 
\newcommand{\ToHeadReflTrans}{\xrightarrow{\Head}_\vg^{{}_{\scriptstyle{*}}}} 
\newcommand{\ToHeadRefl}{\xrightarrow{\Head}_\vg^{{}_{\scriptstyle{=}}}} 
\newcommand{\ToHeadBetav}{\overset{\Head\quad}{\to_{\beta_v}}} 
\newcommand{\ToHeadBetavTrans}{\xrightarrow{\Head}_{\beta_v}^{{}_{\scriptscriptstyle{+}}}} 
\newcommand{\ToHeadBetavReflTrans}{\xrightarrow{\Head}_{\beta_v}^{{}_{\scriptstyle{*}}}} 
\newcommand{\ToHeadBetavRefl}{\xrightarrow{\Head}_{\beta_v}^{{}_{\scriptstyle{=}}}} 
\newcommand{\ToHeadSigma}{\overset{\Head\ \ }{\to_{\sigma}}} 
\newcommand{\ToHeadSigmaTrans}{\xrightarrow{\Head}_{\sigma}^{{}_{\scriptscriptstyle{+}}}} 
\newcommand{\ToHeadSigmaReflTrans}{\xrightarrow{\Head}_{\sigma}^{{}_{\scriptstyle{*}}}} 
\newcommand{\ToHeadSigmaRefl}{\xrightarrow{\Head}_{\sigma}^{{}_{\scriptstyle{=}}}} 
\newcommand{\ToHeadSigmaOne}{\overset{\Head\ \ }{\to_{\sigma_1}}} 
\newcommand{\ToHeadSigmaOneReflTrans}{\xrightarrow{\Head}_{\sigma_1}^{{}_{\scriptstyle{*}}}} 
\newcommand{\ToHeadSigmaThree}{\overset{\Head\ \ }{\to_{\sigma_3}}} 
\newcommand{\ToHeadSigmaThreeReflTrans}{\xrightarrow{\Head}_{\sigma_3}^{{}_{\scriptstyle{*}}}} 
\newcommand{\Inter}{\overset{\Int}{\Rightarrow}} 
\newcommand{\ToIntRule}{\,\xrightarrow{\!\Int\!}_\Rule}
\newcommand{\ToIntBetav}{\,\xrightarrow{\!\Int\!}_{\beta_v}}
\newcommand{\ToIntSigmaOne}{\,\xrightarrow{\!\Int\!}_{\sigma_1}}
\newcommand{\InterTrans}{\overset{\Int}{\Rightarrow}^{{}_{\scriptstyle{*}}}} 
\newcommand{\Weak}{\mathsf{w}}
\newcommand{\Strat}{\mathsf{s}}
\newcommand{\ToWeak}{\to_\Weak}
\newcommand{\ToStrat}{\to_\Strat}
\begin{document}

\title[Standardization and conservativity of a refined call-by-value $\lambda$-calculus]{Standardization and conservativity of a refined call-by-value lambda-calculus\rsuper*}

\author[G.~Guerrieri]{Giulio Guerrieri\rsuper a}	
\address{{\lsuper a}Department of Computer Science, University of Oxford, Oxford, United Kingdom}	
\email{\href{mailto:giulio.guerrieri@cs.ox.ac.uk}{giulio.guerrieri@cs.ox.ac.uk}}  
\thanks{This work has been supported by LINTEL TO\textunderscore Call1\textunderscore 2012\textunderscore 0085, a Research Project funded by the ``Compagnia di San Paolo'', and by the A*MIDEX project (ANR-11-IDEX-0001-02) funded by the ``Investissements d'Avenir'' French Government program, managed by the French National Research Agency (ANR)}	

\author[L.~Paolini]{Luca Paolini\rsuper b}	
\address{{\lsuper b}Dipartimento di Informatica, Universit\`a degli Studi di Torino\\
  C.so Svizzera 185, Torino, Italia}	
\email{\href{mailto:paolini@di.unito.it}{paolini@di.unito.it}}  
\author[S.~Ronchi Della Rocca]{Simona Ronchi Della Rocca\rsuper c}
\address{{\lsuper c}Dipartimento di Informatica, Universit\`a degli Studi di Torino\\
  C.so Svizzera 185, Torino, Italia}
\email{\href{mailto:ronchi@di.unito.it}{ronchi@di.unito.it}}


\keywords{call-by-value, standardization, sequentialization, observational equivalence, sigma-reduction, head reduction, parallel reduction, internal reduction, standard sequence,
lambda-calculus, solvability, potential valuability}

\titlecomment{{\lsuper*}This paper is a revised and extended version of \cite{guerrieri2015lipics}, invited for the special issue of TLCA 2015.}


\begin{abstract}
We study an extension of Plotkin's call-by-value lambda-calculus via two commutation rules (sigma-reductions).
These commutation rules are sufficient to remove harmful call-by-value normal forms from the calculus, 
so that it enjoys elegant characterizations of many semantic properties. 
We prove that this extended calculus is a  conservative refinement of Plotkin's one. 
In particular, 
the notions of solvability and potential valuability for this calculus coincide with those for Plotkin's call-by-value lambda-calculus.
The proof rests on a standardization theorem proved by generalizing Takahashi's approach of parallel reductions to our set of reduction rules. 
The standardization is weak (i.e.~redexes are not fully sequentialized) because of overlapping interferences between reductions.  
\end{abstract}

\maketitle


\section{Introduction}
\label{sect:intro}

Call-by-value evaluation is the most common 
parameter passing mechanism for programming languages: 
parameters are evaluated before being passed.
The $\lambda_{v}$-calculus ($\lambda_v$ for short) has been introduced by Plotkin in \cite{Plotkin75} in order to give a formal account of call-by-value evaluation in the context of $\lambda$-calculus.
Plotkin's 
$\lambda_{v}$ has the same term syntax as the ordinary, i.e.~call-by-name, $\lambda$-calculus ($\lambda$ for short), 
but its reduction rule, $\beta_v$, is a restriction of 
$\beta$-reduction for $\lambda$: $\beta_v$-reduction 
reduces a $\beta$-redex only in case the argument is a \emph{value} (i.e.~a variable or an abstraction). 
While $\beta_v$ is enough for evaluation of closed terms not reducing under abstractions, it turned out to be too weak in order to study semantical and operational properties of terms in $\lambda_v$.
This fact makes the theory of $\lambda_v$ (see \cite{EgidiHonsellRonchi92}) 
more complex to be described than that of $\lambda$.
For example, in $\lambda$, $\beta$-reduction is sufficient to characterize solvability and (in addition with $\eta$) separability (see \cite{barendregt84nh} for an extensive survey);
but in order to characterize similar properties for $\lambda_v$,  
only reduction rules incorrect for call-by-value evaluation have been defined (see \cite{PaoliniRonchi99, paolini02ictcs, ronchi04book}): for $\lambda_v$ this is disappointing and requires complex analyses.
The reason of this mismatching is that in $\lambda_v$ there are \emph{stuck $\beta$-redexes} such as $(\lambda y.M)(zz)$, i.e.~$\beta$-redexes that $\beta_v$-reduction will never 
fire because their argument is normal but 
not a value (nor will it ever become one).
The real problem with stuck $\beta$-redexes is that 
they may prevent the creation of other $\beta_v$-redexes, providing \emph{``premature''} \emph{$\beta_v$-normal forms}.
The issue is serious, as it affects termination and thus can impact on the study of 
observational equivalence and other operational properties in $\lambda_v$.
For instance, it is well-known that in $\lambda$ all unsolvable terms are not $\beta$-normalizable (more precisely, solvable terms coincide with the head $\beta$-normalizable ones).
But in $\lambda_v$ (see \cite{ronchi04book,accattoli12lncs,carraro14lncs}) there are 
\emph{unsolvable $\beta_v$-normal} terms,
e.g.~
$M$ and $N$ in Eq.~\ref{eq:premature}:
\begin{align}\label{eq:premature}
  M &= (\lambda y.\Delta)(zz) \Delta & N &= \Delta ((\lambda y.\Delta)(zz)) & \mbox{(where }\Delta &= \lambda x. xx\mbox{)}
\end{align}
Such $M$ and $N$ contain the stuck $\beta$-redex $(\lambda y.\Delta)(zz)$ forbidding evaluation to keep going.
These $\beta_v$-normal forms can be considered ``premature'' because they are unsolvable and so one would expect them to diverge. 
The idea that $M$ and $N$ should behave like the famous divergent term $\Delta\Delta$ is corroborated by the fact that in $\lambda_v$ they are observationally equivalent to $\Delta\Delta$ and have the same semantics as $\Delta\Delta$ in all non-trivial denotational models of $\lambda_v$. 

In a call-by-value setting, the issue of stuck $\beta$-redexes and then of premature $\beta_v$-normal forms arises only when one considers 
\emph{open terms} (in particular, when the reduction under abstractions is allowed, since it forces to deal with ``locally open'' terms). 
Even if to model functional programming languages with a call-by-value parameter passing
, such as OCaml, it is usually enough to just consider closed terms 
and evaluation not reducing under abstractions (i.e.~%
function bodies are evaluated only when all parameters are supplied), the importance to consider open terms in a call-by-value setting can be found, for example, in 
partial evaluation (which evaluates a function when not all parameters are supplied, see \cite{JonesGomardSestoft93}),
in the 
theory of proof assistants such as Coq (in particular, for type checking in a system based on dependent types, see \cite{DBLP:conf/icfp/GregoireL02}), 
or to reason about (denotational or operational) equivalences of terms in $\lambda_v$ that are congruences, or about other theoretical properties of $\lambda_v$ such as separability, potential valuability and solvability, as already mentioned.

\medskip
Here we study the \emph{shuffling calculus} $\lambda_{v}^\sigma$, an extension of $\lambda_{v}$ proposed in \cite{carraro14lncs}. 
It keeps the same term syntax as $\lambda_{v}$ (and $\lambda$) and 
adds to $\beta_v$-reduction two commutation rules, 
$\sigma_{1}$ and $\sigma_{3}$, which ``shuffle'' constructors in order to move stuck $\beta$-redexes and unblock $\beta_v$-redexes that are hidden by the ``hyper-sequential structure'' of terms. 
These commutation rules for $\lambda_v$ (referred also as \emph{$\sigma$-reduction rules}) are similar to Regnier's $\sigma$-rules for $\lambda$ \cite{Regnier92,Regnier94} and inspired by linear logic proof-nets \cite{Girard87}.
It is well-known that $\beta_v$-reduction can be simulated by linear logic cut-elimination via the call-by-value ``boring'' translation $(\cdot)^v$ of $\lambda$-terms into proof-nets \cite[pp.~81-82]{Girard87}, which decomposes the intuitionistic implication as follows: $(A \Rightarrow B)^v = \ !(A^v \multimap B^v)$ (see also \cite{Accattoli15}). 
It turns out that the images under $(\cdot)^v$ of a $\sigma$-redex and its contractum are equal modulo some non-structural cut-elimination steps.
Note that Regnier's $\sigma$-rules are contained in $\beta$-equivalence, while in $\lambda_v$ our $\sigma$-rules are more interesting, as they are not
contained into (i.e.~they enrich) $\beta_v$-equivalence.

One of the benefits of $\lambda_{v}^\sigma$ is that its $\sigma$-rules make all normal forms solvable (indeed $M$ and $N$ in Eq.~\ref{eq:premature} are not normal in $\lambda_{v}^\sigma$). 
More generally, 
$\lambda_v^\sigma$ allows one to characterize semantical and operational properties which are relevant in a call-by-value setting, such as solvability and potential valuability, in an internal and elegant way, as shown in \cite{carraro14lncs}.

\medskip

The main result of this paper is the conservativity of $\lambda_{v}^\sigma$ with respect to $\lambda_{v}$. 
{Namely, 
$\lambda_{v}^\sigma$ is sound with respect to the operational semantics of $\lambda_{v}$ (Corollary~\ref{cor:observational})}, and the notions of potential valuability and solvability characterize, respectively, the same classes of terms in $\lambda_{v}^\sigma$ and $\lambda_{v}$ (Theorem~\ref{thm:valsolv}).
 This fully justifies the project in \cite{carraro14lncs}  where $\lambda^{\sigma}_{v}$ has been introduced as a tool for studying 
{$\lambda_{v}$ by means of reductions sound for $\lambda_{v}$.}
These conservativity results are a consequence of a \emph{standardization} property for $\lambda_v^\sigma$ (Theorem~\ref{thm:standardization}) that formalizes the good interaction arising between $\beta_{v}$-reduction and 
$\sigma$-reduction in $\lambda_{v}^\sigma$.

\medskip

Let us recall the notion of standardization, which has been first  studied in the ordinary \mbox{$\lambda$-calculus} (see 
\cite{CurryFeys58,hindley78,Mitschke79,barendregt84nh}). 
A reduction sequence is \emph{standard} if 
redexes are fired in a given order, and the 
standardization theorem establishes that
every reduction sequence can be transformed into a standard one in a constructive way. 
Standardization is a key tool to grasp the way in which reductions works and sheds some light on 
relationships and dependencies between redexes. It is useful especially to characterize
semantic properties through reduction strategies, 
such as normalization and operational adequacy.

Standardization theorems for $\lambda_{v}$ have been proved by Plotkin \cite{Plotkin75}, Paolini and Ronchi Della Rocca \cite{ronchi04book, paolini04iandc} and Crary \cite{Crary09}. 
Plotkin and Crary define the same notion of standard reduction sequence, based on a partial order between redexes, while Paolini and Ronchi Della Rocca define a different notion, based on a total order between redexes. 
According to the terminology of \cite{klopthesis,lcalcul},  the former gives rise to a weak standardization, while the latter to a strong one.
These standardization theorems for $\lambda_v$ have been proved using a notion of parallel reduction adapted for $\beta_v$-reduction.
Parallel reduction has been originally introduced for $\lambda$ by Tait and Martin-L\"of to prove confluence of $\beta$-reduction: intuitively, it reduces a number of 
$\beta$-redexes in a term simultaneously.
Takahashi \cite{takahashi1989jsl,Takahashi95} has improved this approach and shown that it can be used also to prove standardization for $\lambda$ without involving the tricky notion of residual of a redex, 
unlike the proofs in \cite{CurryFeys58,hindley78,Mitschke79,barendregt84nh}. 
Crary \cite{Crary09} has adapted to $\lambda_v$ Takahashi's me\-thod for standardization.
In order to prove our standardization theorem for $\lambda_v^\sigma$, we extend  the notion of parallel reduction to include all the reductions of $\lambda^{\sigma}_{v}$. 
So, we consider two groups of redexes, $\beta_{v}$-redexes and $\sigma$-redexes (putting together $\sigma_{1}$ and $\sigma_{3}$), and we induce a total order between redexes of the two groups, without imposing any order between $\sigma_{1}$- and $\sigma_{3}$-redexes.
Whenever $\sigma$-redexes are missing, this notion of standardization coincides with that presented in \cite{paolini04iandc,ronchi04book}. 
We show 
it is impossible to strengthen our standardization by (locally) 
giving precedence to $\sigma_{1}$-reduction 
over $\sigma_{3}$-reduction or vice-versa. 
  
As usual, our standardization proof is based on a \emph{sequentialization} result: inner reductions can always be postponed to the head ones, 
according to a non-standard definition of head reduction. 
However, our proof is peculiar 
with respect to 
other ones in the literature.
In particular, our 
parallel reduction does not enjoy the diamond property (we are unaware of interesting parallel reductions that do not enjoy it), 
thus it cannot be used to prove the confluence. This lack is crucially related to the second distinctive aspect of our study, viz.
the presence of 
several kinds of redexes being mutually overlapping (in the sense of \cite{terese2003book}). 

\medskip
The aim of this paper is first of all theoretical: 
to supply a tool for reasoning about semantic and operational properties of Plotkin's $\lambda_{v}$, 
such as observational equivalence, solvability and potential valuability.
The shuffling calculus $\lambda_v^\sigma$ realizes this aim, as shown by the conservativity results with respect to $\lambda_{v}$. 
These results are achieved since $\lambda_v^\sigma$ avoids the problem of 
premature $\beta_v$-normal forms by dealing uniformly with open and closed terms, 
so allowing one to use the classical reasoning by induction on the structure of terms, which is essential in proving semantic and operational  properties. 

In the light of its good behaviour, we believe that $\lambda_v^\sigma$ is also an interesting calculus deserving to be studied in itself and in comparison with other call-by-value extensions of Plotkin's $\lambda_v$ dealing with the problem of stuck $\beta$-redexes, as done for instance in \cite{AccattoliGuerrieri16}.

The approach supplied by 
$\lambda_v^\sigma$ to circumvent the issue of stuck $\beta$-redexes 
might be profitably used also in more practical settings based on a call-by-value evaluation dealing with open terms, 
such as the aforementioned partial evaluation and theory of proof assistants.



%

\subsection*{Related work}
Several variants of $\lambda_{v}$, arising from different perspectives, have been introduced in the literature for modeling the call-by-value computation and dealing with stuck $\beta$-redexes. 
We would like here to mention at least the contributions of Moggi \cite{moggi88ecs,
Moggi89}, Felleisen and Sabry \cite{sabry92lisp,SabryFelleisen93}, Maraist \textit{et al.}~\cite{DBLP:journals/entcs/MaraistOTW95,Maraistetal99}, Sabry and Wadler \cite{SabryWadler97}, Curien and Herbelin \cite{DBLP:conf/icfp/CurienH00}, Dyckhoff and Lengrand \cite{DyckhoffLengrand07}, Herbelin and Zimmerman \cite{herbelin09lncs}, Accattoli and Paolini \cite{accattoli12lncs}, Accattoli and Sacerdoti Coen \cite{AccattoliSacerdoti15}. 
All these proposals are based on the introduction of new constructs to the syntax of $\lambda_{v}$ and/or new reduction rules extending $\beta_v$, so the comparison between them is not easy 
with respect to syntactical properties (some detailed comparison is given~in~\cite{accattoli12lncs,AccattoliGuerrieri16}).
We point out that the calculi introduced in \cite{moggi88ecs,
Moggi89,sabry92lisp,SabryFelleisen93,DBLP:journals/entcs/MaraistOTW95,SabryWadler97,Maraistetal99,DBLP:conf/icfp/CurienH00,
herbelin09lncs} present some variants of our $\sigma_1$ and/or $\sigma_3$ rules, often in a setting with explicit substitutions.
The shuffling calculus $\lambda_v^\sigma$ 
has been introduced by Carraro and Guerrieri in \cite{carraro14lncs} and further studied in \cite{guerrieri2015lipics,guerrieri15wpte,AccattoliGuerrieri16}. 

Regnier \cite{Regnier92,Regnier94} introduced in $\lambda$ the rule $\sigma_1$ (but not $\sigma_3$) and another similar shuffling rule called $\sigma_2$. 
The $\sigma$-rules for $\lambda$ and $\lambda_v$ are different because they are inspired by two different translations of $\lambda$-terms into linear logic proof-nets (see \cite{Girard87}).
A generalization of our and Regnier's $\sigma$-rules is used in \cite{EhrhardGuerrieri16} for a variant of the $\lambda$-calculus subsuming both call-by-name and call-by-value evaluations.

Our approach to prove standardization for $\lambda_v^\sigma$ is inspired by Takahashi's one \cite{takahashi1989jsl,Takahashi95} for $\lambda$ based on parallel reduction, 
adapted for Plotkin's $\lambda_v$ by Crary \cite{Crary09}.

A preliminary version of this paper, focused essentially on the standardization result for $\lambda_v^\sigma$, has been presented in \cite{guerrieri2015lipics}.
 

\subsection*{Outline.}
In Section~\ref{sect:calculus} the 
syntax of $\lambda^{\sigma}_{v}$ with its 
reduction rules is introduced; in Section~\ref{sect:sequentialize} the sequentialization property is proved; Section~\ref{sect:sos} proves the standardization theorem 
for $\lambda_v^\sigma$; in Section \ref{sect:conservative} the main results are given, namely the conservativity of  $\lambda^{\sigma}_{v}$ with respect to Plotkin's $\lambda_v$-calculus, and it is shown that a restricted version of standard sequence supplies a normalizing strategy. 
Section \ref{sect:conclusions} provides some conclusions and hints for future work.

\section{\texorpdfstring{The shuffling calculus: a call-by-value $\lambda$-calculus with $\sigma$-rules}{The shuffling calculus: a call-by-value lambda-calculus with sigma-rules}}
\label{sect:calculus}

In this section we introduce the \emph{shuffling calculus} $\lambda_v^\sigma$, namely the call-by-value $\lambda$-calculus defined in \cite{carraro14lncs} that adds two $\sigma$-reduction rules to the pure (i.e.~without constants) call-by-value $\lambda$-calculus $\lambda_v$ proposed by Plotkin in \cite{Plotkin75}.
The syntax of terms of $\lambda_v^\sigma$ 
is the same as Plotkin's 
$\lambda_v$ and then the same as the ordinary (i.e. call-by-name) $\lambda$-calculus $\lambda$.

\begin{defi}[Term, value]
  Given a countably infinite set $\mathcal{V}$ of \emph{variables} (denoted by
  $x, y, z, \dots$), the sets $\Lambda$ of \emph{terms} and
  $\Lambda_v$ of \emph{values} are defined by mutual induction as follows:
  \begin{equation*}
    \begin{aligned}
      &(\Lambda_{v})\quad&	V, U 		&\Coloneqq \, x \mid \lambda x.M \quad&\text{\emph{values};}\qquad\qquad \ &&
      &(\Lambda)\quad& M, N, L &\Coloneqq \, V \mid MN
      \quad&\text{\emph{terms}.}
    \end{aligned}
  \end{equation*}
\end{defi}

Clearly, $\Lambda_v \subsetneq \Lambda$.
Terms of the form $MN$ (resp.~$\lambda x.M$) are called \emph{applications} (resp.~\emph{abstractions}).
In $\lambda x.M$, the operator $\lambda$ binds its variable $x$ wherever $x$ occurs free in the body $M$.
All terms are considered up to $\alpha$-conversion (i.e.~renaming of bound variables
).

As usual
, $\lambda$'s associate to the right and applications to the left, so $\lambda x y.N$ stands for $\lambda x.(\lambda y.N)$ and $MNL$ for $(MN)L$.
The set of free variables of a term $N$ (i.e.~the set of variables that have occurrences in $N$ not bound by $\lambda$'s) is denoted by $\Fv{N}$: $N$ is \emph{open} if $\Fv{N} \neq \emptyset$, \emph{closed} otherwise.
Given $V_1, \dots, V_n \in \Lambda_v$ and pairwise distinct variables $x_1, \dots, x_n$, $N \msub{V_1}{x_1}{V_n}{x_n}$ denotes the term obtained by the \emph{capture-avoiding simultaneous substitution} of $V_i$ for each free occurrence of $x_i$ in the term $N$ (for all $1 \leq i \leq n$). 
Note that if $N \in \Lambda_v$ then \mbox{$N\{V_1/x_1, \dots, V_n/x_n\} \in \Lambda_v$ (values are closed under substitution).}

\begin{rem}\label{rem:shape}
  Any term 
  can be written in a unique way as $VN_1 \dots N_n$ (a value $V$ recursively applied to $n$ terms $N_1, \dots, N_n$) for some $n \in \Nat$; in particular, values are obtained 
  for $n = 0$.
\end{rem}

From now on, we set $I = \lambda x.x$ and $\Delta = \lambda x. xx$.
One-hole contexts are defined as usual.

\begin{defi}[Context]
  \emph{Contexts} (with exactly one \emph{hole} $\Chole{\cdot}$), denoted by
  $\mathtt{C}$, are defined 
  via the grammar:
  \begin{equation*}
    \mathtt{C} \, \Coloneqq \, \Chole{\cdot} \, \mid \, \lambda x.\mathtt{C} \, \mid \, \mathtt{C} M \, \mid \, M \mathtt{C} \,.
  \end{equation*}
  
  Let $\mathtt{C}$ be a context.
  The set of free variables of $\mathtt{C}$ is denoted by $\Fv{\mathtt{C}}$.
  We use $\mathtt{C}\Chole{M}$ for the term obtained by the
  capture-allowing substitution of the term $M$ for the hole
  $\Chole{\cdot}$ in 
  $\mathtt{C}$.
\end{defi}

The set of 
$\lambda_v^\sigma$-reduction rules contains Plotkin's $\beta_v$-reduction rule together with two simple commutation rules called $\sigma_1$ and $\sigma_3$, studied in \cite{carraro14lncs}.

\begin{defi}[Reduction rules]\label{def:lsv-reduction}
For any $M, N, L \in \Lambda$ and any $V \in \Lambda_v$, we define the following binary relations on $\Lambda$:
\begin{equation*}
\begin{aligned}
  (\lambda x.M)V    		&\mapsto_{\beta_v}    M\{V/x\} 		
\\
(\lambda x.M)NL   		&\mapsto_{\sigma_1}   (\lambda x.ML)N 	&&\text{ with } x\notin \Fv{L} \\
 V((\lambda x.L)N) 		&\mapsto_{\sigma_3} (\lambda x.VL)N 	&&\text{ with } x\notin \Fv{V}.
\end{aligned}
\end{equation*}


We set $\mapsto_{\sigma} \, = \ \mapsto_{\sigma_1} \!\cup \mapsto_{\sigma_3}$ and $\mapsto_{\mathsf{v}} \, = \ \mapsto_{\beta_v} \!\cup \mapsto_{\sigma}$.

For any $\mathsf{r} \in \{\beta_v, \sigma_1, \sigma_3, \sigma, \vg\}$, if $M \mapsto_\mathsf{r} M'$ then $M$ is a \emph{$\mathsf{r}$-redex} and $M'$ is its \emph{$\mathsf{r}$-contractum}. 
In the same sense, a term of the shape $(\lambda x.M)N$ (for any $M, N \in \Lambda$) is a \emph{$\beta$-redex}.
\end{defi}

The side conditions for $\mapsto_{\sigma_1}$ and $ \mapsto_{\sigma_3}$ in Definition~\ref{def:lsv-reduction} can be always fulfilled by $\alpha$-renaming. 
Clearly, any $\beta_v$-redex is a $\beta$-redex but the converse does not hold: $(\lambda x. z) (yI)$ is a $\beta$-redex but not a $\beta_v$-redex. 
Redexes of different kind may \emph{overlap} (in the sense of \cite{terese2003book}
):
e.g. the term $\Delta I \Delta$ is a $\sigma_1$-redex and contains the $\beta_v$-redex $\Delta I$; the term $\Delta (I \Delta) (xI)$ is a $\sigma_1$-redex and contains the $\sigma_3$-redex $\Delta(I\Delta)$, which contains in turn the $\beta_v$-redex $I \Delta$.

\begin{rem}\label{rmk:alternative-sigma}
  The relation $\mapsto_\sigma$ can be defined as a unique reduction rule, namely
$$\ctxL{(\lambda x.M)N} \mapsto_{\sigma} (\lambda x.\ctxL{M})N$$
 where $\mathtt{E}$ is a context of the form $\Chole{\cdot}L$ or $V\Chole{\cdot}$ (for any $L \in \Lambda$ and $V \in \Lambda_v$) such that $x \notin \Fv{\mathtt{E}}$.
\end{rem}

Let $\mathsf{R}$ be a binary relation on $\Lambda$. 
We denote by $\mathsf{R}^*$ (resp.\ $\mathsf{R}^+$; $\mathsf{R}^=$) its reflexive-transitive (resp.\ transitive; reflexive) closure.

\begin{defi}[Rewriting notations and terminology]
\label{def:rewriting}
  Let 
  $\Rule \in \{\beta_v, \sigma_1, \sigma_3, \sigma, \vg\}$.
  \begin{itemize}
  \item The \emph{$\Rule$-reduction} $\to_\Rule$ is the
    contextual closure of $\mapsto_\Rule$,
    i.e.\ 
    $M \to_\Rule M'$ iff there is a context $\mathtt{C}$ and $N,
    N' \in \Lambda$ such that $M = \ctxC{N}$, $M' = \ctxC{N'}$ and $N
    \mapsto_\Rule N'$.

  \item  The \emph{$\Rule$-equivalence} $=_\Rule$ is the
    congruence relation on $\Lambda$ generated by $\mapsto_\Rule$, i.e.\ the
    reflexive-transitive and symmetric closure of $\to_\Rule$.

\item    Let $M$ be a term: $M$ is \emph{$\Rule$-normal} if there is
    no term $N$ such that $M \to_\Rule N$; $M$ is
    \emph{$\mathsf{r}$-normalizable} if there is a $\mathsf{r}$-normal
    term $N$ such that $M \to_\Rule^* N$, and we then say that $N$ is a \emph{$\Rule$-normal form of $M$}; $M$ is \emph{strongly
      $\Rule$-normalizable} if it does not exist an infinite sequence of $\Rule$-reductions starting from $M$.
    Finally, $\to_\Rule$ is \emph{strongly
      normalizing} if every $N \in \Lambda$ is strongly $\Rule$-normalizable.
  \end{itemize}
\end{defi}

\noindent From Definitions~\ref{def:lsv-reduction} and \ref{def:rewriting}, it follows immediately that $\to_{\vg} \, = \, \to_{\beta_v} \!\cup \to_{\sigma}$ with $\to_{\sigma} \, \subsetneq \, \to_{\vg}$ and $\to_{\beta_v} \, \subsetneq \, \to_{\vg}$, and also that $\to_{\sigma} \, = \, \to_{\sigma_1} \!\cup \to_{\sigma_3}$ with $\to_{\sigma_1} \, \subsetneq \, \to_{\sigma}$ and $\to_{\sigma_3} \, \subsetneq \, \to_{\sigma}$.

\begin{rem}\label{rmk:fromvalue} 
  Given $\Rule \in \{\beta_v, \sigma_1, \sigma_3, \sigma, \vg\}$ (resp.\ $\Rule \in \{\sigma_1, \sigma_3, \sigma\}$), values are closed under $\Rule$-re\-du\-ction (resp.\ $\mathsf{r}$-expansion): for any $V \in \Lambda_v$, if $V \to_{\Rule} M$ (resp.\ $M \to_\Rule V$) then $M \in \Lambda_v$ and
  more precisely $V = \lambda x.N$ and $M = \lambda x.L$ for some $N, L \in \Lambda$ with $N \to_\Rule L$ (resp.~$L \to_\Rule N$).
\end{rem}

\begin{prop}[Basic properties of 
reductions, \cite{Plotkin75,carraro14lncs}]\label{prop:general-properties}
  The $\sigma$-reduction is confluent and strongly normalizing. 
  The $\beta_v$- and $\vg$-reductions are confluent.
\end{prop}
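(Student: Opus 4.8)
The plan is to establish the statement in four steps --- strong normalization of $\to_\sigma$, confluence of $\to_\sigma$, confluence of $\to_{\beta_v}$, and confluence of $\to_\vg$ --- using standard abstract rewriting tools and deferring the interaction between $\beta_v$ and $\sigma$ to the very end.

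For strong normalization of $\to_\sigma$ I would exhibit a weight $\|\cdot\|\colon\Lambda\to\Nat$ that is strictly monotone in each position and strictly decreases along every $\sigma$-step. A convenient multiplicative choice is $\|x\|=2$, $\|\lambda x.M\|=\|M\|+1$ and $\|MN\|=\|M\|\cdot\|N\|$; an easy induction gives $\|M\|\geq 2$ for all $M$. Computing on the rules of Definition~\ref{def:lsv-reduction} yields $\|(\lambda x.M)NL\|-\|(\lambda x.ML)N\|=\|N\|\,(\|L\|-1)>0$ for $\sigma_1$ and $\|V((\lambda x.L)N)\|-\|(\lambda x.VL)N\|=\|N\|\,(\|V\|-1)>0$ for $\sigma_3$, since $\|L\|,\|V\|\geq 2$. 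As $\|\cdot\|$ is strictly increasing in the body of an abstraction and in each factor of an application (all weights being positive), the strict decrease propagates through every context, so each $\to_\sigma$-step lowers a natural number and $\to_\sigma$ is strongly normalizing.

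Confluence of $\to_\sigma$ then follows, by Newman's Lemma (see \cite{barendregt84nh,terese2003book}), from local confluence, which I would obtain by a critical-pair analysis. A $\sigma$-redex is an application whose left subterm is a $\beta$-redex (the $\sigma_1$ shape) or whose left subterm is a value and right subterm is a $\beta$-redex (the $\sigma_3$ shape); no node matches both, so two distinct redexes occur at distinct nodes. When they are disjoint, or when one lies properly inside a component $M$, $N$, $L$ or $V$ of the other, the two steps plainly commute. The only genuine overlaps occur when the active $\beta$-redex of one redex is itself the active $\beta$-redex of a nested $\sigma$-redex: a $\sigma_1$-redex $(\lambda x.M)NL$ in which $N$ is a $\beta$-redex overlaps a $\sigma_3$-step on $(\lambda x.M)N$, and a $\sigma_3$-redex $V((\lambda x.L)N)$ in which $N$ is a $\beta$-redex overlaps a nested $\sigma_3$-step. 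In each case the two one-step reducts are joined by a bounded number of further $\sigma_1$- and $\sigma_3$-steps, after $\alpha$-renaming bound variables to restore the side conditions $x\notin\Fv{L}$ and $x\notin\Fv{V}$ (harmless, as terms are taken up to $\alpha$-equivalence). This gives local confluence, hence confluence of $\to_\sigma$.

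Confluence of $\to_{\beta_v}$ is Plotkin's theorem \cite{Plotkin75}, for which I would use the Tait--Martin-L\"of parallel reduction: its one-step parallel closure enjoys the diamond property --- the value substitutions it performs being well behaved by Remark~\ref{rmk:fromvalue} --- and its transitive closure is $\to_{\beta_v}^{*}$. For $\to_\vg=\to_{\beta_v}\cup\to_\sigma$ I would apply the Hindley--Rosen lemma: as $\to_{\beta_v}$ and $\to_\sigma$ are each confluent, it suffices that they commute, which I would derive from the strong-commutation lemma by checking that whenever $M\to_{\beta_v}M_1$ and $M\to_\sigma M_2$ there is $M_3$ with $M_1\to_\sigma^{*}M_3$ and $M_2\to_{\beta_v}^{=}M_3$. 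This last step is the main obstacle, since the two reductions interfere in two ways. First, the $\beta_v$-step may copy a $\sigma$-redex hidden inside the substituted value, so the $M_1$-side genuinely needs $\to_\sigma^{*}$ (reducing every copy), the copies staying values by Remark~\ref{rmk:fromvalue}. Second, the two redexes overlap exactly when the $\beta$-redex shuffled by the $\sigma$-step has a value argument and is therefore also a $\beta_v$-redex, as in $(\lambda x.P)VL$ or $V((\lambda x.L)U)$ with $U\in\Lambda_v$; there the $\sigma$-reduct remains a $\beta_v$-redex whose contraction falls back onto $M_1$, using $x\notin\Fv{L}$ (resp.~$x\notin\Fv{V}$) to commute the substitution past the shuffled application. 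Strong commutation then lifts to commutation of $\to_{\beta_v}^{*}$ and $\to_\sigma^{*}$, and Hindley--Rosen delivers confluence of $\to_\vg$.
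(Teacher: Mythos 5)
Your proposal is correct, but it is a genuinely different proof from the one in the paper: the paper's argument for Proposition~\ref{prop:general-properties} is essentially a pointer to the literature --- confluence of $\to_{\beta_v}$ is credited to \cite{Plotkin75}, confluence of $\to_\sigma$ is deduced in one line from the claim that $\to_\sigma$ is strongly confluent in the sense of \cite{huet1980acm}, and both strong normalization of $\to_\sigma$ and confluence of $\to_\vg$ are deferred entirely to \cite{carraro14lncs} --- whereas you give a self-contained argument for each part. Your multiplicative weight ($\|x\|=2$, $\|\lambda x.M\|=\|M\|+1$, $\|MN\|=\|M\|\cdot\|N\|$) does yield the strict decreases $\|N\|(\|L\|-1)$ and $\|N\|(\|V\|-1)$ and is monotone in every context, so the strong-normalization step is airtight. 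For confluence of $\to_\sigma$, your route via Newman's Lemma is not only different but arguably safer than the paper's: the local peak in Figure~\ref{fig:sigma-overlap} ($N_0 \leftarrow_\sigma N \to_\sigma N_1$) can only be closed by taking \emph{two} $\sigma$-steps from $N_1$ while $N_1$ has a single one-step reduct, so $\to_\sigma$ does not literally satisfy Huet's strong-confluence condition for that ordered peak; Newman's Lemma only needs joinability plus the strong normalization you have already established, and the two critical-pair families you identify ($\sigma_1$ over $\sigma_3$, and nested $\sigma_3$) are indeed the only root overlaps and are joinable as you claim. Finally, where the paper merely says that $\to_\vg$ ``is not strongly confluent and a more sophisticated proof is needed,'' you supply a concrete such proof: the Hindley--Rosen decomposition with the strong-commutation condition ($M\to_{\beta_v}M_1$ and $M\to_\sigma M_2$ imply $M_1\to_\sigma^*M_3$ and $M_2\to_{\beta_v}^=M_3$) is correctly oriented --- $\sigma$-steps never duplicate subterms, so the $\beta_v$-side stays at most one step, while $\beta_v$-substitution may copy a $\sigma$-redex, which is exactly why the other side needs $\to_\sigma^*$ --- and the two overlap cases you list ($(\lambda x.P)VL$ and $V((\lambda x.L)U)$ with $U\in\Lambda_v$) close because the shuffled term is still a $\beta_v$-redex whose contraction commutes with the displaced application thanks to the side conditions $x\notin\Fv{L}$ and $x\notin\Fv{V}$. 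What your approach buys is independence from \cite{carraro14lncs}; what it costs is that the joining reductions for the $\sigma$-critical pairs and the case analysis for strong commutation are asserted rather than written out, so a fully formal version would still need those (routine but non-trivial) computations spelled out.
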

\begin{proof}
  Confluence of $\beta_v$-reduction has been proved in \cite{Plotkin75}.
  The  $\sigma$-reduction is strongly confluent in the sense of \cite{huet1980acm},
  whence confluence of $\sigma$-reduction follows.
  The $\vg$-reduction is not strongly confluent
and a more sophisticated proof is needed. All details (as well as the proof that $\sigma$-reduction is strongly normalizing) are  in \cite{carraro14lncs}.
 \end{proof}
 
By confluence (Proposition~\ref{prop:general-properties}), for any $\Rule \in \{\beta_v, \sigma, \vg\}$ we have that: $M =_{\Rule} N$ iff $M \to_\Rule^* L \,\,{}_\Rule^*\!\!\!\leftarrow N$ for some term $L$; and any $\Rule$-normalizable term has a \emph{unique} $\Rule$-normal form.

Looking at the tree-like representation of terms, there is a clear symmetry between the commutation rules $\mapsto_{\sigma_1}$ and $\mapsto_{\sigma_3}$ in Definition~\ref{def:lsv-reduction}, except for the fact that $\mapsto_{\sigma_3}$ requires that the ``shuffled'' term is a value.
If in the definition of $\mapsto_{\sigma_3}$ the value $V$ were replaced by any term,  $\to_\sigma$ and $\to_\vg$ would not be (locally) confluent: 
consider all the reduction sequences obtained from $(\lambda x.x')(xI) ((\lambda y.y')(yI))$ using $\to_{\beta_v}$, $\to_{\sigma_1}$ and the unrestricted version of $\to_{\sigma_3}$.

The \emph{shuffling calculus} or \emph{$\lambda_v^\sigma$-calculus} ($\lambda_v^\sigma$ for short) is the set $\Lambda$ of terms endowed with the reduction $\to_{\vg}$. 
The set $\Lambda$ endowed with the reduction $\to_{\beta_v}$ is the $\lambda_v$-calculus ($\lambda_v$ for short), i.e.\ Plotkin's pure call-by-value $\lambda$-calculus \cite{Plotkin75}, a sub-calculus of $\lambda_v^{\sigma}$.

\begin{exa}\label{ex:reductions}
  Recalling the terms $M$ and $N$ in Eq.~\ref{eq:premature}, one has that
  $M \!=\! (\lambda y.\Delta)(xI) \Delta \!\to_{\sigma_1}\! (\lambda y.\Delta\Delta) (x I) \!\to_{\beta_v}\! (\lambda y.\Delta\Delta) (x I) \!\to_{\beta_v} \!\dots$ and $N \!=\! \Delta((\lambda y.\Delta)(x I)) \!\to_{\sigma_3}\! (\lambda y.\Delta\Delta) (x I) \!\to_{\beta_v}\! (\lambda y.\Delta\Delta) (x I) \!\to_{\beta_v} \!\dots$ are the only possible $\mathsf{v}$-reduction paths from $M$ and $N$ respectively: 
  $M$ and $N$ are not $\vg$-normalizable and $M =_\vg N$.  
  But $M$ and $N$ are $\beta_v$-normal ($(\lambda y.\Delta) (x I)$ is a stuck $\beta$-redex) 
  and different, hence $M \neq_{\beta_v} N$ by confluence of $\to_{\beta_v}$ (Proposition~\ref{prop:general-properties}).
\end{exa}

Example~\ref{ex:reductions} shows how $\sigma$-reduction shuffles constructors and moves stuck $\beta$-redex in order to unblock $\beta_v$-redexes which are hidden by the ``hyper-sequential structure'' of terms, avoiding ``premature'' normal forms.
An alternative approach to circumvent the issue of stuck $\beta$-redexes is given by $\lambda_\mathsf{vsub}$, the call-by-value $\lambda$-calculus with explicit substitutions introduced in \cite{accattoli12lncs}, 
where hidden $\beta_v$-redexes are reduced using rules acting at a distance. 
In \cite{AccattoliGuerrieri16} it has been shown that $\lambda_\mathsf{vsub}$ and $\lambda_v^\sigma$ can be 
embedded in each other preserving termination and divergence.
Interestingly, both calculi are inspired by an analysis of Girard's ``boring'' call-by-value translation of $\lambda$-terms into linear logic proof-nets \cite{Girard87,Accattoli15}.

\section{\texorpdfstring{Sequentialization}{Sequentialization}}\label{sect:sequentialize}

Standardization is a consequence of a \emph{sequentialization} property: every $\vg$-reduction sequence can always be rearranged so 
that head $\vg$-reduction steps precede internal ones. 
To prove this 
sequentialization (Theorem~\ref{thm:trifurcate}), we adapt to $\lambda_v^\sigma$ Takahashi's method \cite{Takahashi95,Crary09} based on parallel reduction. 
This is the most technical part of the paper: for the sake of readability, this proof 
{together with all needed lemmas 
are collected in S
ection~\ref{subsect:proof}.}

{First, we partition} $\vg$-reduction into head $\vg$-reduction and internal $\vg$-reduction. 
In turn, head $\vg$-reduction 
divides up into head $\beta_v$-reduction and head $\sigma$-reduction.
Their definitions are driven by the shape of terms, as given in 
Remark~\ref{rem:shape}.
  
\begin{defi}[Head $\beta_v$-reduction]\label{def:headbetav}
  The \emph{head $\beta_v$-reduction} $\ToHeadBetav \, \subseteq \Lambda \times \Lambda$ is defined inductively by the following rules (where $m \in \Nat$):
  \begin{center}
  {\small
    \def\ScoreOverhang{0pt}
    \AxiomC{}
    \RightLabel{\scriptsize $\beta_v$}
    \UnaryInfC{$(\lambda x . M)V M_1\dots M_m \ToHeadBetav M\sub{V}{x}M_1\dots M_m$}
    \DisplayProof
    \quad
    \def\ScoreOverhang{0pt}
    \AxiomC{$N \ToHeadBetav N'$}
    \RightLabel{\scriptsize $\mathit{right}$}
    \UnaryInfC{$VN M_1 \dots M_m \ToHeadBetav VN' M_1 \dots M_m$}
    \DisplayProof}.
  \end{center}
\end{defi}

Head $\beta_v$-reduction 
is the reduction strategy choosing at every step the (unique, if any) leftmost-outermost $\beta_v$-redex not in the scope of a $\lambda$
: thus, it is a \emph{deterministic} reduction (i.e.~
a partial function from $\Lambda$ to $\Lambda$) and does not reduce values.
It coincides with the 
``left reduction'' defined in \cite[p.\,136]{Plotkin75} for $\lambda_v$,
called ``evaluation'' in \cite{SabryFelleisen93,
Lassen05,Crary09}, and it models call-by-value evaluation as implemented in functional programming languages such as OCaml.
Head $\beta_v$-reduction 
is often equivalently defined either by using the rules 
  \begin{center}
  {\small
    \def\ScoreOverhang{0pt}
    \AxiomC{}
    \UnaryInfC{$(\lambda x . M)V \ToHeadBetav M\sub{V}{x}$}
    \DisplayProof
    \qquad\quad
    \def\ScoreOverhang{0pt}
    \AxiomC{$N \ToHeadBetav N'$}
    \UnaryInfC{$VN \ToHeadBetav VN'$}
    \DisplayProof
    \qquad\quad
    \def\ScoreOverhang{0pt}
    \AxiomC{$M \ToHeadBetav M'$}
    \UnaryInfC{$MN \ToHeadBetav M'N$}
    \DisplayProof
  }
  \end{center}
or as the closure of the relation $\mapsto_{\beta_v}$  under evaluation contexts $\mathtt{E} \Coloneqq \Chole{\cdot} \, \mid \, \mathtt{E} M \, \mid \, V \mathtt{E}$. 
We prefer our presentation since it allows more concise proofs and stresses in a more explicit way how head $\beta_v$-reduction acts on the general shape of terms, 
as given in Remark~\ref{rem:shape}.

\begin{defi}[Head $\sigma$-reduction]\label{def:headsigma}
  The \emph{head $\sigma$-reduction} $\ToHeadSigma\, \subseteq \Lambda \times \Lambda$ is defined inductively by the following rules 
  (where $m \in \Nat$, and $x \notin \Fv{L}$ in the rule $\sigma_1$, $x \notin \Fv{V}$ in the rule~$\sigma_3$):
  \begin{center}
  {\small
    \def\ScoreOverhang{0pt}
    \AxiomC{}
    \RightLabel{\scriptsize $\sigma_1$}
    \UnaryInfC{$(\lambda x.M)NL M_1\dots M_m \ToHeadSigma (\lambda x.ML)N M_1\dots M_m$}
    \DisplayProof
    \quad
    \def\ScoreOverhang{0pt}
    \AxiomC{$N \ToHeadSigma N'$}
    \RightLabel{\scriptsize $\mathit{right}$}
    \UnaryInfC{$VN M_1 \dots M_m \ToHeadSigma  VN' M_1 \dots M_m$}
    \DisplayProof

    \vspace{.6\baselineskip}
    \AxiomC{}
    \RightLabel{\scriptsize $\sigma_3$}
    \UnaryInfC{$V((\lambda x.L)N) M_1\dots M_m  \ToHeadSigma (\lambda x.VL)N M_1\dots M_m$}
    \DisplayProof.
  }
  \end{center}  

\noindent The \emph{head $\sigma_1$-}(resp.~\emph{head $\sigma_3$-})\emph{reduction} is $\ToHeadSigmaOne \,=\, \to_{\sigma_1} \cap \ToHeadSigma$ (resp.~$\ToHeadSigmaThree \,=\, \to_{\sigma_3} \cap \ToHeadSigma$).
\end{defi}

Head $\sigma$-reduction 
is a non-deterministic reduction,
since it reduces at every step 
``one of the 
leftmost-outermost'' $\sigma_1$- or $\sigma_3$-redexes not in the scope of a $\lambda$
: such 
head 
$\sigma$-redexes may 
be not unique and overlap, 
e.g.~the term 
$N$ in Figure~\ref{fig:sigma-overlap} 
is a head $\sigma_1$-redex containing 
the head $\sigma_3$-redex $(\lambda y.y')(\Delta(xI))
$
, or the term 
$I(\Delta(I(xI)))$ 
is a head $\sigma_3$-redex containing 
another head $\sigma_3$-redex $\Delta(I(xI))$ (see Definition~\ref{def:headvi} below for the \mbox{formal definition of head redex).}

\begin{defi}[Head $\vg$-reduction, internal $\vg$-reduction, head redex]\label{def:headvi}
The \emph{head} $\vg$-\emph{reduction} is $\ToHead \, = \, \ToHeadBetav \!\cup\! \ToHeadSigma$.
The \emph{internal} $\vg$-\emph{reduction} is $\ToInt \, = \, \to_\vg \!\smallsetminus\! \ToHead$.

Given $\Rule \in \{\beta_v, \sigma_1, \sigma_3, \sigma, \vg\}$
, a \emph{head $\Rule$-redex} \emph{of a term $M$} is a $\Rule$-redex $R$ occurring in $M$ such that $M \ToHeadRule N$ for some 
  term $N$ obtained from $M$ by replacing $R$ with its $\Rule$-contractum.
\end{defi} 

Note that $\mapsto_{\beta_v} \, \subsetneq \, \ToHeadBetav \, \subsetneq \, \to_{\beta_v}$ and $\mapsto_{\sigma} \, \subsetneq \, \ToHeadSigma \, \subsetneq \, \to_{\sigma}$ and $\mapsto_\vg \, \subsetneq \, \ToHead \, \subsetneq \, \to_\vg$.
It is immediate to check that, for any $M,M',N,L \in \Lambda$, $M \to_\vg M'$ implies $NLM \ToInt NLM'$.

Head $\vg$-reduction is non-deterministic since head $\sigma$-reduction is so, 
and since head $\beta_v$- and head $\sigma_1$-(resp.~$\sigma_3$-)redexes may overlap, as in the term $I \Delta I$ (resp.~$I (\Delta I)$). 
Also, Figure~\ref{fig:sigma-overlap} shows that head $\sigma$- and head $\vg$-reductions 
are not (locally) confluent and a term may have several 
head $\sigma$/$\vg$-normal forms, indeed 
$N'_0$ and $N'_1$ are 
head $\sigma$/$\vg$-normal forms of $N$ but $N'_0 \neq N'_1$. 
However, this does not contradict the confluence of $\sigma$- and $\vg$-reductions because $N'_1 \to_\sigma N'_0$ 
by performing an internal $\vg$-reduction step.
Also, Remark~\ref{rmk:unique} in Section~\ref{sect:conservative} states that if a term head $\vg$-reduces to a value $V$%
, then $V$ is its unique head $\vg$-normal form
.

\medskip
Now we can state the first main result of this paper, namely the sequentialization theorem (Theorem~\ref{thm:trifurcate}), saying that 
any $\vg$-reduction sequence can be sequentialized into a head $\beta_v$-reduction sequence followed by a head $\sigma$-reduction sequence, followed by an internal $\vg$-reduction sequence. In ordinary $\lambda$-calculus, the well-known result corresponding to Theorem~\ref{thm:trifurcate} says that a $\beta$-reduction sequence can be factorized in a head $\beta$-reduction sequence followed by an internal $\beta$-reduction sequence (see for example \cite[Corollary~2.6]{Takahashi95}).


\begin{thm}[Sequentialization; its proof is in Section~\ref{subsect:proof}]
\label{thm:trifurcate}\hfill
  If $M \to_{\vg}^* M'$ then there exist $L,N \in \Lambda$ such that   $M \ToHeadBetavReflTrans L \ToHeadSigmaReflTrans N \ToIntTrans M'$.
\end{thm}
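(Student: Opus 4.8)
The plan is to adapt Takahashi's parallel-reduction technique, as announced in the text. First I would introduce a \emph{parallel} $\vg$-reduction $\Rightarrow$ that contracts simultaneously a (possibly empty) set of $\beta_v$- and $\sigma$-redexes of a term, defined by the usual structural rules driven by the shape $VN_1\dots N_n$ of Remark~\ref{rem:shape}. Its role is that it is reflexive and satisfies the sandwich $\to_\vg \,\subseteq\, \Rightarrow \,\subseteq\, \to_\vg^*$, whence $\Rightarrow^* \,=\, \to_\vg^*$; so it suffices to prove the factorization for $\Rightarrow^*$. Alongside it I would single out the internal parallel reduction $\Inter$, the restriction of $\Rightarrow$ that never contracts a head redex in the sense of Definition~\ref{def:headvi}. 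Note that $\Inter$ collapses to an $\ToIntTrans$ sequence by the sandwich, so a trailing $\InterTrans$ block is exactly what the statement needs on the right.

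The core is a one-step factorization lemma: whenever $M \Rightarrow M'$, there are $L,N \in \Lambda$ with $M \ToHeadBetavReflTrans L \ToHeadSigmaReflTrans N \Inter M'$. I would prove it by induction on $M$, using the canonical decomposition $M = VN_1\dots N_n$ of Remark~\ref{rem:shape} and casing on which redex (if any) sits at the head. If the head is a $\beta_v$-redex, I fire it as a $\ToHeadBetav$ step and recurse on the contractum; if the head is a $\sigma_1$- or $\sigma_3$-redex, I fire a $\ToHeadSigma$ step; otherwise every contracted redex is internal and $M \Inter M'$ holds directly. The recursion re-enters the value $V$, which by Remark~\ref{rmk:fromvalue} can parallel-reduce only inside an abstraction body and hence only internally, and the arguments $N_i$, whose parallel reductions are internal relative to $M$ once the head has been settled.

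To lift this to $\Rightarrow^*$ I would establish two postponement lemmas. First, internal parallel reduction can be pushed past head reduction: from $M \Inter N' \ToHead P$ one reconstructs $M \ToHeadTrans N'' \Rightarrow P$ for some $N''$, re-creating the head step up front and re-packaging the remainder as a parallel step. Second, at the head level head $\sigma$-steps postpone after head $\beta_v$-steps, i.e.\ $\ToHeadSigma \cdot \ToHeadBetav \,\subseteq\, \ToHeadBetavTrans \cdot \ToHeadSigmaReflTrans$. With these in hand, an induction on the number of $\Rightarrow$-steps pushes all head $\beta_v$-steps to the front, then all head $\sigma$-steps, leaving a single internal tail $\InterTrans$, which is the desired $M \ToHeadBetavReflTrans L \ToHeadSigmaReflTrans N \ToIntTrans M'$.

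The main obstacle is precisely the overlapping of redexes stressed in the introduction: since $\beta_v$- and $\sigma$-redexes, and $\sigma_1$- and $\sigma_3$-redexes, can overlap (as in $I\Delta I$ and $I(\Delta I)$, or the term $N$ of Figure~\ref{fig:sigma-overlap}), the parallel reduction lacks the diamond property, so none of the postponement lemmas follows from a naive single-redex residual computation. The delicate cases are those in which a head $\sigma$-step relocates or unblocks a $\beta_v$-redex at the head position --- exactly the phenomenon the $\sigma$-rules are designed to produce --- forcing the second postponement to generate \emph{several} head $\beta_v$-steps together with a possibly nonempty trailing block of head $\sigma$-steps. Verifying that this rearrangement terminates and yields precisely the $\ToHeadBetavReflTrans\,\ToHeadSigmaReflTrans$ pattern, uniformly for open terms, is where the real work lies; it is also why the resulting standardization can only be \emph{weak}, as the abstract anticipates.
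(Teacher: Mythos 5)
Your proposal follows essentially the same route as the paper: a Takahashi-style parallel reduction $\Rightarrow$ sandwiched between $\to_\vg$ and $\to_\vg^*$, an internal parallel reduction $\Inter$, a key one-step factorization lemma ($M \Rightarrow M'$ implies $M \ToHeadBetavReflTrans L \ToHeadSigmaReflTrans N \Inter M'$, the paper's Lemma~\ref{lemma:main} on strong parallel reduction), the postponement of $\Inter$ past head steps (Lemma~\ref{lemma:postpone} and Corollary~\ref{cor:postponeinternal}) and of head $\sigma$- past head $\beta_v$-steps (Lemma~\ref{lemma:commutation}), assembled by induction on the number of parallel steps. The approach and the identified difficulties (overlapping redexes, failure of the diamond property, $\sigma$-steps creating head $\beta_v$-redexes) match the paper's proof.
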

%
%
%
\begin{figure}
\scalebox{0.85}{
  \framebox{
  $$\xymatrix@C=+0cm@R=+0.5cm@M=3mm@L=.8mm{ 
  & \hspace{-5mm} N=(\lambda y.y')(\Delta (xI))I\ar@{->}@(dl,u)[dl]_{h}^>{\sigma_1} \ar@{->}@(dr,u)[dr]^h_>{\sigma_3} \hspace{-5mm}& \\
  { N}_0= (\lambda y.y'I)(\Delta (xI)) \ar@{->}[d]_{h}^>{\sigma_3} & & (\lambda z.(\lambda y.y')(zz))(xI)I={ N}_1  \ar@{->}^h[d]_>{\sigma_1}  \\
  { N}'_0=(\lambda z.(\lambda y.y'\!I)(zz))(xI) \;& & (\lambda z.(\lambda y.y')(zz)I)(xI)={ N}'_1\ar@{.>}[ll]^>{\sigma_1} }$$}
}
\caption{Overlapping of 
(head) $\sigma$-redexes
.}
\label{fig:sigma-overlap}
\end{figure}
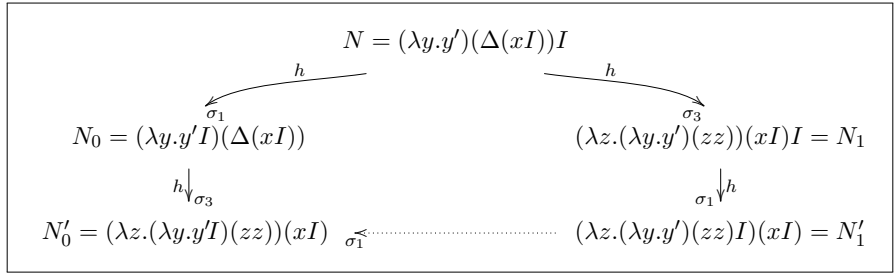

%

Sequentialization (Theorem~\ref{thm:trifurcate}) imposes no order on head $\sigma$-reduction steps, in accor\-dance with the notion of head $\sigma$-reduction (Definition~\ref{def:headsigma}) which puts together head $\sigma_1$/$\sigma_3$-reduction steps.
So, a natural question arises: is it possible to sequentialize them?
More precisely, we wonder if it is possible to anticipate \textit{a priori}  all the head $\sigma_1$- or all the head $\sigma_3$-reduction steps.
The answer is negative, as proved by the next two counterexamples. 

\begin{itemize}
  \item $M = x((\lambda y.z')(zI))\Delta \ToHeadSigmaThree (\lambda y.xz')(zI) \Delta \allowbreak\ToHeadSigmaOne  (\lambda y.xz' \Delta)(zI)= N$,
 but there exists no $L$ such that $M \ToHeadSigmaOneReflTrans L \ToHeadSigmaThreeReflTrans N$.
In fact, $M$ contains only a head $\sigma_3$-redex and $(\lambda y.xz') (zI) \Delta$ 
  has only a head $\sigma_1$-redex, created by firing the head $\sigma_3$-redex in $M$.
  \item $M = x((\lambda y.z')(zI)\Delta) \ToHeadSigmaOne x((\lambda y.z'\Delta)(zI)) \ToHeadSigmaThree (\lambda y.x(z' \Delta))(zI)= N$,
but there is no $L$ such that $M \ToHeadSigmaThreeReflTrans L \ToHeadSigmaOneReflTrans N$. In fact, $M$ contains only a head $\sigma_1$-redex 
and $x((\lambda y.z'\Delta) (zI))$ 
  has only a head $\sigma_3$-redex, created by firing the head $\sigma_1$-redex in $M$. 
\end{itemize}
The impossibility of 
prioritizing a kind of head $\sigma$-reduction over the other is due to the fact that a head $\sigma_1$-reduction step can create a new head $\sigma_3$-redex, and vice-versa.
Thus, sequentialization (and then standardization) does not force a total order on head $\sigma$-redexes.
This 
is not a serious issue, since head $\sigma$-reduction is strongly normalizing (by Proposition~\ref{prop:general-properties}, as $\ToHeadSigma \,\subseteq\, \to_\sigma$) and hence the order in which head $\sigma$-reduction steps are performed is irrelevant.
Moreover, following Remark \ref{rmk:alternative-sigma}, it seems natural to treat head $\sigma_1$- and head $\sigma_3$-reductions as a same reduction also because the two axiom schemes $\sigma_1$ and $\sigma_3$ in the definition of head $\sigma$-reduction (Definition~\ref{def:headsigma}) can be equivalently replaced by the unique axiom scheme
\begin{center}
  {\small
    \AxiomC{}
    \RightLabel{\scriptsize $\sigma$}
    \UnaryInfC{$\ctxL{(\lambda x.M)N} M_1\dots M_m \ToHeadSigma (\lambda x.\ctxL{M})N M_1\dots M_m$}
    \DisplayProof
  }
\end{center}
where $\mathtt{E}$ is a context of the form $\Chole{\cdot}L$ or $V\Chole{\cdot}$ (for any $L \in \Lambda$ and $V \in \Lambda_v$) such that $x \notin \Fv{\mathtt{E}}$.

\medskip
Sequentialization (Theorem~\ref{thm:trifurcate}) says that any $\vg$-reduction sequence from a term $M$ to a term $M'$ can be rearranged into an initial head $\beta_v$-reduction sequence (whose steps 
reduce, in a deterministic way, the unique leftmost-outermost $\beta_v$-redex not under the scope of a $\lambda$) from $M$ to some term  $L$, followed by a head $\sigma$-reduction sequence (whose steps 
reduce, non-deterministically, one of the leftmost-outermost $\sigma$-redexes not in the scope of a $\lambda$) from $L$ to some term $N\!$, followed by an internal $\vg$-reduction sequence from $N$ to $M'\!$.
For this internal $\vg$-reduction sequence, the same kind of decomposition \mbox{can be iterated 
on the subterms of $N$.}

\subsection{Proof of the Sequentialization Theorem}\label{subsect:proof}

In this subsection we present a detailed proof, with all auxiliary lemmas, of Theorem \ref{thm:trifurcate}.
First, we define parallel reduction.

\begin{defi}[Parallel reduction]\label{def:parallel}
  The \emph{parallel reduction} $\Rightarrow \, \subseteq \Lambda \times \Lambda$ is defined inductively by the following rules (where $m \in \Nat$, and $x \notin \Fv{L}$ in the rule $\sigma_1$, $x \notin \Fv{V}$ in the rule~$\sigma_3$):

  \medskip
  \begin{center}
  {\small
    \def\ScoreOverhang{-1pt}
    \def\labelSpacing{1pt}
    \insertBetweenHyps{\hskip 10pt}
    \AxiomC{$V \Rightarrow V'$ \ }
    \AxiomC{$M_i \Rightarrow M_i' \quad (
    \text{for all} \ 0 \leq i \leq m)$}
    \RightLabel{\scriptsize $\beta_v$}
    \BinaryInfC{$(\lambda x.M_0)V\! M_1\dots M_m \!\Rightarrow\! M_0'\sub{V'\!\!}{x} M_1' \dots M_m'$}
    \DisplayProof
    \def\ScoreOverhang{-1pt}
    \def\labelSpacing{1pt}
    \insertBetweenHyps{\hskip 10pt}
    \AxiomC{$N \Rightarrow N'$}
    \AxiomC{$L \Rightarrow L'$}
    \AxiomC{$M_i \Rightarrow M_i' \quad\!\! (
    \text{for all} \ 0 \leq i \leq m)$}
    \RightLabel{\scriptsize $\sigma_1$}
    \TrinaryInfC{$(\lambda x.M_0)N\!L M_1\dots M_m \!\Rightarrow\! (\lambda x.M_0'L')N'\! M_1' \dots M_m'$}
    \DisplayProof

    \def\ScoreOverhang{2pt}
    \def\labelSpacing{2pt}
    \vspace{0.6\baselineskip}
    \AxiomC{$V \Rightarrow V'$}    
    \AxiomC{$N \Rightarrow N'$}
    \AxiomC{$L \Rightarrow L'$}
    \AxiomC{$M_i \Rightarrow M_i' \quad (
    \text{for all} \ 1 \leq i \leq m)$}
    \RightLabel{\scriptsize $\sigma_3$}
    \QuaternaryInfC{$V((\lambda x.L)N) M_1\dots M_m \Rightarrow (\lambda x.V'L')N' M_1' \dots M_m'$}
    \DisplayProof

    \vspace{0.6\baselineskip}
    \AxiomC{$M_i \Rightarrow M_i' \quad (
    \text{for all} \ 0 \leq i \leq m)$}
    \RightLabel{\scriptsize $\lambda$}
    \UnaryInfC{$(\lambda x . M_0) M_1 \dots M_m \Rightarrow (\lambda x . M_0') M_1' \dots M_m'$}
    \DisplayProof
    \qquad\qquad
    \AxiomC{$M_i \Rightarrow M_i' \quad (
    \text{for all} \ 1 \leq i \leq m)$}
    \RightLabel{\scriptsize $\mathit{var}$}
    \UnaryInfC{$x \, M_1 \dots M_m \Rightarrow  x \, M_1' \dots M_m'$}
    \DisplayProof.
  }
  \end{center} 
\end{defi}


The rule $\mathit{var}$, in Definition~\ref{def:parallel},  has no premises when $m = 0$: this is the base case of the inductive definition of $\Rightarrow$.
The rules $\sigma_1$ and $\sigma_3$ have exactly three premises when $m = 0$.
Intuitively, $M \Rightarrow M'$ means that $M'$ is obtained from $M$ by reducing a number of $\beta_v$-, $\sigma_1$- and $\sigma_3$-redexes (existing in $M$) simultaneously.

\begin{defi}[Internal parallel reduction, strong parallel reduction]\label{def:internal}
  The \emph{internal parallel reduction} $\Inter \, \subseteq \Lambda \times \Lambda$ is defined inductively by the following rules ($m \in \Nat$ in the rule $\mathit{right}$):
 \begin{center}
  {\small
    \def\ScoreOverhang{0pt}
    \AxiomC{$N \Rightarrow N'$}
    \RightLabel{\scriptsize $\lambda$}
    \UnaryInfC{$\lambda x.N \Inter \lambda x.N'$}
    \DisplayProof
    \quad
    \AxiomC{}
    \RightLabel{\scriptsize $\mathit{var}$}
    \UnaryInfC{$x \Inter  x$}
    \DisplayProof
    \quad
    \def\ScoreOverhang{0pt}
    \insertBetweenHyps{\hskip 9pt}
    \AxiomC{$V \Rightarrow V'$}
    \AxiomC{$N \Inter N'$}
    \AxiomC{$M_i \Rightarrow M_i' \quad (
	\text{for all} \ 1 \leq i \leq m)$}
    \RightLabel{\scriptsize $\mathit{right}$}
    \TrinaryInfC{$VN M_1 \dots M_m \Inter V'N' M_1' \dots M_m'$}
    \DisplayProof.
  }
  \end{center}\medskip

  \noindent The \emph{strong parallel reduction} $\Rrightarrow \, \subseteq \Lambda \times \Lambda$ is defined by: $M \Rrightarrow N$ iff $M \Rightarrow N$ and there exist $M', M'' \in \Lambda$ such that $M \ToHeadBetavReflTrans M' \ToHeadSigmaReflTrans M'' \Inter N$.
\end{defi}

Notice that the rule $\mathit{right}$ in Definition \ref{def:internal} has exactly two premises when $m = 0$.

\begin{lem}[Reflexivity]\label{rmk:reflexive}
  The relations $\Rightarrow$, $\Rrightarrow$ and $\Inter$ are reflexive. 
\end{lem}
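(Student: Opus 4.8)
The plan is to prove the three reflexivity statements in the order $\Rightarrow$, $\Inter$, $\Rrightarrow$, since each relation's reflexivity is used to establish the next. I would first show that $\Rightarrow$ is reflexive, i.e.\ $M \Rightarrow M$ for every $M \in \Lambda$, by induction on the structure of $M$. By Remark~\ref{rem:shape}, $M$ can be written uniquely as $V M_1 \dots M_m$ with $V \in \Lambda_v$ and $m \in \Nat$, where the head value $V$ (and, if $V$ is an abstraction, its body) and the arguments $M_1, \dots, M_m$ are all proper subterms. If $V = x$ is a variable, the induction hypothesis gives $M_i \Rightarrow M_i$ for all $1 \leq i \leq m$, and the rule $\mathit{var}$ yields $x M_1 \dots M_m \Rightarrow x M_1 \dots M_m$. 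If $V = \lambda x.M_0$, the induction hypothesis gives $M_0 \Rightarrow M_0$ and $M_i \Rightarrow M_i$ for all $1 \leq i \leq m$, and the rule $\lambda$ yields $(\lambda x.M_0) M_1 \dots M_m \Rightarrow (\lambda x.M_0) M_1 \dots M_m$. In both cases $M \Rightarrow M$.

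Next I would prove reflexivity of $\Inter$ by the same structural induction on $M = V M_1 \dots M_m$, now freely using the reflexivity of $\Rightarrow$ just established. In the argument-free base cases ($m = 0$): if $V = x$ then $x \Inter x$ holds by the axiom $\mathit{var}$; if $V = \lambda x.N$ then applying the rule $\lambda$ to $N \Rightarrow N$ (reflexivity of $\Rightarrow$) gives $\lambda x.N \Inter \lambda x.N$. If instead $M$ has at least one argument, I write it as $M = V N M_1 \dots M_m$ (head value $V$, first argument $N$, remaining arguments $M_1, \dots, M_m$) and apply the rule $\mathit{right}$, whose premises are discharged by $V \Rightarrow V$ and $M_i \Rightarrow M_i$ (reflexivity of $\Rightarrow$) together with $N \Inter N$, which holds by the induction hypothesis since $N$ is a proper subterm of $M$. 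Note that $V$ here may be either a variable or an abstraction, as both are permitted in $\mathit{right}$. This yields $M \Inter M$.

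Finally, reflexivity of $\Rrightarrow$ is immediate from the previous two. By Definition~\ref{def:internal}, to show $M \Rrightarrow M$ I must exhibit $M \Rightarrow M$ together with terms $M', M''$ such that $M \ToHeadBetavReflTrans M' \ToHeadSigmaReflTrans M'' \Inter M$. I simply take $M' = M'' = M$: the relations $\ToHeadBetavReflTrans$ and $\ToHeadSigmaReflTrans$ are reflexive, being reflexive-transitive closures, so $M \ToHeadBetavReflTrans M \ToHeadSigmaReflTrans M$ holds vacuously (zero steps), and $M \Inter M$ holds by reflexivity of $\Inter$; combined with $M \Rightarrow M$ this gives $M \Rrightarrow M$.

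I do not expect any real obstacle here: the statement is a routine structural induction, and the only point requiring care is bookkeeping. Specifically, one must organize the induction around the canonical decomposition $V M_1 \dots M_m$ of Remark~\ref{rem:shape}, so that the case split on whether the head value $V$ is a variable or an abstraction lines up with the axioms/rules $\mathit{var}$ and $\lambda$; and for $\Inter$ one must keep the argument-free base cases (handled by $\mathit{var}$ and $\lambda$) separate from the case of at least one argument (handled by $\mathit{right}$). The interdependence of the three statements simply dictates the order of proof and causes no difficulty.
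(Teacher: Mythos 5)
Your proposal is correct and follows essentially the same route as the paper: structural induction on the canonical decomposition of Remark~\ref{rem:shape} for $\Rightarrow$ and $\Inter$ (using the rules $\mathit{var}$, $\lambda$, and $\mathit{right}$ exactly as the paper does), with reflexivity of $\Rrightarrow$ then immediate from the definition. The ordering of the three claims and the handling of the base and inductive cases match the paper's own proof.
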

\proof
The reflexivity of $\Rrightarrow$ follows immediately from the reflexivity of $\Rightarrow$ and $\Inter$. 
  The proofs of reflexivity of $\Rightarrow$ and $\Inter$ are both by structural induction on a term: 
  in the case of $\Rightarrow$, recall that any term is of the form $(\lambda x.N) M_1 \dots M_m$ or $x \, M_1 \dots M_m$ for some $m \in \Nat$ (Remark~\ref{rem:shape}), and then apply the rule $\lambda$ or $\mathit{var}$ respectively, together with the inductive hypothesis; 
  in the case of $\Inter$, recall that every term is of the form $\lambda x.M$ or $x$ or $VN M_1 \dots M_m$ for some $m \in \Nat$, and then apply the rule $\lambda$ (together with the reflexivity of $\Rightarrow$) or $\mathit{var}$ or $\mathit{right}$ (together with the reflexivity of $\Rightarrow$ and the inductive hypothesis) respectively.
\qed

We have $\Inter \, \subsetneq \, \Rrightarrow \, \subseteq \, \Rightarrow$ (first, prove that $\Inter \, \subseteq \, \Rightarrow$ by induction on the derivation of $N \Inter N'\!$, the other inclusions follow from the definition of $\Rrightarrow$; note that $II \Rrightarrow I$ but $II \not\Inter I$) and, by reflexivity of $\Rightarrow$ (Lemma~\ref{rmk:reflexive}), $\ToHeadBetav \, \subsetneq \, \Rightarrow$ and $\ToHeadSigma \, \subsetneq \, \Rightarrow$.
Observe that $\Delta\Delta \ \mathsf{R} \ \Delta\Delta$ for any $\mathsf{R} \in \{\mapsto_{\beta_v}, \ToHeadBetav, \to_{\beta_v}, \Rightarrow, \Inter, \Rrightarrow\}$, even if for different reasons: for example, $\Delta \Delta \Inter \Delta \Delta$ by reflexivity of $\Inter$ (Lemma~\ref{rmk:reflexive}), whereas $\Delta\Delta \ToHeadBetav \Delta\Delta$ by reducing the 
only $\beta_v$-redex.


Some useful properties relating values and reductions follow.
Note that Lemmas~\ref{rmk:abs}.\ref{rmk:abs.novalue}-\ref{rmk:abs.notovalue} imply that \emph{all values are head $\vg$-normal}; the converse 
fails, as $xI$ is head $\vg$-normal but not a value.
\cite{guerrieri15wpte} proves that all \emph{closed} head $\vg$-normal forms are values (in fact, abstractions).

\begin{lem}[Values vs.\ reductions]\label{rmk:abs}\hfill
  \begin{enumerate}
    \item\label{rmk:abs.novalue} The head $\beta_v$-reduction $\ToHeadBetav$ does not reduce a value (i.e.~values are head $\beta_v$-normal).
    \item\label{rmk:abs.notovalue} The head $\sigma$-reduction $\ToHeadSigma$ does neither reduce a value nor reduce to a value.

    \item\label{rmk:abs.anti} Variables and abstractions are preserved 
    by $\overset{\scriptscriptstyle\mathit{int}}{\Leftarrow}$
    , more precisely: if $M \Inter x$ (resp.\ $M \Inter \lambda x.N'$) then $M = x$ (resp.~$M = \lambda x.N$ for some $N \in \Lambda$ such that $N \Rightarrow N'$).

    \item\label{rmk:abs.parallel} If $M \Rightarrow M'$ then $\lambda x.M \ \mathsf{R} \ \lambda x.M'$ for any $\mathsf{R} \in \{\Rightarrow, \Inter, \Rrightarrow\}$. 

    \item\label{rmk:abs.value} For any $V, V' \in \Lambda_v$, one has $V \Inter V'$ iff $V \Rightarrow V'$ iff $V \Rrightarrow V'$. 
  \end{enumerate}
\end{lem}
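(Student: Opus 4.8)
The plan is to dispatch each item by inspecting the inductive rules that generate the relations involved. For items~\ref{rmk:abs.novalue} and~\ref{rmk:abs.notovalue} I would read off the source (and, for the $\sigma$ case, also the target) of every generating rule. In both Definition~\ref{def:headbetav} and Definition~\ref{def:headsigma} the left-hand side of each rule is an application (it has the shape $(\lambda x.M)V M_1\cdots M_m$, $(\lambda x.M)NL M_1\cdots M_m$, $V((\lambda x.L)N)M_1\cdots M_m$, or $VN M_1\cdots M_m$); since a value is a variable or an abstraction, no value can be the source of a $\ToHeadBetav$- or $\ToHeadSigma$-step, which gives the ``does not reduce a value'' halves. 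For the remaining half of item~\ref{rmk:abs.notovalue}, I would note that the right-hand side of every $\ToHeadSigma$-rule is again an application, hence never a value.

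For item~\ref{rmk:abs.anti} I would argue by cases on the last rule of the derivation of $M \Inter N$ (Definition~\ref{def:internal}). The rule $\mathit{var}$ produces exactly the variable $x$, the rule $\lambda$ produces exactly an abstraction $\lambda x.N'$ together with a premise $N \Rightarrow N'$, and the rule $\mathit{right}$ produces an application $V'N'M_1'\cdots M_m'$. Thus if $N = x$ the derivation must end with $\mathit{var}$, forcing $M = x$; and if $N = \lambda x.N'$ it must end with $\lambda$, forcing $M = \lambda x.N$ with $N \Rightarrow N'$.

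Item~\ref{rmk:abs.parallel} is immediate for $\Rightarrow$ and $\Inter$: from $M \Rightarrow M'$ the $\lambda$-rule of $\Rightarrow$ (taken with $m = 0$) yields $\lambda x.M \Rightarrow \lambda x.M'$, and the $\lambda$-rule of $\Inter$ yields $\lambda x.M \Inter \lambda x.M'$. For $\Rrightarrow$ I would unfold its definition, using the empty reductions $\lambda x.M \ToHeadBetavReflTrans \lambda x.M \ToHeadSigmaReflTrans \lambda x.M$ as the required witnesses and closing with the just-obtained $\lambda x.M \Inter \lambda x.M'$.

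Finally, for item~\ref{rmk:abs.value} I would use the inclusions $\Inter \subseteq \Rrightarrow \subseteq \Rightarrow$ recalled before the statement to obtain two of the three implications, and then close the cycle by proving that $V \Rightarrow V'$ implies $V \Inter V'$ for $V, V' \in \Lambda_v$. Here I distinguish the two shapes of a value: if $V = x$, the only $\Rightarrow$-rule with a variable as source is $\mathit{var}$ with $m = 0$, so $V' = x$ and $x \Inter x$; if $V = \lambda x.N$, the only applicable rule is $\lambda$ with $m = 0$, giving $V' = \lambda x.N'$ with $N \Rightarrow N'$, whence $V \Inter V'$ by item~\ref{rmk:abs.parallel}. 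I expect no genuine obstacle; the only point needing care is the treatment of $\Rrightarrow$ in items~\ref{rmk:abs.parallel} and~\ref{rmk:abs.value}, where one must observe that on values and abstractions the head-reduction prefixes in the definition of $\Rrightarrow$ are forced to be empty---precisely by items~\ref{rmk:abs.novalue} and~\ref{rmk:abs.notovalue}---so that strong parallel reduction collapses onto internal parallel reduction.
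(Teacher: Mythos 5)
Your proposal is correct and follows essentially the same route as the paper: rule inspection for items (1)--(3), the $\lambda$-rules plus reflexivity of the head reductions for item (4), and a case analysis on whether the value is a variable or an abstraction for item (5). The only cosmetic difference is that you organize item (5) as a cycle of implications via $\Inter\,\subseteq\,\Rrightarrow\,\subseteq\,\Rightarrow$ rather than as two separate biconditionals, which changes nothing of substance.
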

\proof\hfill
  \begin{enumerate}
  \item For every $M \in \Lambda$ and every $V \in \Lambda_v$, we have $V \not\ToHeadBetav M$, 
 because the head $\beta_v$-reduction does not reduce under $\lambda$'s.
  \item  For any $N \in \Lambda$ and $V \in \Lambda_v$, $V \not\ToHeadSigma N$ and $N \not\ToHeadSigma V$ since
  in the conclusion of any rule of Definition~\ref{def:headsigma} the terms on the right and on the left of $\ToHeadSigma$ are applications.
\item By simple inspection of the 
rules of $\,\Inter$ (Definition~\ref{def:internal}), if $M \Inter x$ (resp.~$M \Inter \lambda x.N'$) then  the last rule in the derivation is necessarily \textit{var} (resp.~$\lambda$).
\item For $\mathsf{R} \in \{\Rightarrow, \Inter\}$ we apply the rule $\lambda$ to conclude that $\lambda x.M \ \mathsf{R} \ \lambda x.M'$, therefore $\lambda x.M \Rrightarrow \lambda x.M'$ according to the definition of $\Rrightarrow$, since $\ToHeadBetavReflTrans$ and $\ToHeadSigmaReflTrans$ are reflexive.
\item First we show that $V \!\Inter V'$ iff $V \!\Rightarrow V'\!$.
  The left-to-right direction \mbox{holds since $\Inter \, \subseteq \, \Rightarrow$.}
  Conversely, assume $V \!\Rightarrow\! V'$: if $V$ is a variable then $V = V'$ and hence $V \Inter V'$ by applying the rule $\mathit{var}$ for $\Inter$; otherwise $V = \lambda x.N$ for some $N \in \Lambda$, and then necessarily $V' = \lambda x.N'$ with $N \Rightarrow N'$, so $V \Inter V'$ by applying the rule $\lambda$ for $\Inter$.
  
  Now we prove that $V \Rightarrow V'$ iff $V \Rrightarrow V'$. The right-to-left direction follows immediately from the definition of $\Rrightarrow$ (Definition~\ref{def:internal}). 
  Conversely, if $V \Rightarrow V'$ then we have just shown that $V \Inter V'$, so $V \Rrightarrow V'$ since $\ToHeadBetavReflTrans$ and $\ToHeadSigmaReflTrans$ are reflexive.
\qed
  \end{enumerate}



\noindent We collect some basic closure properties and relations that hold for reductions.

\begin{lem}[Properties of parallel, head and internal reductions]\hfill
\label{rmk:preliminary}
  \begin{enumerate}
    \item\label{rmk:preliminary.appparallel} If $M \Rightarrow M'$ and $N \Rightarrow N'$ then $MN \Rightarrow M'N'$.

    \item\label{rmk:preliminary.apphead} If $\mathsf{R} \in \{\ToHeadBetav, \ToHeadSigma \}$ 
                 and $M \ \mathsf{R} \ M'$, then $MN \ \mathsf{R} \ M'N$ for any $N \in \Lambda$.

    \item\label{rmk:preliminary.appinternal} If $M \Inter M'$ and $N \Rightarrow N'$ where     $M' \notin \Lambda_v$, then $MN \Inter \ M'N'$.

    \item\label{rmk:preliminary.preConfluence}
    $\to_\vg \, \subseteq \, \Rightarrow \, \subseteq \to_\vg^*$ and hence  $\Rightarrow^* \, = \, \to_{\mathsf{v}}^*$. 
    
    \item\label{rmk:preliminary.interinclusion} $\ToInt \, \subseteq \, \Inter \, \subseteq \, \ToIntTrans$ and hence $\InterTrans =\, \ToIntTrans$.
    
    \item\label{rmk:preliminary.confluence} $\Rightarrow$ is confluent. 
    
    \item\label{rmk:preliminary.interExpansion} If $M \ToIntTrans x$ (resp.\ $M \ToIntTrans \lambda x.N'$) then $M = x$ (resp.\ $M = \lambda x.N$ with $N \to_\vg^* N'$).

    \item\label{rmk:preliminary.headsubstitution} For any $\mathsf{R} \in \{\ToHeadBetav, \ToHeadSigma\}$, if $M \ \mathsf{R} \ M'$ then $M\sub{V}{x} \ \mathsf{R} \ M'\sub{V}{x}$ for any $V \in \Lambda_v$. 
  \end{enumerate}
\end{lem}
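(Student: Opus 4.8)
The plan is to prove the eight items in an order respecting their dependencies: the structural closure properties first, then the sandwich comparing $\Rightarrow$ with $\to_\vg^*$, and finally the items on internal reduction that rest on it. Items~(\ref{rmk:preliminary.appparallel}), (\ref{rmk:preliminary.apphead}) and (\ref{rmk:preliminary.appinternal}) all exploit the fact that every rule defining $\Rightarrow$, $\ToHeadBetav$, $\ToHeadSigma$ and $\Inter$ produces a term of the shape ``head applied to $M_1 \dots M_m$'' for a parameter $m \in \Nat$ (cf.\ Remark~\ref{rem:shape}). I would argue by a case analysis on the last rule of the derivation of $M \Rightarrow M'$ (resp.\ $M \ \mathsf{R} \ M'$, $M \Inter M'$): appending an extra argument $N$ simply increments $m$ to $m+1$, the new argument being handled by $N \Rightarrow N'$ (using reflexivity, Lemma~\ref{rmk:reflexive}, when $N$ is unchanged). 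For item~(\ref{rmk:preliminary.appinternal}) the hypothesis $M' \notin \Lambda_v$ is precisely what rules out the $\lambda$- and $\mathit{var}$-cases of $\Inter$, leaving only the $\mathit{right}$-rule, to which the extra argument is appended. Item~(\ref{rmk:preliminary.headsubstitution}) is an induction on the derivation of $M \ \mathsf{R} \ M'$: in the axiom cases the substituted term is still a redex of the same kind, because values are closed under substitution and the side conditions $x \notin \Fv{L}$, $x \notin \Fv{V}$ survive substitution (after $\alpha$-renaming the bound variable away from the substituted value), while the equality of contracta reduces to the standard substitution lemma; the $\mathit{right}$-cases are immediate from the induction hypothesis.

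For item~(\ref{rmk:preliminary.preConfluence}) I would prove the two inclusions separately. The inclusion $\to_\vg \,\subseteq\, \Rightarrow$ is an induction on the context witnessing the contextual closure: the base case $\mapsto_\vg \,\subseteq\, \Rightarrow$ holds because each axiom $\beta_v$, $\sigma_1$, $\sigma_3$ is the instance at $m = 0$ of the homonymous parallel rule with all subterms related to themselves by reflexivity (Lemma~\ref{rmk:reflexive}); the inductive cases use item~(\ref{rmk:preliminary.appparallel}) for application contexts and Lemma~\ref{rmk:abs}.\ref{rmk:abs.parallel} for the abstraction context. The converse $\Rightarrow \,\subseteq\, \to_\vg^*$ is an induction on the derivation of $M \Rightarrow M'$: in each rule I first reduce the subterms to their images by the induction hypothesis and then fire the displayed redex; in the $\beta_v$- and $\sigma_3$-cases the contracted value stays a value after reduction (Remark~\ref{rmk:fromvalue}), and the side conditions persist since reduction creates no free variables, so the final step is legitimate. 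Taking reflexive-transitive closures yields $\Rightarrow^* = \to_\vg^*$. Item~(\ref{rmk:preliminary.confluence}) is then immediate: since $\Rightarrow^* = \to_\vg^*$ and $\to_\vg$ is confluent (Proposition~\ref{prop:general-properties}), the relation $\Rightarrow$ is confluent.

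Item~(\ref{rmk:preliminary.interinclusion}) is the delicate one, and I expect it to be the main obstacle, since it requires aligning the external definition $\ToInt \,=\, \to_\vg \smallsetminus \ToHead$ with the rule-based definition of $\Inter$. For $\ToInt \,\subseteq\, \Inter$ I would induct on the structure of $M$ (Remark~\ref{rem:shape}). If $M = \lambda x.M_0$, any step is forced under the $\lambda$, hence internal, and the $\lambda$-rule of $\Inter$ applies via item~(\ref{rmk:preliminary.preConfluence}); the case $M = x$ is vacuous. If $M = VN M_1 \dots M_m$, the contracted redex lies in the head value $V$, in the first argument $N$, or in some $M_i$; the key point is that internality forbids firing the top redex and forbids a \emph{head} reduction of $N$, so a step touching $N$ must itself be internal (whence the induction hypothesis gives $N \Inter N'$), whereas steps inside $V$ or inside an $M_i$ are captured by $V \Rightarrow V'$ and $M_i \Rightarrow M_i'$; in every case the $\mathit{right}$-rule of $\Inter$ yields $M \Inter M'$. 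For $\Inter \,\subseteq\, \ToIntTrans$ I would induct on the $\Inter$-derivation, converting each $\Rightarrow$-premise to $\to_\vg^*$ by item~(\ref{rmk:preliminary.preConfluence}) and noting that a $\vg$-step performed under a $\lambda$, inside the head value, or inside a non-head argument is internal, so these reductions lift to $\ToIntTrans$ in context; the $\mathit{right}$-premise $N \Inter N'$ is handled by the induction hypothesis. Reflexive-transitive closures then give $\InterTrans = \ToIntTrans$.

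Finally, for item~(\ref{rmk:preliminary.interExpansion}) I would first rewrite $\ToIntTrans$ as $\InterTrans$ using item~(\ref{rmk:preliminary.interinclusion}), and then argue by induction on the length of the $\Inter$-chain from $M$, read backwards. Lemma~\ref{rmk:abs}.\ref{rmk:abs.anti} guarantees that the only $\Inter$-predecessor of a variable $x$ is $x$ itself, and that every $\Inter$-predecessor of an abstraction $\lambda x.N'$ is some $\lambda x.N''$ with $N'' \Rightarrow N'$; propagating this along the chain shows $M = x$ in the first case, and $M = \lambda x.N$ in the second, with $N$ reaching $N'$ through a chain of $\Rightarrow$-steps that collapses to $N \to_\vg^* N'$ by item~(\ref{rmk:preliminary.preConfluence}).
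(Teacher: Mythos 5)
Your proposal is correct and follows essentially the same route as the paper's proof: items (1)--(3) by appending the extra argument as a new rightmost premise/trailing argument, item (4) by the two separate inclusions (axioms at $m=0$ plus reflexivity, and induction on the $\Rightarrow$-derivation), item (5) by induction on the shape $VN M_1\dots M_m$ with the same three-way case split on where the contracted redex sits, item (6) from $\Rightarrow^* = \to_\vg^*$ and Proposition~\ref{prop:general-properties}, item (7) by transporting Lemma~\ref{rmk:abs}.\ref{rmk:abs.anti} along the chain, and item (8) by induction on the head-reduction derivation. No gaps.
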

\proof\hfill
\begin{enumerate}
\item 
  Just add the derivation of $N \Rightarrow N'$ as the ``rightmost'' premise of the last rule of the derivation of $M \Rightarrow M'$.
  \item 
  In the 
  conclusion of the derivation of $M \ \mathsf{R} \ M'$, replace $M \ \mathsf{R} \ M'$ with $MN \ \mathsf{R} \ M'N$.

\item 
The last rule in the derivation of $M \Inter M'$ can be neither $\lambda$ nor $\mathit{var}$ because $M' \notin \Lambda_v$,
so it is $\mathit{right}$ and hence we can add 
the derivation of $N \Rightarrow N$ (which exists since $\Rightarrow$ is reflexive, Lemma~\ref{rmk:reflexive}) as its rightmost premise. 
Note that the hypothesis $M' \notin \Lambda_v$ is crucial: for example, $x \Inter x$ and $I\Delta  \Rightarrow \Delta$ but $I\Delta  \not\Inter \Delta$ and thus $x(I\Delta)\not \Inter x\Delta$.
\item 
The proof that $M \to_\vg M'$ implies $M \Rightarrow M'$ is by induction on $M \in \Lambda$, using the reflexivity of $\Rightarrow$ (Lemma~\ref{rmk:reflexive}) and Lemma~\ref{rmk:preliminary}.\ref{rmk:preliminary.appparallel}. 
The proof that $M \Rightarrow M'$ implies $M \to_\vg^* M'$ is by straightforward induction on the derivation of $M \Rightarrow M'$. 
\item 
We prove that $M \ToInt M'$ implies $M \Inter M'$ by induction on $M \in \Lambda$. 
    According to Remark~\ref{rem:shape}, $M = VN_1 \dots N_n$ for some $n \in \Nat$, $V \in \Lambda_v$ and $N_1, \dots, N_n \in \Lambda$. 
    Since $M \to_\vg M'$ and $M \not\ToHead M'$, there are only three cases:
    \begin{itemize}
      \item either $M' = V'N_1 \dots N_n$ with $V \to_\vg V'$, then 
      $V = \lambda x.N$ and $V' = \lambda x.N'$ with $N \to_\vg N'$ by Remark~\ref{rmk:fromvalue}, so $N \Rightarrow N'$ according to Lemma~\ref{rmk:preliminary}.\ref{rmk:preliminary.preConfluence}, and thus $V = \lambda x.N \Inter \lambda x.N' = V'$ by applying the rule $\lambda$ for $\Inter$; if $n = 0$ then $M = \lambda x.N$ and $M' = \lambda x.N'$ and we are done; otherwise $n > 0$ and hence $V \Rightarrow V'$ (Lemma~\ref{rmk:abs}.\ref{rmk:abs.value}), so $M \Inter M'$ by applying the rule $\mathit{right}$ for $\Inter$, since $N_1 \Inter N_1$ and $N_i \Rightarrow N_i$ for any $2 \leq i \leq n$ by reflexivity of $\Inter$ and $\Rightarrow$ (Lemma~\ref{rmk:reflexive});
      \item or $n > 0$ and $M' = VN_1'N_2 \dots N_n$ with $N_1 \ToInt N_1'$, then $N_1 \Inter N_1'$ by induction hypothesis, and $V \Rightarrow V$ and $N_i \Rightarrow N_i$ for any $2 \leq i \leq n$ by reflexivity of $\Rightarrow$ (Lemma~\ref{rmk:reflexive}); hence $M \Inter M'$ by applying the rule $\mathit{right}$ for $\Inter$;   
      \item or $n > 1$ and $M' = VN_1 \dots N_i' \dots N_n$ with $N_i \to_\vg N_i'$ for some $2 \leq i \leq n$, then $N_i \Rightarrow N_i'$ by Lemma~\ref{rmk:preliminary}.\ref{rmk:preliminary.preConfluence}, $V \Rightarrow V$ and $N_j \Rightarrow N_j$ for any $2 \leq j \leq n$ with $j \neq i$ and $N_1 \Inter N_1$ by reflexivity of $\Rightarrow$ and $\Inter$ (Lemma~\ref{rmk:reflexive}); hence $M \Inter M'$ by applying the rule $\mathit{right}$ for $\Inter$.
    \end{itemize}

    \noindent The proof that $M \Inter M'$ implies $M \ToIntTrans M'$ is by straightforward induction on the derivation of $M \Inter M'$, using that $\Rightarrow \, \subseteq \, \to_\vg^*$ (Lemma~\ref{rmk:preliminary}.\ref{rmk:preliminary.preConfluence}). 
\item Since $\Rightarrow^* \, = \, \to_{\mathsf{v}}^*$ according to  Lemma~\ref{rmk:preliminary}.\ref{rmk:preliminary.preConfluence}, Proposition~\ref{prop:general-properties} just says that $\Rightarrow$ is confluent.
Anyway, we remark that $\Rightarrow$ does not enjoy the diamond property, see Section~\ref{sect:conclusions}.
\item Since $\InterTrans \, = \ToIntTrans$ (Lemma~\ref{rmk:preliminary}.\ref{rmk:preliminary.interinclusion}) and $\Rightarrow \,\subseteq \,\to_\vg^*$ (Lemma~\ref{rmk:preliminary}.\ref{rmk:preliminary.preConfluence}), then Lemma~\ref{rmk:abs}.\ref{rmk:abs.anti} can be reformulated substituting $\ToIntTrans$ for $\Inter$,  and $\to_\vg^*$ for $\Rightarrow$.
\item The proof is by induction on the derivation of $M \ \mathsf{R} \ M'$, for any $\mathsf{R} \in \{\ToHeadBetav, \ToHeadSigma \}$.
\qed 
\end{enumerate}

\noindent Parallel reduction is closed under substitution, as stated by the following lemma.

\begin{lem}[Substitution vs. $\Rightarrow$]\label{lemma:parallelsubstitution}
  If $M \!\Rightarrow\! M'$ and $V \!\Rightarrow\! V'$ 
  then $M\sub{V}{x} \!\Rightarrow\! M'\sub{V'\!}{x}$.
\end{lem}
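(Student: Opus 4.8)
The plan is to prove $M\sub{V}{x} \Rightarrow M'\sub{V'}{x}$ by induction on the derivation of $M \Rightarrow M'$, doing case analysis on the last rule applied. Since $V \Rightarrow V'$ and both are values, Lemma~\ref{rmk:abs}.\ref{rmk:abs.value} tells us that $V'$ is also a value, so the substitution $\sub{V'}{x}$ is a genuine value-substitution and the statement is well-typed with respect to the reduction rules (which only substitute values). This is the structural fact that keeps us inside $\Lambda_v$ where needed.

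First I would handle the base case, the rule $\mathit{var}$ applied with $m=0$, i.e.~$y \Rightarrow y$ for a variable $y$: if $y = x$ then $y\sub{V}{x} = V \Rightarrow V' = y\sub{V'}{x}$ by hypothesis, and if $y \neq x$ then both sides are $y$ and we conclude by reflexivity of $\Rightarrow$ (Lemma~\ref{rmk:reflexive}). For the inductive cases I would treat each of the remaining rules ($\beta_v$, $\sigma_1$, $\sigma_3$, $\lambda$, and $\mathit{var}$ with $m>0$) uniformly: the substitution $\sub{V}{x}$ distributes over application and, after $\alpha$-renaming the bound variable of each abstraction away from $x$ and from $\Fv{V} \cup \Fv{V'}$, commutes with $\lambda$-abstraction. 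Hence applying $\sub{V}{x}$ to the conclusion of each rule yields a term of exactly the same shape, to which the corresponding rule can be reapplied using the induction hypotheses on the premises. For instance, in the $\beta_v$ case the premises $V_0 \Rightarrow V_0'$ and $M_i \Rightarrow M_i'$ give, by induction, $V_0\sub{V}{x} \Rightarrow V_0'\sub{V'}{x}$ (still a value by Remark~\ref{rmk:fromvalue}) and $M_i\sub{V}{x} \Rightarrow M_i'\sub{V'}{x}$, so the rule $\beta_v$ derives $(\lambda y.M_0\sub{V}{x})\,V_0\sub{V}{x}\,M_1\sub{V}{x}\dots \Rightarrow M_0'\sub{V'}{x}\sub{V_0'\sub{V'}{x}}{y}\dots$; one then checks this contractum equals $\big(M_0'\sub{V_0'}{y}M_1'\dots\big)\sub{V'}{x}$ by a substitution lemma.

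The main obstacle is precisely that last equational check: verifying that the contractum produced by reapplying the rule after substitution matches the substituted contractum. This reduces to the classical \emph{substitution lemma} $M_0'\sub{U}{y}\sub{V'}{x} = M_0'\sub{V'}{x}\sub{U\sub{V'}{x}}{y}$ (valid because we have $\alpha$-renamed so that $y \notin \Fv{V'}$ and $y \neq x$), which is a standard property of capture-avoiding substitution. An analogous care is needed in the $\sigma_1$ and $\sigma_3$ cases to ensure the side conditions $x \notin \Fv{L}$, resp.~$x \notin \Fv{V}$, are preserved: here I would invoke $\alpha$-conversion to keep the bound variable of the shuffled abstraction distinct from the free variables of $V$ and $V'$, so that the side condition of the rule remains satisfied for the substituted terms. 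Apart from these bookkeeping details, each inductive step is a direct reapplication of the same parallel-reduction rule, so the argument is routine once the substitution lemma and the $\alpha$-renaming conventions are in place.
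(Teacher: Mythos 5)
Your overall strategy is the same as the paper's: induction on the derivation of $M \Rightarrow M'$ with a case analysis on the last rule, $\alpha$-renaming of bound variables away from $x$ and $\Fv{V}$ to preserve the side conditions of $\sigma_1$/$\sigma_3$, and the classical substitution lemma $M_0'\sub{V_0'}{y}\sub{V'\!}{x} = M_0'\sub{V'\!}{x}\sub{V_0'\sub{V'\!}{x}}{y}$ to match the contractum in the $\beta_v$ case. All of that is exactly what the paper does and is fine.

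There is, however, one concrete gap in your ``uniform'' treatment of the inductive cases: the rule $\mathit{var}$ with $m > 0$ and head variable $y = x$. There $M = x\,M_1 \dots M_m$ and $M\sub{V}{x} = V\,M_1\sub{V}{x} \dots M_m\sub{V}{x}$, which is \emph{not} of the same shape as the conclusion of the rule $\mathit{var}$ unless $V$ happens to be a variable: if $V = \lambda z.P$ the substituted term is headed by an abstraction, so the rule $\mathit{var}$ cannot be reapplied, and your claim that ``applying $\sub{V}{x}$ to the conclusion of each rule yields a term of exactly the same shape'' fails precisely here. The paper closes this case by appealing to the applicative closure of parallel reduction (Lemma~\ref{rmk:preliminary}.\ref{rmk:preliminary.appparallel}): from $V \Rightarrow V'$ and the induction hypotheses $M_i\sub{V}{x} \Rightarrow M_i'\sub{V'\!}{x}$ one gets $V\,M_1\sub{V}{x} \dots M_m\sub{V}{x} \Rightarrow V'\,M_1'\sub{V'\!}{x} \dots M_m'\sub{V'\!}{x}$ by $m$ applications of that lemma. (Alternatively one could case-split on $V$ and, when $V = \lambda z.P$, unfold the derivation of $V \Rightarrow V'$ to obtain $V' = \lambda z.P'$ with $P \Rightarrow P'$ and then apply the rule $\lambda$; but this is exactly the content of the auxiliary lemma.) Your proof needs this extra ingredient; once it is added, the rest goes through as you describe.
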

\proof
   By induction on the derivation of $M \Rightarrow M'$.
    Let us consider its last rule $\mathsf{r}$.

  \begin{itemize}
  \item If $\mathsf{r} = \mathit{var}$ then $M = y \, M_1 \dots M_m$ and
    $M' = y \, M_1' \dots M_m'$ with $m \in \Nat$ and
    $M_i \Rightarrow M_i'$ for any $1 \leq i \leq m$. By induction
    hypothesis, $M_i\sub{V\!}{x} \Rightarrow M_i'\sub{V'\!}{x}$ for any
    $1 \leq i \leq m$. If $y \neq x$ then
    $M \sub{V}{x} = y \, M_1 \sub{V\!}{x} \dots M_m\sub{V\!}{x}$ and
    $M' \sub{V'\!}{x} = y \, M_1' \sub{V'\!}{x} \dots M_m'\sub{V'\!}{x}$,
    so $M\sub{V\!}{x} \Rightarrow M'\sub{V'\!}{x}$ by applying the rule
    $\mathit{var}$ for $\Rightarrow$. Otherwise $y = x$ and then
    $M \sub{V\!}{x} = V M_1 \sub{V\!}{x} \dots M_m\sub{V\!}{x}$ and
    $M' \sub{V'\!}{x} = V' M_1' \sub{V'\!}{x} \dots M_m'\sub{V'\!}{x}$,
    hence $M\sub{V\!}{x} \Rightarrow M'\sub{V'\!}{x}$ by
    Lemma~\ref{rmk:preliminary}.\ref{rmk:preliminary.appparallel}.

   \item If $\mathsf{r} = \lambda$ then $M = (\lambda y.M_0) M_1 \dots M_m$
    and $M' = (\lambda y. M_0') M_1' \dots M_m'$ with $m \in \Nat$ and
    $M_i \Rightarrow M_i'$ for all $0 \leq i \leq m$; we can suppose
    without loss of generality that $y \notin \Fv{V} \cup \{x\}$. By
    induction hypothesis, $M_i\sub{V\!}{x} \Rightarrow M_i' \sub{V'\!}{x}$
    for all $0 \leq i \leq m$. 
    By applying the rule $\lambda$ for $\Rightarrow$, 
    $M \sub{V}{x} = (\lambda y.M_0\sub{V}{x}) M_1 \sub{V}{x} \dots
    M_m\sub{V}{x} \Rightarrow (\lambda y.M_0'\sub{V'\!}{x}) M_1'
    \sub{V'\!}{x} \dots M_m'\sub{V'\!}{x} = M' \sub{V'\!}{x}$.

 \item   
    If $\mathsf{r} = \sigma_1$ then
    $M = (\lambda y.M_0)NL M_1 \dots M_m$ and
    $M' = (\lambda y. M_0'L')N' M_1' \dots M_m'$
    with 
    $m \in \Nat$, $L \Rightarrow L'$, $N \Rightarrow N'$ and
    $M_i \Rightarrow M_i'$ for any $0 \leq i \leq m$; we can suppose
    without loss of generality that $y \notin \Fv{V} \cup \{x\}$.  By
    induction hypothesis, $L\sub{V}{x} \Rightarrow L' \sub{V'}{x}$,
    $N\sub{V}{x} \Rightarrow N' \sub{V'}{x}$ and
    $M_i\sub{V}{x} \Rightarrow M_i' \sub{V'}{x}$ for any
    $0 \leq i \leq m$.  Hence
    $M\sub{V\!}{x} \Rightarrow M'\sub{V'\!}{x}$ by applying 
    rule~$\sigma_1$, since
    $M \sub{V\!}{x} = (\lambda y.M_0\sub{V\!}{x}) N\sub{V\!}{x}
    L\sub{V\!}{x} M_1\sub{V\!}{x} \dots M_m\sub{V\!}{x}$
    and
    $M'\sub{V'\!}{x} = (\lambda
    y.M_0'\sub{V'\!}{x}L'\sub{V'\!}{x})N'\sub{V'\!}{x}
    M_1'\sub{V'\!}{x} \dots M_m'\sub{V'\!}{x}$.

   \item   If $\mathsf{r} = \sigma_3$ then
    $M = U ((\lambda y.L)N) M_1 \dots M_m$ and
    $M' = (\lambda y. U'L')N' M_1' \dots M_m'$
    with 
    $m \in \Nat$ and $U,U' \in \Lambda_v$, $U \Rightarrow U'$,
    $L \Rightarrow L'$, $N \Rightarrow N'$ and $M_i \Rightarrow M_i'$
    for any $1 \leq i \leq m$; without loss of
    generality we can assume
    $y \notin\allowbreak \Fv{V} \allowbreak \cup \{x\}$.  By induction
    hypothesis, $U\sub{V}{x} \Rightarrow U' \sub{V'\!}{x}$
    , $L\sub{V}{x} \Rightarrow L' \sub{V'\!}{x}$,
    $N\sub{V}{x} \Rightarrow N' \sub{V'\!}{x}$ and
    $M_i\sub{V}{x} \Rightarrow M_i' \sub{V'\!}{x}$ for any
    $1 \leq i \leq m$.  
    So, $M\sub{V\!}{x} \Rightarrow M'\sub{V'\!}{x}$ by applying the
    rule~$\sigma_3$, since $M\sub{V}{x} = \allowbreak U\sub{V}{x} ((\lambda y.L\sub{V}{x})
    N\sub{V}{x}) M_1 \sub{V}{x} \dots M_m\sub{V}{x}$ and 
    $M' \sub{V'\!}{x} = (\lambda y.U'\sub{V'\!}{x} L'\sub{V'\!}{x})
    N'\sub{V'\!}{x} M_1' \sub{V'\!}{x} \dots M_m'\sub{V'\!}{x}$ with $U\sub{V}{x}, U' \sub{V'\!}{x} \in \Lambda_v$.

  \item    Finally, if $\mathsf{r} = \beta_v$, then
    $M = (\lambda y. M_0)V_0 M_1\dots M_m$ and
    $M' = M_0'\sub{V_0'}{y} M_1' \dots M_m'$ with $m \in \Nat$,
    $V_0 \Rightarrow V_0'$ and $M_i \Rightarrow M_i'$ for any
    $0 \leq i \leq m$; we can suppose without loss of generality that
    $y \notin \Fv{V} \cup \{x\}$.  By induction hypothesis,
    $V_0\sub{V\!}{x} \Rightarrow V_0'\sub{V'\!}{x}$ and
    $M_i\sub{V\!}{x} \Rightarrow M_i' \sub{V'\!}{x}$ for any
    $0 \leq i \leq m$.  So,
    $M\sub{V\!}{x} \Rightarrow M'\sub{V'\!}{x}$ by apply\-ing the rule
    $\beta_v$, since
    $M \sub{V}{x} = (\lambda y.M_0\sub{V\!}{x}) V_0\sub{V\!}{x}
    M_1 \sub{V\!}{x} \dots M_m\sub{V\!}{x}$
    and
    \begin{align*}
      M' \sub{V'\!}{x} &= M_0'\sub{V_0'}{y}\sub{V'\!}{x} M_1'
      \sub{V'\!}{x} \dots M_m'\sub{V'\!}{x} \\
      &= M_0'\sub{V'\!}{x}\sub{V_0'\sub{V'\!}{x}}{y} M_1'
      \sub{V'\!}{x} \dots M_m'\sub{V'\!}{x}.
  \end{align*}
  
  \vspace{-\baselineskip}
  \qed
\end{itemize}

\noindent The following lemma will play a crucial role in the proof of Lemmas~\ref{lemma:strongparallelsubstitution}-\ref{lemma:main} and 
shows that head $\sigma$-reduction $\ToHeadSigma$ can be postponed to head $\beta_v$-reduction $\ToHeadBetav$.

\begin{lem}[Commutation of head reductions]\label{lemma:commutation}\hfill
  \begin{enumerate}
    \item\label{lemma:commutation.onestep} If $M \ToHeadSigma L \ToHeadBetav N$ then there exists $L' \in \Lambda$ such that $M \ToHeadBetav L' \ToHeadSigmaRefl N$. 
    \item\label{lemma:commutation.transitive} If $M \ToHeadSigmaReflTrans L \ToHeadBetavReflTrans N$ then there exists $L' \in \Lambda$ such that $M \ToHeadBetavReflTrans L' \ToHeadSigmaReflTrans N$. 
    \item\label{lemma:commutation.general} If $M \ToHeadReflTrans M'$ then there exists $N \in \Lambda$ such that $M \ToHeadBetavReflTrans N \ToHeadSigmaReflTrans M'$.
  \end{enumerate}
\end{lem}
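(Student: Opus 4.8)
The plan is to prove the three parts in order, bootstrapping the transitive statements from the local commutation of part~\ref{lemma:commutation.onestep}. For part~\ref{lemma:commutation.onestep}, assume $M \ToHeadSigma L \ToHeadBetav N$ and argue by induction on the derivation of $M \ToHeadSigma L$, inspecting its last rule (Definition~\ref{def:headsigma}). If that rule is $\sigma_1$, then $M = (\lambda x.P)QR\,M_1\dots M_m$ and $L = (\lambda x.PR)Q\,M_1\dots M_m$ with $x\notin\Fv{R}$. The head $\beta_v$-step $L\ToHeadBetav N$ splits on whether $Q$ is a value: if $Q=V\in\Lambda_v$ the step fires the top redex, and then $M$ itself has the top $\beta_v$-redex $(\lambda x.P)V$, so $M\ToHeadBetav P\sub{V}{x}R\,M_1\dots M_m = N$ (using $R\sub{V}{x}=R$), giving $L'=N$; if $Q$ is not a value the step descends into $Q$ by the $\mathit{right}$ rule, $N = (\lambda x.PR)Q'\,M_1\dots M_m$ with $Q\ToHeadBetav Q'$, and then $M\ToHeadBetav (\lambda x.P)Q'R\,M_1\dots M_m =: L' \ToHeadSigma N$ by a single $\sigma_1$-step. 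The rule $\sigma_3$ at the top is entirely symmetric, using the shape $V((\lambda x.L_0)Q)\,M_1\dots M_m$ and the same value/non-value dichotomy on $Q$.

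The remaining case of part~\ref{lemma:commutation.onestep} is when the last rule of $M\ToHeadSigma L$ is $\mathit{right}$, so $M = VN_0\,M_1\dots M_m$ and $L = VN_0'\,M_1\dots M_m$ with $N_0\ToHeadSigma N_0'$. Here the key point, and the main obstacle of the whole lemma, is to see that the ensuing head $\beta_v$-step cannot consume a top redex: since $N_0\ToHeadSigma N_0'$, Lemma~\ref{rmk:abs}.\ref{rmk:abs.notovalue} guarantees that $N_0'$ is \emph{not} a value, so $VN_0'$ is not a $\beta_v$-redex and (as $V$ is head $\beta_v$-normal by Lemma~\ref{rmk:abs}.\ref{rmk:abs.novalue} and the spectators $M_1\dots M_m$ are untouched) the step must descend into $N_0'$, yielding $N = VN_0''\,M_1\dots M_m$ with $N_0'\ToHeadBetav N_0''$. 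Applying the induction hypothesis to $N_0\ToHeadSigma N_0'\ToHeadBetav N_0''$ gives $N_0\ToHeadBetav N_0^\dagger\ToHeadSigmaRefl N_0''$, and lifting both steps through the $\mathit{right}$ rules (legitimate because $V$ is a value) produces $M\ToHeadBetav VN_0^\dagger\,M_1\dots M_m =: L' \ToHeadSigmaRefl N$, as required. In every case exactly one head $\beta_v$-step is produced and the trailing $\sigma$-part is at most one step, matching the statement.

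For part~\ref{lemma:commutation.transitive} I would first strengthen part~\ref{lemma:commutation.onestep} to a \emph{strip} form: if $M\ToHeadSigma L\ToHeadBetavReflTrans N$ then $M\ToHeadBetavReflTrans L'\ToHeadSigmaReflTrans N$ for some $L'$, proved by induction on the length of $L\ToHeadBetavReflTrans N$ (the base case takes $L'=M$; the inductive case applies part~\ref{lemma:commutation.onestep} to the first $\beta_v$-step, then, according to whether the residual $\sigma$-step survives, either concludes directly or reapplies the strip form to the shorter $\beta_v$-sequence). Then part~\ref{lemma:commutation.transitive} follows by a second induction on the number of leading $\ToHeadSigma$-steps in $M\ToHeadSigmaReflTrans L\ToHeadBetavReflTrans N$, peeling off the last $\sigma$-step, applying the strip form to it, and invoking the induction hypothesis on the remaining $\sigma$-prefix.

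Finally, part~\ref{lemma:commutation.general} is a sorting argument extracted from part~\ref{lemma:commutation.transitive}. Given $M\ToHeadReflTrans M'$, recall $\ToHead = \ToHeadBetav\cup\ToHeadSigma$, so the sequence is an arbitrary interleaving of $\beta_v$- and $\sigma$-steps; I proceed by induction on its length. The empty sequence is trivial. For the inductive step, the induction hypothesis rearranges the tail as $M_1\ToHeadBetavReflTrans N_1\ToHeadSigmaReflTrans M'$; if the first step is $M\ToHeadBetav M_1$ we simply prepend it, and if it is $M\ToHeadSigma M_1$ we apply part~\ref{lemma:commutation.transitive} to $M\ToHeadSigma M_1\ToHeadBetavReflTrans N_1$ to obtain $M\ToHeadBetavReflTrans N_2\ToHeadSigmaReflTrans N_1\ToHeadSigmaReflTrans M'$, which is the desired factorization with $N=N_2$. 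The whole difficulty is thus concentrated in the one-step case analysis above, and in particular in the use of Lemma~\ref{rmk:abs}.\ref{rmk:abs.notovalue} to forbid the pathological interaction in which a postponed $\beta_v$-step would fire a redex created only by the $\sigma$-step.
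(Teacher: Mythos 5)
Your proposal is correct and follows essentially the same route as the paper: part~\ref{lemma:commutation.onestep} by induction on the derivation of $M \ToHeadSigma L$ with the same case analysis (top $\sigma_1$/$\sigma_3$ redex split on whether the argument is a value, and the $\mathit{right}$ case resolved by Lemma~\ref{rmk:abs}.\ref{rmk:abs.notovalue} exactly as in the paper), and parts~\ref{lemma:commutation.transitive}--\ref{lemma:commutation.general} by the same sorting arguments, differing only cosmetically (you isolate the strip form as a named step where the paper inlines it as ``apply part~\ref{lemma:commutation.onestep} at most $n$ times'', and you induct on sequence length rather than on the number of $\ToHeadSigma\ToHeadBetav$ alternations in part~\ref{lemma:commutation.general}).
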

\proof\hfill
  \begin{enumerate}
    \item   By induction on the derivation of $M \ToHeadSigma L$. Let us consider its last rule $\mathsf{r}$.
      \begin{itemize}
      \item If $\mathsf{r} = \sigma_1$ then $M = (\lambda x.M_0)N_0L_0
        M_1 \dots M_m$ and $L = (\lambda x.M_0L_0)N_0 M_1 \dots M_m$
        where $m \in \Nat$ and $x \notin \Fv{L_0}$.  Since $L
        \ToHeadBetav N$, there are only two cases:
        \begin{itemize}
        \item either $N_0 \ToHeadBetav N_0'$ and $N = (\lambda
          x.M_0L_0)N_0' M_1 \dots M_m$ (according to the rule
          $\mathit{right}$ for $\ToHeadBetav$), then $M \ToHeadBetav
          (\lambda x.M_0)N_0'L_0 M_1 \dots M_m \ToHeadSigma N$;
        \item or $N_0 \in \Lambda_v$ and $N = M_0\sub{N_0}{x} L_0 M_1 \!\dots M_m$ 
        (according to the rule ${\beta_v}$ for $\ToHeadBetav$, since $ x \notin \Fv{L_0}$), therefore $M \ToHeadBetav N$.
        \end{itemize}
        \item If $\mathsf{r} = \sigma_3$ then $M = V((\lambda x.L_0)N_0) M_1
        \dots M_m$ and $L = (\lambda x.VL_0)N_0 M_1 \dots M_m$ with
        \mbox{$m \in \Nat$} and $x \notin \Fv{V}$.  Since $L
        \ToHeadBetav N$, there are only two cases:
        \begin{itemize}
        \item either $N_0 \ToHeadBetav N_0'$ and $N = (\lambda
          x.VL_0)N_0' M_1 \dots M_m$ (according to the rule
          $\mathit{right}$ for $\ToHeadBetav$), then $M \ToHeadBetav
          V((\lambda x.L_0)N_0') M_1 \dots M_m \allowbreak\ToHeadSigma
          N$;
        \item or $N_0 \in \Lambda_v$ and $N = V L_0\sub{N_0}{x} M_1 \dots M_m$ (according to the rule ${\beta_v}$ for $\ToHeadBetav$,
          because $x \notin \Fv{V}$), so $M \ToHeadBetav N$.
        \end{itemize}
        \item Finally, if $\mathsf{r} = \mathit{right}$ then $M = VN_0 M_1
        \dots M_m$ and $L = VN_0' M_1 \dots M_m$ with $m \in \Nat$ and
        $N_0 \ToHeadSigma N_0'$.  By
        Lemma~\ref{rmk:abs}.\ref{rmk:abs.notovalue}, $N_0' \notin
        \Lambda_v$ and thus, since $L \ToHeadBetav N$, the only
        possibility is that $N_0' \ToHeadBetav N_0''$ and $N = VN_0''
        M_1 \dots M_m$ (according to the rule $\mathit{right}$ for
        $\ToHeadBetav$).  By induction hypothesis, there exists
        $N_0''' \in \Lambda$ such that $N_0 \ToHeadBetav N_0'''
        \ToHeadSigmaRefl N_0''$. Therefore, $M \ToHeadBetav VN_0'''
        M_1 \dots M_m \ToHeadSigmaRefl N$.
      \end{itemize}

   \item 
   By hypothesis, there exist $m,n \in \Nat$ and $M_0, \dots, M_m, N_0, \dots, N_n \in \Lambda$ such that 
   $M = M_0 \ToHeadSigma \dots \ToHeadSigma M_m = L = N_0 \ToHeadBetav \dots \ToHeadBetav N_n = N$. 
    We prove by induction on $m \in \Nat$ that 
    $M \ToHeadBetavReflTrans L' \ToHeadSigmaReflTrans N$ for some $L' \in \Lambda$.
    \begin{itemize}
    \item If $m = 0$ (resp.~$n = 0$) then we conclude by taking
      $L' = N$ (resp.~$L' = M$).

    \item  Suppose $m,n > 0$: by applying
      Lemma~\ref{lemma:commutation}.\ref{lemma:commutation.onestep} at
      most $n$ times, there exist
      $N_0', \dots, N_{n-1}' \allowbreak\in \Lambda$ such that
      $M_{m-1} \ToHeadBetav N_0' \ToHeadBetavRefl \dots \ToHeadBetavRefl N_{n-1}' \ToHeadRefl N$.
      By induction hypothesis (applied to $M = M_0 \ToHeadSigma \dots \ToHeadSigma M_{m-1} \ToHeadBetavReflTrans N$ or $M = M_0 \ToHeadSigma \dots \ToHeadSigma M_{m-1} \ToHeadBetavReflTrans N_{n-1}'$ depending on whether $N_{n-1}' \ToHeadBetav N$ or $N_{n-1}' \ToHeadSigmaRefl N$, respectively), there exists $L' \in \Lambda$ such that $M \ToHeadBetavReflTrans L' \ToHeadSigmaReflTrans N$.
    \end{itemize}

    \item 
By hypothesis, there exist $n \in \Nat$ and $L, M_1, N_1, \dots, M_n, N_n \in \Lambda$ such that 
$M \ToHeadBetavReflTrans L \ToHeadSigmaTrans M_1 \ToHeadBetavTrans N_1 \ToHeadSigmaTrans \dots \ToHeadBetavTrans N_{n-1} \ToHeadSigmaTrans M_{n} \ToHeadBetavTrans N_n \ToHeadSigmaReflTrans M'$ 
(i.e. $n$ is the number of subsequences of the shape $\ToHeadSigma\ToHeadBetav$ in the head $\vg$-reduction sequence from $M$ to $M'$). 
    We prove by induction on $n \in \Nat$ that $M \ToHeadBetavReflTrans N \ToHeadSigmaReflTrans M'$ for some $N \in \Lambda$.
    \begin{itemize}
    \item If $n = 0$ then $M \ToHeadBetavReflTrans L \ToHeadSigmaReflTrans M'$ and hence we conclude by
      taking $N = L$.
     \item Suppose $n > 0$. By applying the induction hypothesis to the head $\vg$-reduction sequence from $M$ to $M_n$,
      $M \ToHeadBetavReflTrans N' \ToHeadSigmaReflTrans M_n \ToHeadBetavTrans N_n \ToHeadSigmaReflTrans M'$ for some $N' \in \Lambda$.
      By Lemma~\ref{lemma:commutation}.\ref{lemma:commutation.transitive},
      $M \ToHeadBetavReflTrans N' \ToHeadBetavReflTrans N \ToHeadSigmaReflTrans N_n \ToHeadSigmaReflTrans M'$ for some $N \in \Lambda$.
\qed
    \end{itemize}
  \end{enumerate}


\noindent We are now 
ready to 
retrace Takahashi's method \cite{Takahashi95} in our setting with $\beta_v$- and $\sigma$-reductions.
The next four lemmas govern strong parallel reduction and will be used to prove Lemma\,\ref{lemma:main}, the key lemma stating that $\Rightarrow$ can be ``sequentialized'' according to $\Rrightarrow$. 

\begin{lem}\label{lemma:nonvalue}
  If $M \Rrightarrow M'$ and $N \Rightarrow N'$ and $M' \notin \Lambda_v$, then $MN \Rrightarrow M'N'$.
\end{lem}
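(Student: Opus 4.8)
The plan is to peel open the definition of $\Rrightarrow$ and push each piece of the witnessing decomposition through the application with $N$ on the right. From $M \Rrightarrow M'$ I get both $M \Rightarrow M'$ and a decomposition $M \ToHeadBetavReflTrans P \ToHeadSigmaReflTrans Q \Inter M'$ for some $P, Q \in \Lambda$. The parallel part $MN \Rightarrow M'N'$ is then immediate from $N \Rightarrow N'$ and Lemma~\ref{rmk:preliminary}.\ref{rmk:preliminary.appparallel}, so the real work is to produce a decomposition of $MN$ of the shape required by the definition of $\Rrightarrow$, ending in $M'N'$.

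First I would transport the head-reduction prefix. Because head $\beta_v$- and head $\sigma$-reductions fire on the left of an application, Lemma~\ref{rmk:preliminary}.\ref{rmk:preliminary.apphead} applied step by step along the two given sequences yields $MN \ToHeadBetavReflTrans PN$ and then $PN \ToHeadSigmaReflTrans QN$; formally this is a trivial induction on the lengths of the two reduction sequences.

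The crucial step, and the only place where the hypothesis $M' \notin \Lambda_v$ enters, is transporting the internal part: from $Q \Inter M'$ and $N \Rightarrow N'$ I need $QN \Inter M'N'$. This is exactly Lemma~\ref{rmk:preliminary}.\ref{rmk:preliminary.appinternal}, whose side condition demands that the target of the internal reduction not be a value, and that is met here precisely because $M' \notin \Lambda_v$. This condition is genuinely necessary rather than a technical convenience, and it is the anticipated obstacle: dropping it breaks preservation of $\Inter$ under left application, as the counterexample $x(I\Delta) \not\Inter x\Delta$ recorded after Lemma~\ref{rmk:preliminary}.\ref{rmk:preliminary.appinternal} shows. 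Chaining the three transported pieces gives $MN \ToHeadBetavReflTrans PN \ToHeadSigmaReflTrans QN \Inter M'N'$, which together with $MN \Rightarrow M'N'$ yields $MN \Rrightarrow M'N'$ by the definition of $\Rrightarrow$.
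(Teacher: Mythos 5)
Your proof is correct and follows exactly the same route as the paper's: unfold the definition of $\Rrightarrow$, transport the parallel part via Lemma~\ref{rmk:preliminary}.\ref{rmk:preliminary.appparallel}, the head prefix via Lemma~\ref{rmk:preliminary}.\ref{rmk:preliminary.apphead}, and the internal tail via Lemma~\ref{rmk:preliminary}.\ref{rmk:preliminary.appinternal}, with $M' \notin \Lambda_v$ used precisely where you say it is. No gaps.
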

\begin{proof}
  From the definition of $M \Rrightarrow M'$ it follows that $M \Rightarrow M'$ and $M \ToHeadBetavReflTrans L \ToHeadSigmaReflTrans L' \Inter M'$ for some $L, L' \in \Lambda$.
  Hence, $MN \Rightarrow M'N'$ by Lemma~\ref{rmk:preliminary}.\ref{rmk:preliminary.appparallel}, and 
  $MN \ToHeadBetavReflTrans LN \ToHeadSigmaReflTrans L'N$ by Lemma~\ref{rmk:preliminary}.\ref{rmk:preliminary.apphead}.
  Since $M' \notin \Lambda_v$, $L'N \Inter M'N'$ by Lemma~\ref{rmk:preliminary}.\ref{rmk:preliminary.appinternal}. Therefore, $MN \Rrightarrow M'N'$.
\end{proof}

\begin{lem}[Applicative closure of $\Rrightarrow$]\label{lemma:appstrong}
  If $M \Rrightarrow M'$ and $N \Rrightarrow N'$ then $MN \Rrightarrow M'N'$.
\end{lem}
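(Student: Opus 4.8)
The plan is to split on whether the right-hand side $M'$ of $M \Rrightarrow M'$ is a value. The non-value case is already isolated: Lemma~\ref{lemma:nonvalue} gives $MN \Rrightarrow M'N'$ directly from $M \Rrightarrow M'$ and $N \Rightarrow N'$, and the latter holds because $\Rrightarrow \, \subseteq \, \Rightarrow$. So the real work is the case $M' \in \Lambda_v$.

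First I would unfold $M \Rrightarrow M'$ into $M \Rightarrow M'$ together with a witnessing factorization $M \ToHeadBetavReflTrans A \ToHeadSigmaReflTrans B \Inter M'$ for some $A, B \in \Lambda$. The crucial observation is that, when $M'$ is a value, this factorization degenerates. Since $B \Inter M'$ and $M'$ is a variable or an abstraction, Lemma~\ref{rmk:abs}.\ref{rmk:abs.anti} forces $B$ to be a value of the same shape; and since head $\sigma$-reduction never reduces to a value (Lemma~\ref{rmk:abs}.\ref{rmk:abs.notovalue}), the segment $A \ToHeadSigmaReflTrans B$ must be empty, whence $A = B$ is a value. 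Thus $M$ head-$\beta_v$-reduces to a value $B$ with $B \Inter M'$.

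Next I would push the factorization of $N$ to the right of this value. Unfolding $N \Rrightarrow N'$ gives $N \ToHeadBetavReflTrans P \ToHeadSigmaReflTrans Q \Inter N'$ for some $P, Q \in \Lambda$. Lemma~\ref{rmk:preliminary}.\ref{rmk:preliminary.apphead} lifts $M \ToHeadBetavReflTrans B$ to $MN \ToHeadBetavReflTrans BN$; then, because $B$ is a value, the $\mathit{right}$ rules of head $\beta_v$- and head $\sigma$-reduction (taken with $m = 0$) lift $N \ToHeadBetavReflTrans P$ and $P \ToHeadSigmaReflTrans Q$ to $BN \ToHeadBetavReflTrans BP$ and $BP \ToHeadSigmaReflTrans BQ$. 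Concatenating, $MN \ToHeadBetavReflTrans BP \ToHeadSigmaReflTrans BQ$. Finally, since $B \Inter M'$ with both terms values gives $B \Rightarrow M'$ (Lemma~\ref{rmk:abs}.\ref{rmk:abs.value}) and $Q \Inter N'$, the $\mathit{right}$ rule for $\Inter$ (again with $m = 0$) yields $BQ \Inter M'N'$. Combined with $MN \Rightarrow M'N'$, obtained from Lemma~\ref{rmk:preliminary}.\ref{rmk:preliminary.appparallel}, this is exactly the required $MN \Rrightarrow M'N'$.

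The only delicate point is the value case, and within it the observation that head $\sigma$-reduction cannot create a value, which collapses $A \ToHeadSigmaReflTrans B$ and forces $A = B$. This is what makes the left factor a genuine value, so that the $\mathit{right}$ rules apply and the argument's head reductions can be appended \emph{after} $M$ has been fully head-evaluated; without it one could not prepend the evaluation of $M$ before descending into $N$. The non-value case, by contrast, is immediate from Lemma~\ref{lemma:nonvalue}.
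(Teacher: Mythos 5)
Your proposal is correct and follows essentially the same route as the paper's proof: the same case split on whether $M' \in \Lambda_v$, the same use of Lemma~\ref{lemma:nonvalue} for the non-value case, and in the value case the same collapse of the head $\sigma$-segment via Lemmas~\ref{rmk:abs}.\ref{rmk:abs.anti} and \ref{rmk:abs}.\ref{rmk:abs.notovalue}, followed by lifting the factorization of $N$ through the $\mathit{right}$ rules. The only cosmetic difference is that you invoke Lemma~\ref{rmk:abs}.\ref{rmk:abs.value} to get $B \Rightarrow M'$ where the paper simply uses $\Inter \, \subseteq \, \Rightarrow$; both are fine.
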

\begin{proof}
  If $M' \notin \Lambda_v$ 
  then $MN \Rrightarrow M'N'$ by Lemma~\ref{lemma:nonvalue}, since $N \Rrightarrow N'$ implies $N \Rightarrow N'$.
  
  Assume $M' \!\in \Lambda_v$: $MN \Rightarrow M'N'$ by Lemma~\ref{rmk:preliminary}.\ref{rmk:preliminary.appparallel}, since $M \Rightarrow M'$ and $N \Rightarrow N'\!$.
  By hypothesis, there are  $M_0,  M_0', N_0, N_0' \in \Lambda$ such that
  $M \ToHeadBetavReflTrans M_0 \ToHeadSigmaReflTrans M_0' \Inter M'$ and $N \ToHeadBetavReflTrans N_0 \ToHeadSigmaReflTrans N_0' \Inter N' \,$. 
  By Lemma~\ref{rmk:abs}.\ref{rmk:abs.anti}, 
  $M_0' \in \Lambda_v$ since $M' \in \Lambda_v$, thus $M_0 = M_0'$ by Lemma~\ref{rmk:abs}.\ref{rmk:abs.notovalue} (and $M_0 \Rightarrow M'$ since $\Inter \, \subseteq \, \Rightarrow$). Since $M_0 \in \Lambda_v$, using the rules $\mathit{right}$ for $\ToHeadBetav$ and $\ToHeadSigma$, we have $M_0N \ToHeadBetavReflTrans  M_0N_0$ and $M_0N_0 \ToHeadSigmaReflTrans  M_0N_0'$. 
  By Lemma~\ref{rmk:preliminary}.\ref{rmk:preliminary.apphead}, $MN \ToHeadBetavReflTrans M_0N$.
  By applying the rule $\mathit{right}$ for $\Inter$, we have $M_0N_0' \Inter M'N'$\!. 
  Therefore, $MN \ToHeadBetavReflTrans M_0N \ToHeadBetavReflTrans M_0 N_0 \ToHeadSigmaReflTrans M_0 N_{0}' \Inter M'N'$ and hence $MN \Rrightarrow M'N'$.
\end{proof}

\begin{lem}[Substitution vs. $\Inter$]
\label{lemma:bastard}
  If $M \Inter M'$ and $V \Inter V'$ then $M\sub{V}{x} \Inter M'\sub{V'\!}{x}$.
\end{lem}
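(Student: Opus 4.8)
The plan is to proceed by induction on the derivation of $M \Inter M'$, inspecting the last rule applied (one of $\mathit{var}$, $\lambda$, $\mathit{right}$ from Definition~\ref{def:internal}). Throughout I would exploit the inclusion $\Inter \, \subseteq \, \Rightarrow$, so that the hypothesis $V \Inter V'$ immediately gives $V \Rightarrow V'$; this makes the already-established substitution lemma for parallel reduction (Lemma~\ref{lemma:parallelsubstitution}) applicable to every $\Rightarrow$-premise appearing in the derivation.

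If the last rule is $\mathit{var}$, then $M = M' = y$ for some variable $y$. When $y = x$ the two sides reduce to $V$ and $V'$, so the claim is exactly the hypothesis $V \Inter V'$; when $y \neq x$ the substitution is vacuous and $y \Inter y$ holds by the rule $\mathit{var}$. If the last rule is $\lambda$, then $M = \lambda y.N$ and $M' = \lambda y.N'$ with $N \Rightarrow N'$; after $\alpha$-renaming so that $y \notin \Fv{V} \cup \{x\}$, Lemma~\ref{lemma:parallelsubstitution} yields $N\sub{V}{x} \Rightarrow N'\sub{V'}{x}$, and reapplying the rule $\lambda$ for $\Inter$ closes this case.

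The principal case is $\mathit{right}$, where $M = U N M_1 \dots M_m$ and $M' = U' N' M_1' \dots M_m'$ with $U \Rightarrow U'$ (and $U, U' \in \Lambda_v$), $N \Inter N'$, and $M_i \Rightarrow M_i'$ for all $1 \leq i \leq m$. Here I would push the substitution inside the application and reassemble via the rule $\mathit{right}$: Lemma~\ref{lemma:parallelsubstitution} provides $U\sub{V}{x} \Rightarrow U'\sub{V'}{x}$ and $M_i\sub{V}{x} \Rightarrow M_i'\sub{V'}{x}$ for each $i$, while the induction hypothesis (legitimate since $N \Inter N'$ and $V \Inter V'$) provides the distinguished internal premise $N\sub{V}{x} \Inter N'\sub{V'}{x}$.

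The only point that requires attention is that the rule $\mathit{right}$ demands its leftmost premise to range over a value, so I must check that $U\sub{V}{x}$ remains in $\Lambda_v$; this is guaranteed because values are closed under substitution (as noted after Definition~1). Apart from this, the proof is a routine structural induction closely mirroring that of Lemma~\ref{lemma:parallelsubstitution}; the sole difference is that the leftmost-internal premise is governed by $\Inter$ rather than $\Rightarrow$ and is therefore discharged by the induction hypothesis instead of by Lemma~\ref{lemma:parallelsubstitution}. I do not expect any genuine obstacle.
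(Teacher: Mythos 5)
Your proposal is correct and follows essentially the same route as the paper: a case analysis on the last rule of the derivation of $M \Inter M'$ (the paper phrases the induction as being on the structure of $M$, which amounts to the same thing here), using Lemma~\ref{lemma:parallelsubstitution} for the $\Rightarrow$-premises, the induction hypothesis for the single $\Inter$-premise in the $\mathit{right}$ case, and the closure of values under substitution to re-apply the rule $\mathit{right}$. The only cosmetic difference is that you justify $V \Rightarrow V'$ via the general inclusion $\Inter \, \subseteq \, \Rightarrow$ while the paper cites Lemma~\ref{rmk:abs}.\ref{rmk:abs.value}; both are valid.
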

\begin{proof}
  By induction on $M \in \Lambda$. Let us consider the last rule $\mathsf{r}$ of the derivation of $M \Inter M'$.
\begin{itemize}
\item  If $\mathsf{r} = \mathit{var}$ then $M = M'$ and there are only two cases: either $M = x$ and then $M\sub{V\!}{x} = V \Inter V' = M'\sub{V'\!}{x}$; or $M = y \neq x$  and then $M\sub{V\!}{x} = y = M'\sub{V'\!}{x}$, therefore $M\sub{V\!}{x} \Inter M'\sub{V'\!}{x}$ by reflexivity of $\Inter$ (Lemma~\ref{rmk:reflexive}).

\item  If $\mathsf{r} = \lambda$ then $M = \lambda y.N$ and $M' = \lambda y.N'$ with $N \Rightarrow N'$; we can suppose without loss of generality that $y \notin \Fv{V} \cup \{x\}$. 
  We have $N\sub{V}{x} \Rightarrow N'\sub{V'\!}{x}$ according to Lemma~\ref{lemma:parallelsubstitution}, since $V \Inter V'$ implies $V \Rightarrow V'$ (Lemma~\ref{rmk:abs}.\ref{rmk:abs.value}). 
  By applying the rule $\lambda$ for $\Inter$, we have $M\sub{V}{x} = \lambda y.N\sub{V}{x} \Inter \lambda y.N'\sub{V'\!}{x} = M'\sub{V'\!}{x}$
  .

\item  Finally, if $\mathsf{r} = \mathit{right}$ then $M = UN M_1\dots M_m$ and $M' = U'N' M_1'\dots M_m'$ for some $m \in \Nat$ with $U, U' \in \Lambda_v$ such that $U \Rightarrow U'$, $N \Inter N'$ and $M_i \Rightarrow M_i'$ for any $1 \leq i \leq m$. 
  By induction hypothesis, $N\sub{V}{x} \Inter N'\sub{V'}{x}$.
  By Lemma~\ref{lemma:parallelsubstitution}, $U\sub{V}{x} \Rightarrow U'\sub{V'\!}{x}$ and $M_i\sub{V}{x} \Rightarrow M_i'\sub{V'\!}{x}$ for any $1 \leq i \leq m$, since $V \Inter V'$ implies $V \Rightarrow V'$ (Lemma~\ref{rmk:abs}.\ref{rmk:abs.value}).
  By applying the rule $\mathit{right}$ for $\Inter$ (note that $U\sub{V}{x}, U' \sub{V'\!}{x} \in \Lambda_v$), we have
 \begin{align*}
M\sub{V}{x} &=  U\sub{V}{x} N\sub{V}{x} M_1\sub{V}{x} \dots M_m\sub{V}{x} \\
& \Inter U'\sub{V'\!}{x} N'\sub{V'\!}{x} M_1'\sub{V'\!}{x} \dots M_m'\sub{V'\!}{x} = M'\sub{V'\!}{x}.
  \end{align*}

\vspace{-\baselineskip}
\qedhere
\end{itemize}
\end{proof}

\noindent Lemma~\ref{lemma:bastard} is used to prove the following substitution lemma for $\Rrightarrow$
.

\begin{lem}[Substitution vs. $\Rrightarrow$]\label{lemma:strongparallelsubstitution} 
 If $M \!\Rrightarrow\! M'$ and $V \!\Rrightarrow\! V'\!$ then $M\sub{V\!}{x} \Rrightarrow M'\sub{V'\!}{x}$.
\end{lem}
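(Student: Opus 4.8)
The plan is to unfold the definition of strong parallel reduction and then transport each of its three constituent phases through the substitution $\sub{V}{x}$ using a dedicated substitution lemma for that phase. From $M \Rrightarrow M'$ I obtain, by Definition~\ref{def:internal}, both $M \Rightarrow M'$ and a decomposition $M \ToHeadBetavReflTrans L \ToHeadSigmaReflTrans L' \Inter M'$ for some $L, L' \in \Lambda$. The aim is to produce the analogous data witnessing $M\sub{V}{x} \Rrightarrow M'\sub{V'}{x}$.

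First I would record a preliminary fact on the value side. Since $V \Rrightarrow V'$ entails $V \to_\vg^* V'$ (because $\Rrightarrow \, \subseteq \, \Rightarrow \, \subseteq \, \to_\vg^*$ by Lemma~\ref{rmk:preliminary}.\ref{rmk:preliminary.preConfluence}) and values are closed under $\vg$-reduction (Remark~\ref{rmk:fromvalue}), we have $V' \in \Lambda_v$; hence Lemma~\ref{rmk:abs}.\ref{rmk:abs.value} yields $V \Inter V'$ (and, a fortiori, $V \Rightarrow V'$). The parallel component of the conclusion, namely $M\sub{V}{x} \Rightarrow M'\sub{V'}{x}$, then follows directly from Lemma~\ref{lemma:parallelsubstitution}, applied to $M \Rightarrow M'$ and $V \Rightarrow V'$.

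For the two head phases I would push the substitution through the reduction sequences keeping the \emph{same} fixed value $V$ on both sides: iterating Lemma~\ref{rmk:preliminary}.\ref{rmk:preliminary.headsubstitution} along $M \ToHeadBetavReflTrans L$ and then along $L \ToHeadSigmaReflTrans L'$ gives $M\sub{V}{x} \ToHeadBetavReflTrans L\sub{V}{x} \ToHeadSigmaReflTrans L'\sub{V}{x}$. For the terminal internal phase, Lemma~\ref{lemma:bastard} applied to $L' \Inter M'$ and $V \Inter V'$ produces $L'\sub{V}{x} \Inter M'\sub{V'}{x}$. Concatenating these yields the required factorization $M\sub{V}{x} \ToHeadBetavReflTrans L\sub{V}{x} \ToHeadSigmaReflTrans L'\sub{V}{x} \Inter M'\sub{V'}{x}$, which together with the parallel reduction established above gives $M\sub{V}{x} \Rrightarrow M'\sub{V'}{x}$ by Definition~\ref{def:internal}.

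The only genuine subtlety, and the point I would treat carefully, is the asymmetry in which value gets substituted at each stage. Lemma~\ref{rmk:preliminary}.\ref{rmk:preliminary.headsubstitution} preserves head reductions only when the same value is substituted on both sides, so the head $\beta_v$- and head $\sigma$-phases must be carried out with $V$ throughout; the passage from $V$ to $V'$ is confined entirely to the final internal step, where Lemma~\ref{lemma:bastard} is precisely tailored to absorb it. This is exactly why the preliminary observation $V \Inter V'$ (and hence $V' \in \Lambda_v$) is indispensable: without it Lemma~\ref{lemma:bastard} would not be applicable, and no earlier step permits the substituted value to change. Everything else is a routine assembly of the three already-established substitution lemmas.
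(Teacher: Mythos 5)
Your proposal is correct and follows essentially the same route as the paper's proof: obtain the parallel component via Lemma~\ref{lemma:parallelsubstitution}, push the substitution with the fixed value $V$ through the head $\beta_v$- and head $\sigma$-phases via Lemma~\ref{rmk:preliminary}.\ref{rmk:preliminary.headsubstitution}, and absorb the change from $V$ to $V'$ in the final internal step via Lemma~\ref{lemma:bastard}, using Lemma~\ref{rmk:abs}.\ref{rmk:abs.value} to get $V \Inter V'$. The observation you single out as the subtle point is exactly the structure of the paper's argument.
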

\begin{proof}
  According to Lemma~\ref{lemma:parallelsubstitution}, $M\sub{V}{x} \Rightarrow M'\sub{V'\!}{x}$ since $M \Rightarrow M'$ and $V \Rightarrow V'$.
  By hypothesis, there exist $N, L \in \Lambda$ such that $M \ToHeadBetavReflTrans L \ToHeadSigmaReflTrans N \Inter M'$.
  By Lemma~\ref{rmk:preliminary}.\ref{rmk:preliminary.headsubstitution}, $M\sub{V}{x} \ToHeadBetavReflTrans L\sub{V}{x}$ and $L\sub{V}{x} \ToHeadSigmaReflTrans  N\sub{V}{x}$.
  By Lemma~\ref{lemma:bastard} (since $V \Rrightarrow V'$ implies $V \Inter V'$ according to Lemma~\ref{rmk:abs}.\ref{rmk:abs.value}), we have $N\sub{V}{x} \Inter M'\sub{V'\!}{x}$, thus $M\sub{V}{x} \ToHeadBetavReflTrans L\sub{V}{x} \ToHeadSigmaReflTrans N\sub{V}{x} \Inter M'\sub{V'\!}{x}$ and 
  therefore $M\sub{V}{x} \Rrightarrow M'\sub{V'\!}{x}$
  .
\end{proof}

Now we prove a key lemma, stating that parallel reduction $\Rightarrow$ coincides with strong parallel reduction $\Rrightarrow$ (the inclusion $\Rrightarrow \, \subseteq \, \Rightarrow$ holds trivially by definition of $\Rrightarrow$).
In its proof, as well as in the proof of Corollary~\ref{cor:postponeinternal} and Theorem~\ref{thm:trifurcate}, our Lemma~\ref{lemma:commutation} plays a crucial role: indeed, since head $\sigma$-reduction well interacts with head $\beta_v$-reduction, Takahashi's method \cite{Takahashi95} is still working when adding the reduction rules $\sigma_1$ and $\sigma_3$ to 
$\beta_v$-reduction.

\begin{lem}[Key Lemma]\label{lemma:main}
  If $M \Rightarrow M'$ then $M \Rrightarrow M'$.
\end{lem}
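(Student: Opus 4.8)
The plan is to proceed by induction on the derivation of $M \Rightarrow M'$, analysing the last rule $\mathsf{r}$ applied. In every case the parallel reduction $M \Rightarrow M'$ is exactly the hypothesis, so it only remains to exhibit terms witnessing the decomposition $M \ToHeadBetavReflTrans L \ToHeadSigmaReflTrans N \Inter M'$ required by the definition of $\Rrightarrow$. The strategy is uniform: fire the head redex displayed by $\mathsf{r}$ (when there is one), apply the induction hypothesis to the premises to upgrade each $\Rightarrow$ into a $\Rrightarrow$, reassemble the contractum using the applicative closure of $\Rrightarrow$ (Lemma~\ref{lemma:appstrong}) and closure under abstraction (Lemma~\ref{rmk:abs}.\ref{rmk:abs.parallel}), and finally prepend the fired head step to the decomposition so obtained.

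For the easy cases $\mathit{var}$ and $\lambda$ there is no head redex to fire. If $\mathsf{r} = \mathit{var}$, then $M = x M_1 \dots M_m$ and $M' = x M_1' \dots M_m'$; the induction hypothesis gives $M_i \Rrightarrow M_i'$, and starting from $x \Rrightarrow x$ (reflexivity, Lemma~\ref{rmk:reflexive}) I build $M \Rrightarrow M'$ by $m$ applications of Lemma~\ref{lemma:appstrong}. If $\mathsf{r} = \lambda$, then $M = (\lambda x.M_0) M_1 \dots M_m$ and $M' = (\lambda x.M_0') M_1' \dots M_m'$; by the induction hypothesis $M_0 \Rrightarrow M_0'$, hence $\lambda x.M_0 \Rrightarrow \lambda x.M_0'$ by Lemma~\ref{rmk:abs}.\ref{rmk:abs.parallel}, and Lemma~\ref{lemma:appstrong} again assembles the arguments.

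For $\mathsf{r} = \beta_v$ I use the single head step $M = (\lambda x.M_0)V M_1 \dots M_m \ToHeadBetav M_0\sub{V}{x} M_1 \dots M_m$. The induction hypothesis yields $M_0 \Rrightarrow M_0'$ and $V \Rrightarrow V'$, so $M_0\sub{V}{x} \Rrightarrow M_0'\sub{V'}{x}$ by the substitution lemma for $\Rrightarrow$ (Lemma~\ref{lemma:strongparallelsubstitution}); combining this with $M_i \Rrightarrow M_i'$ via Lemma~\ref{lemma:appstrong} gives $M_0\sub{V}{x} M_1 \dots M_m \Rrightarrow M'$. Unfolding this $\Rrightarrow$ into its canonical decomposition and prepending the head $\beta_v$-step (which merely lengthens the initial $\ToHeadBetavReflTrans$ segment) produces a decomposition for $M$; together with the hypothesis $M \Rightarrow M'$ this yields $M \Rrightarrow M'$.

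The delicate cases are $\sigma_1$ and $\sigma_3$, and this is where I expect the main obstacle. Here the displayed head step is a head $\sigma$-step, e.g.\ $M = (\lambda x.M_0)NL M_1 \dots M_m \ToHeadSigma (\lambda x.M_0 L)N M_1 \dots M_m$ (and symmetrically for $\sigma_3$, shuffling a value). As before, the induction hypothesis together with Lemmas~\ref{lemma:appstrong} and~\ref{rmk:abs}.\ref{rmk:abs.parallel} gives $P \Rrightarrow M'$, where $P$ is the contractum, so $M \ToHeadSigma P \Rrightarrow M'$. The problem is that naively prepending the head $\sigma$-step yields a reduction of the shape $\ToHeadSigma \, \ToHeadBetavReflTrans \, \ToHeadSigmaReflTrans \, \Inter M'$, which violates the order required by $\Rrightarrow$, since all head $\beta_v$-steps must precede the head $\sigma$-steps. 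This is resolved precisely by the commutation lemma: unfolding $P \Rrightarrow M'$ as $P \ToHeadBetavReflTrans L \ToHeadSigmaReflTrans N \Inter M'$ and applying Lemma~\ref{lemma:commutation}.\ref{lemma:commutation.transitive} to the prefix $M \ToHeadSigmaReflTrans P \ToHeadBetavReflTrans L$ postpones the head $\sigma$-step past the block of head $\beta_v$-steps, restoring a decomposition $M \ToHeadBetavReflTrans L' \ToHeadSigmaReflTrans N \Inter M'$. With the hypothesis $M \Rightarrow M'$ this gives $M \Rrightarrow M'$, completing the induction.
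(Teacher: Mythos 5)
Your proposal is correct and follows essentially the same route as the paper's proof: the same induction on the derivation, the same use of Lemmas~\ref{lemma:appstrong}, \ref{rmk:abs}.\ref{rmk:abs.parallel} and \ref{lemma:strongparallelsubstitution} to reassemble the contractum, and the same appeal to Lemma~\ref{lemma:commutation}.\ref{lemma:commutation.transitive} to repair the $\ToHeadSigma\,\ToHeadBetavReflTrans$ prefix in the $\sigma_1$/$\sigma_3$ cases. You correctly identified the commutation of head reductions as the crux, exactly as in the paper.
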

\begin{proof}
  By induction on the derivation of $M \Rightarrow M'$. Let us consider its last rule $\mathsf{r}$.
\begin{itemize}
  \item If $\mathsf{r} = \mathit{var}$ then $M = x \, M_1 \dots M_m$ and $M' = x \, M_1' \dots M_m'$ where $m \in \Nat$ and $M_i \Rightarrow M_i'$ for all $ 1 \leq i \leq m$. 
  By reflexivity of $\Rrightarrow$ (Lemma~\ref{rmk:reflexive}), $x \Rrightarrow x$. 
  By induction hypothesis, $M_i \Rrightarrow M_i'$ for all $1 \leq i\leq m$. Therefore, $M \Rrightarrow M'$ by applying Lemma~\ref{lemma:appstrong} $m$ times.

\item  If $\mathsf{r} = \lambda$ then $M = (\lambda x.M_0) M_1 \dots M_m$ and $M' = (\lambda x.M_0') M_1' \dots M_m'$ where $m \in \Nat$ and $M_i \Rightarrow M_i'$ for all $0 \leq i \leq m$. 
  By induction hypothesis, $M_i \Rrightarrow M_i'$ for all $1 \leq i \leq m$.
  According to Lemma~\ref{rmk:abs}.\ref{rmk:abs.parallel}, $\lambda x.M_0 \Rrightarrow \lambda x.M_0'$.
  So, $M \Rrightarrow M'$ by applying Lemma~\ref{lemma:appstrong} $m$ times.
  
\item  If $\mathsf{r} = \beta_v$ then $M = (\lambda x.M_0) V M_1 \dots M_m$ and $M' = M_0'\sub{V'\!}{x} M_1' \dots M_m'$ 
where $m \in \Nat$, $V \Rightarrow V'$ and $M_i \Rightarrow M_i'$ for all $0 \leq i \leq m$.
  By induction hypothesis, $V \Rrightarrow V'$ and $M_i \Rrightarrow M_i'$ for all $0 \leq i \leq m$.
  Moreover, $M_0\sub{V}{x} M_1 \dots M_m \Rrightarrow M'$ by Lemma~\ref{lemma:strongparallelsubstitution} and by applying Lemma~\ref{lemma:appstrong} $m$ times, thus $
  M_0\sub{V}{x}M_1 \!\dots M_m \ToHeadBetavReflTrans\! L \ToHeadSigmaReflTrans N \Inter M'$ for some $L, N \in \Lambda$.
  Therefore, $M \Rrightarrow M'\!$ since $M \ToHeadBetav\! M_0\sub{V}{x}M_1 \!\dots M_m$.
  
\item  If $\mathsf{r} = \sigma_1$ then $M = (\lambda x.M_0)N_0L_0 M_1 \dots M_m$ and $M' = (\lambda x.M_0'L_0')N_0' M_1' \dots M_m'$ where $m \in \Nat$, $L_0 \Rightarrow L_0'$, $N_0 \Rightarrow N_0'$ and $M_i \Rightarrow M_i'$ for any $0 \leq i \leq m$. 
  By induction hypothesis, $N_0 \Rrightarrow N_0'$ and $M_i \Rrightarrow M_i'$ for any $1 \leq i \leq m$.
  By applying the rule $\sigma_1$ for $\ToHeadSigma$, we have $M \ToHeadSigma (\lambda x.M_0L_0)N_0 M_1 \dots M_m$. By Lemma~\ref{rmk:preliminary}.\ref{rmk:preliminary.appparallel}, $M_0L_0 \Rightarrow M_0'L_0'$ and thus $\lambda x.M_0L_0 \Rrightarrow \lambda x.M_0'L_0'$ according to Lemma~\ref{rmk:abs}.\ref{rmk:abs.parallel}.
  So $(\lambda x.M_0L_0)N_0 M_1 \dots M_m \Rrightarrow M'$ by applying Lemma~\ref{lemma:appstrong} $m+1$ times, hence there are $L, N \!\in\! \Lambda$ such that $M \ToHeadSigma (\lambda x.M_0L_0)N_0 M_1 \dots M_m \allowbreak\ToHeadBetavReflTrans\! L \ToHeadSigmaReflTrans\! N \Inter M'$\!. By Lemma~\ref{lemma:commutation}.\ref{lemma:commutation.transitive}, there is $L' \!\in\! \Lambda$ such that $M \ToHeadBetavReflTrans L' \ToHeadSigmaReflTrans L \ToHeadSigmaReflTrans N \Inter M'$, so $M \Rrightarrow M'$.
\item  Finally, if $\mathsf{r} = \sigma_3$ then $M = V((\lambda x.L_0)N_0) N_1 \dots N_n$ and $M' \!= (\lambda x.V'L_0')N_0' N_1' \dots N_n'$ with $n \in \Nat$, $V \Rightarrow V'$, $L_0 \Rightarrow L_0'$ and $N_i \Rightarrow N_i'$ for any $0 \leq i \leq n$. 
  By induction hypothesis, $N_i \Rrightarrow N_i'$ for any $0 \leq i \leq n$.
  By the rule $\sigma_3$ for $\ToHeadSigma$, we have $M \ToHeadSigma (\lambda x.VL_0)N_0 N_1 \dots N_n$.
  By Lemma~\ref{rmk:preliminary}.\ref{rmk:preliminary.appparallel}, $VL_0 \Rightarrow V'L_0'$ and thus $\lambda x.VL_0 \allowbreak\Rrightarrow \lambda x.V'L_0'$ according to Lemma~\ref{rmk:abs}.\ref{rmk:abs.parallel}.
  So $(\lambda x.VL_0)N_0 N_1 \dots N_n \Rrightarrow M'$ by applying Lemma~\ref{lemma:appstrong} $n+1$ times, hence there are $L, N \!\in\! \Lambda$ such that $M \ToHeadSigma \allowbreak (\lambda x.VL_0)N_0 N_1 \dots N_n \allowbreak \ToHeadBetavReflTrans L \ToHeadSigmaReflTrans N \Inter M'$. 
  By Lemma~\ref{lemma:commutation}.\ref{lemma:commutation.transitive}, there is $L' \!\in\! \Lambda$ such that $M \ToHeadBetavReflTrans L' \ToHeadSigmaReflTrans L \ToHeadSigmaReflTrans N \Inter M'$, therefore $M \Rrightarrow M'$.
\qedhere
\end{itemize}
\end{proof}


\noindent Next Lemma~\ref{lemma:postpone} and Corollary~\ref{cor:postponeinternal} show that internal parallel reduction can be shifted after head $\vg$-reduction.

\begin{lem}[Postponement, version 1]\label{lemma:postpone}
  If $M \Inter L $ and $L \ToHeadBetav N$ (resp.\ $L \ToHeadSigma N$) then there exists $L' \in \Lambda$ such that $M \ToHeadBetav L'$ (resp.\ $M \ToHeadSigma L'$) and $L' \Rightarrow N$.
\end{lem}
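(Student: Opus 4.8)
The plan is to proceed by induction on the derivation of $M \Inter L$, casing on its last rule, and to exploit that $\Inter$ never rewrites the outermost constructor: hence the shape of $L$ pins down both the head redex fired in $L \ToHeadBetav N$ (resp.\ $L \ToHeadSigma N$) and an analogous head redex already present in $M$. The rules $\mathit{var}$ and $\lambda$ give vacuous cases, since there $L$ is a value and no head step can start from a value by Lemma~\ref{rmk:abs}.\ref{rmk:abs.novalue} and Lemma~\ref{rmk:abs}.\ref{rmk:abs.notovalue}. So the only genuine case is $\mathit{right}$, where $M = U N_0 M_1 \dots M_m$ and $L = U' N_0' M_1' \dots M_m'$ with $U, U' \in \Lambda_v$, $U \Rightarrow U'$, $N_0 \Inter N_0'$ and $M_i \Rightarrow M_i'$ for $1 \le i \le m$; here I would further split on the rule concluding the head step out of $L$.

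The easy subcase is when the head step is itself a $\mathit{right}$ step, acting inside $N_0'$: from $N_0' \ToHeadBetav N_0''$ (resp.\ $N_0' \ToHeadSigma N_0''$) and $N_0 \Inter N_0'$, the induction hypothesis yields $N_0'''$ with $N_0 \ToHeadBetav N_0'''$ (resp.\ $N_0 \ToHeadSigma N_0'''$) and $N_0''' \Rightarrow N_0''$. Re-applying the $\mathit{right}$ rule of the head reduction gives $M \ToHeadBetav U N_0''' M_1 \dots M_m =: L'$ (resp.\ with $\ToHeadSigma$), and $L' \Rightarrow N$ follows from $U \Rightarrow U'$, $N_0''' \Rightarrow N_0''$ and $M_i \Rightarrow M_i'$ by repeated Lemma~\ref{rmk:preliminary}.\ref{rmk:preliminary.appparallel}.

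The substantial subcases are those where the redex of $L$ sits at top level: the $\beta_v$ axiom forces $U' = \lambda x.Q$ with $N_0'$ a value; the $\sigma_1$ axiom forces $U' = \lambda x.Q$ with $m \ge 1$; the $\sigma_3$ axiom forces $N_0' = (\lambda x.R)T$. In each I must recover the same pattern in $M$, and the tools are two inversions. First, whenever a value reduces to an abstraction, say $U \Rightarrow U' = \lambda x.Q$ with $U \in \Lambda_v$, then $U \Inter U'$ by Lemma~\ref{rmk:abs}.\ref{rmk:abs.value} and so $U = \lambda x.Q_0$ with $Q_0 \Rightarrow Q$ by Lemma~\ref{rmk:abs}.\ref{rmk:abs.anti}; this settles the $\beta_v$ and $\sigma_1$ subcases (for $\beta_v$ one also notes that $N_0 \Inter N_0'$ with $N_0'$ a value makes $N_0$ a value with $N_0 \Rightarrow N_0'$, again via Lemma~\ref{rmk:abs}.\ref{rmk:abs.anti}). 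The $\sigma_3$ subcase is the main obstacle: from the application $N_0' = (\lambda x.R)T$ I would invert $N_0 \Inter N_0'$, whose last rule can only be $\mathit{right}$ with no trailing arguments, giving $N_0 = W P$ with $W \Rightarrow \lambda x.R$ a value and $P \Inter T$; the first inversion then upgrades this to $N_0 = (\lambda x.R_0)P$ with $R_0 \Rightarrow R$, exposing the hidden $\sigma_3$-pattern, so (after $\alpha$-renaming so that $x \notin \Fv{U}$) $M = U((\lambda x.R_0)P)M_1 \dots M_m \ToHeadSigma (\lambda x.U R_0)P M_1 \dots M_m =: L'$.

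It then remains to verify $L' \Rightarrow N$ in each top-level subcase by reassembling the parallel reductions of the constituents: Lemma~\ref{lemma:parallelsubstitution} handles the $\beta_v$ subcase (substituting the value $N_0$ into the body $Q_0$), while the $\sigma_1$ and $\sigma_3$ subcases use only repeated Lemma~\ref{rmk:preliminary}.\ref{rmk:preliminary.appparallel}; for instance in the $\sigma_3$ subcase one derives $\lambda x.U R_0 \Rightarrow \lambda x.U' R$ from $U \Rightarrow U'$ and $R_0 \Rightarrow R$, and then $(\lambda x.U R_0)P M_1 \dots M_m \Rightarrow (\lambda x.U' R)T M_1' \dots M_m' = N$ using $P \Rightarrow T$ and $M_i \Rightarrow M_i'$. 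All the value side-conditions needed to make the reconstructed redexes genuine head redexes ($N_0$ a value for $\beta_v$, $U$ a value for $\sigma_3$) fall out of the $\mathit{right}$ rule and the inversions above.
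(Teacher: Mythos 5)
Your proposal is correct and follows essentially the same route as the paper's proof: induction on the derivation of $M \Inter L$, with the $\mathit{var}$ and $\lambda$ cases dismissed via Lemmas~\ref{rmk:abs}.\ref{rmk:abs.novalue}--\ref{rmk:abs.notovalue}, the $\mathit{right}$ case split on the last rule of the head step, the inner $\mathit{right}$ subcase handled by the induction hypothesis, and the top-level $\beta_v$/$\sigma_1$/$\sigma_3$ subcases handled by the same inversions (Lemmas~\ref{rmk:abs}.\ref{rmk:abs.value} and~\ref{rmk:abs}.\ref{rmk:abs.anti}) together with Lemma~\ref{lemma:parallelsubstitution} and repeated applications of Lemma~\ref{rmk:preliminary}.\ref{rmk:preliminary.appparallel}. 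The reconstruction of the hidden $\sigma_3$-pattern in $M$ by inverting $N_0 \Inter (\lambda x.R)T$ through the $\mathit{right}$ rule is exactly the paper's argument.
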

\begin{proof}
  By induction on the derivation of $M \Inter L$. Let us consider its last rule $\mathsf{r}$.

  \begin{itemize}
  \item If $\mathsf{r} = \mathit{var}$, then $M = x = L$ which
    contradicts $L \ToHeadBetav N$ and $L \ToHeadSigma N$ by
    Lemmas~\ref{rmk:abs}.\ref{rmk:abs.novalue}-\ref{rmk:abs.notovalue}.

   \item If $\mathsf{r} = \lambda$ then $L = \lambda x.L'$ for some
    $L' \in \Lambda$, which contradicts $L \ToHeadBetav\! N$ and
    $L \ToHeadSigma\! N$ by
    Lemmas~\ref{rmk:abs}.\ref{rmk:abs.novalue}-\ref{rmk:abs.notovalue}.

  \item  Finally, if $\mathsf{r} = \mathit{right}$ then
    $M = VM_0 M_1 \dots M_m$ and $L = V'L_0 L_1 \dots L_m$ where
    $m \in \Nat$, $V \Rightarrow V'$ (so $V \Inter V'$ by
    Lemma~\ref{rmk:abs}.\ref{rmk:abs.value}), $M_0 \Inter L_0$ (thus
    $M_0 \Rightarrow L_0$ since $\Inter \, \subseteq \, \Rightarrow$)
    and $M_i \Rightarrow L_i$ for any $1 \leq i \leq m$.
    \begin{itemize}
    \item If $L \ToHeadBetav N$ then there are only two cases, depending on
      the last rule $\mathsf{r}'$ of the derivation of
      $L \ToHeadBetav N$.
      \begin{itemize}
      \item If $\mathsf{r}' = \beta_v$ then $V' = \lambda x.N_0'$,
        $L_0 \in \Lambda_v$ and $N = N_0'\sub{L_0}{x} L_1 \dots L_m$,
        thus $M_0 \in \Lambda_v$ and $V = \lambda x.N_0$ with
        $N_0 \Rightarrow N_0'$ by
        Lemma~\ref{rmk:abs}.\ref{rmk:abs.anti}.  By
        Lemma~\ref{lemma:parallelsubstitution}, 
        $N_0\sub{M_0}{x} \Rightarrow N_0'\sub{L_0}{x}$.  Let
        $L' = N_0\sub{M_0}{x} M_1 \dots M_m$: so
        $M = (\lambda x.N_0)M_0 M_1 \dots M_m \ToHeadBetav L'$ (apply
        the rule $\beta_v$ for $\ToHeadBetav$) and $L' \Rightarrow N$
        by applying
        Lemma~\ref{rmk:preliminary}.\ref{rmk:preliminary.appparallel}
        $m$ times.
      \item If $\mathsf{r}' = \mathit{right}$ then
        $N = V'N_0 L_1 \dots L_m$ with $L_0 \ToHeadBetav N_0$.  By
        induction hypothesis, there is $L_0' \in \Lambda$ such
        that $M_0 \ToHeadBetav L_0' \Rightarrow N_0$.  Let
        $L' = VL_0' M_1 \dots M_m$: so $M \ToHeadBetav L'$ (apply the
        rule $\mathit{right}$ for $\ToHeadBetav$) and
        $L' \Rightarrow N$ by applying
        Lemma~\ref{rmk:preliminary}.\ref{rmk:preliminary.appparallel}
        $m+1$ times.
      \end{itemize}

    \item If $L \ToHeadSigma N$ then there are only three cases, depending
      on the last rule $\mathsf{r}'$ of the derivation of
      $L \ToHeadSigma N$.
      \begin{itemize}
      \item If $\mathsf{r}' = \sigma_1$ then $m > 0$,
        $V' = \lambda x.N_0'$ and
        $N = (\lambda x.N_0'L_1)L_0 L_2 \dots L_m$, thus
        $V = \lambda x.N_0$ with $N_0 \Rightarrow N_0'$ by
        Lemma~\ref{rmk:abs}.\ref{rmk:abs.anti}.  Using
        Lemmas~\ref{rmk:preliminary}.\ref{rmk:preliminary.appparallel}
        and \ref{rmk:abs}.\ref{rmk:abs.parallel}, we have
        $\lambda x.N_0 M_1 \Rightarrow \lambda x.N_0'L_1$.  Let
        $L' = (\lambda x.N_0 M_1)M_0 M_2 \dots M_m$: so
        $M = (\lambda x.N_0)M_0 M_1 \dots M_m \ToHeadSigma L'$ (apply
        the rule $\sigma_1$ for $\ToHeadSigma$) and $L' \Rightarrow N$
        by applying
        Lemma~\ref{rmk:preliminary}.\ref{rmk:preliminary.appparallel}
        $m$ times.

      \item If $\mathsf{r}' = \sigma_3$ then
        $L_0 = (\lambda x.L_{01})L_{02}$ and
        $N = (\lambda x.V'L_{01})L_{02} L_1 \dots L_m$.  Since
        $M_0 \Inter (\lambda x.L_{01})L_{02}$, then by simple inspection of the rules for $\Inter$ (Definition~\ref{def:internal}) we infer that $M_0 = V_0M_{02}$ with  $V_0 \Rightarrow \lambda x.L_{01}$ and $M_{02} \Inter L_{02}$ (so $M_{02} \Rightarrow L_{02}$ because $\Inter \, \subseteq \, \Rightarrow$).  
        By Lemmas~\ref{rmk:abs}.\ref{rmk:abs.value}~and~\ref{rmk:abs}.\ref{rmk:abs.anti}, from $V_0 \Rightarrow \lambda x.L_{01}$ it follows that $V_0 = \lambda x.M_{01}$ with $M_{01} \Rightarrow L_{01}$.
	By Lemmas~\ref{rmk:preliminary}.\ref{rmk:preliminary.appparallel}
        and \ref{rmk:abs}.\ref{rmk:abs.parallel}, 
        $\lambda x.V M_{01} \Rightarrow \lambda x.V'L_{01}$.  Let
        $L' = (\lambda x.V M_{01})M_{02} M_1 \dots M_m$: so
        $M = V((\lambda x.M_{01})M_{02}) M_1 \dots M_m \ToHeadSigma
        L'$
        (apply the rule $\sigma_3$ for $\ToHeadSigma$) and
        $L' \Rightarrow N$ by applying
        Lemma~\ref{rmk:preliminary}.\ref{rmk:preliminary.appparallel}
        $m+1$ times.

      \item If $\mathsf{r}' = \mathit{right}$ then
        $N = V'N_0 L_1 \dots L_m$ with $L_0 \ToHeadSigma N_0$. By
        induction hypothesis, there exists $L_0' \in \Lambda$ such
        that $M_0 \ToHeadSigma L_0' \Rightarrow N_0$.  Let
        $L' = VL_0' M_1 \dots M_m$: so $M \ToHeadSigma L'$ (apply the
        rule $\mathit{right}$ for $\ToHeadSigma$) and
        $L' \Rightarrow N$ by applying
        Lemma~\ref{rmk:preliminary}.\ref{rmk:preliminary.appparallel}
        $m+1$ times.
      \qedhere
      \end{itemize}
    \end{itemize}
  \end{itemize}

\end{proof}

\begin{cor}[Postponement, version 2]
\label{cor:postponeinternal}
  If $M \Inter L $ and $L \ToHeadBetav N$ (resp.~$L \ToHeadSigma N$), then there exist $L',L'' \in \Lambda$ such that $M \ToHeadBetavTrans L' \ToHeadSigmaReflTrans L'' \Inter N$ (resp.~$M \ToHeadBetavReflTrans L' \ToHeadSigmaReflTrans L'' \Inter N$).
\end{cor}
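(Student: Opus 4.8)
The plan is to reduce this to the single-step postponement already established as Lemma~\ref{lemma:postpone}, and then to ``iterate'' the residual parallel reduction into a genuine head reduction by invoking the Key Lemma. Consider first the case $L \ToHeadBetav N$. By Lemma~\ref{lemma:postpone} there is a term $L_1$ with $M \ToHeadBetav L_1$ and $L_1 \Rightarrow N$. The residual $\Rightarrow$-step is where I gain traction: by the Key Lemma (Lemma~\ref{lemma:main}), $L_1 \Rightarrow N$ upgrades to $L_1 \Rrightarrow N$, which by the definition of strong parallel reduction (Definition~\ref{def:internal}) unfolds as $L_1 \ToHeadBetavReflTrans L' \ToHeadSigmaReflTrans L'' \Inter N$ for some $L', L'' \in \Lambda$. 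Prepending the single head $\beta_v$-step $M \ToHeadBetav L_1$ yields $M \ToHeadBetavTrans L' \ToHeadSigmaReflTrans L'' \Inter N$, which is exactly the desired conclusion: the initial step is what guarantees the transitive (rather than merely reflexive-transitive) closure of $\ToHeadBetav$.

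For the case $L \ToHeadSigma N$ the same opening move applies: Lemma~\ref{lemma:postpone} gives $L_1$ with $M \ToHeadSigma L_1$ and $L_1 \Rightarrow N$, and the Key Lemma unfolds $L_1 \Rrightarrow N$ as $L_1 \ToHeadBetavReflTrans L' \ToHeadSigmaReflTrans L'' \Inter N$. This time, however, the naive concatenation produces $M \ToHeadSigma L_1 \ToHeadBetavReflTrans L'$, i.e.~a head $\sigma$-step sitting \emph{before} some head $\beta_v$-steps, whereas the target shape demands all head $\beta_v$-steps first. The fix is the commutation of head reductions: viewing $M \ToHeadSigma L_1$ as $M \ToHeadSigmaReflTrans L_1$ and applying Lemma~\ref{lemma:commutation}.\ref{lemma:commutation.transitive} to $M \ToHeadSigmaReflTrans L_1 \ToHeadBetavReflTrans L'$ produces some $L_2$ with $M \ToHeadBetavReflTrans L_2 \ToHeadSigmaReflTrans L'$. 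Since $\ToHeadSigmaReflTrans$ is transitive, the tail $L_2 \ToHeadSigmaReflTrans L' \ToHeadSigmaReflTrans L''$ collapses to $L_2 \ToHeadSigmaReflTrans L''$, giving $M \ToHeadBetavReflTrans L_2 \ToHeadSigmaReflTrans L'' \Inter N$ as required (with $L_2$ playing the role of the witness $L'$ in the statement).

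I expect the $\sigma$-case commutation to be the only delicate point. The postponement of internal reductions (Lemma~\ref{lemma:postpone}) naturally fires whichever head redex is available, but it does not by itself respect the ``head $\beta_v$ before head $\sigma$'' ordering baked into the definition of $\Rrightarrow$ and demanded by the statement. Everything therefore hinges on Lemma~\ref{lemma:commutation}, which is precisely the ingredient that lets an initial head $\sigma$-step slide past subsequent head $\beta_v$-steps; without it the two closures could not be arranged in the prescribed order. The remainder is routine bookkeeping with the reflexive-transitive closures and with the single-step/transitive-closure distinction in the $\beta_v$-case.
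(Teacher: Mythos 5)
Your proposal is correct and follows exactly the paper's own argument: the paper also derives this corollary from Lemma~\ref{lemma:postpone} (single-step postponement) and the Key Lemma~\ref{lemma:main}, applying Lemma~\ref{lemma:commutation}.\ref{lemma:commutation.transitive} in the $\ToHeadSigma$ case to slide the initial head $\sigma$-step past the head $\beta_v$-steps. Your write-up merely spells out the bookkeeping that the paper compresses into one line.
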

\begin{proof}
  Immediate from Lemmas~\ref{lemma:postpone} and~\ref{lemma:main}, applying Lemma~\ref{lemma:commutation}.\ref{lemma:commutation.transitive} if $L \ToHeadSigma N$.
\end{proof}


\proof[{\bf Proof of Sequentialization (Theorem \ref{thm:trifurcate} on page \pageref{thm:trifurcate})}]$\;$\\
  By Lemma~\ref{rmk:preliminary}.\ref{rmk:preliminary.preConfluence}, 
  $M \!\Rightarrow^*\! M'$ and thus there are $m \in \Nat$ and $M_0, \dots, M_m \in \Lambda$ such that $M = M_0$, $M_m = M'$ and $M_i \Rightarrow M_{i+1}$ for any $0 \leq i < m$. 
  We prove by induction on $m \in \Nat$ that there are $L,N \in \Lambda$ such that $M \ToHeadBetavReflTrans L \ToHeadSigmaReflTrans N \!\InterTrans\! M'$, so $N \ToIntTrans M'$ by Lemma~\ref{rmk:preliminary}.\ref{rmk:preliminary.interinclusion}. 
  
  \begin{itemize}
  \item If $m = 0$ then $M = M_0 = M'$ and hence we conclude by taking
    $L = M' = N$.

  \item  Suppose $m > 0$. By induction hypothesis applied to
    $M_1 \Rightarrow^* M'$, there are $L', N' \in \Lambda$ such that
    $M_1 \ToHeadBetavReflTrans L' \ToHeadSigmaReflTrans N' \InterTrans
    M'$.
    By applying Lemma~\ref{lemma:main} to $M \Rightarrow M_{1}$, there exist
    $L_0, N_0 \in \Lambda$ such that
    $M \ToHeadBetavReflTrans L_0 \ToHeadSigmaReflTrans N_0 \Inter
    M_1$.
    By applying Corollary~\ref{cor:postponeinternal} repeatedly, there
    is $N \in \Lambda$ such that
    $N_0 \ToHeadReflTrans N \Inter N'$, and hence
    $M \ToHeadReflTrans N \InterTrans M'$.  According to
    Lemma~\ref{lemma:commutation}.\ref{lemma:commutation.general},
    there is $L \in \Lambda$ such that
    $M \ToHeadBetavReflTrans L \ToHeadSigmaReflTrans N \InterTrans
    M'$.  \qed
  \end{itemize}

\section{Standardization}\label{sect:sos}

This section is devoted to prove the 
standardization theorem for $\lambda_v^\sigma$, stating that if $N \to_\vg^* L$ then there is a ``standard'' $\vg$-reduction sequence from $N$ to $L$ (Theorem~\ref{thm:standardization}).
Roughly speaking, a reduction sequence is \emph{standard} if the ``positions'' of the 
reduced redexes move from left to right.%
\footnote{In ordinary $\lambda$-calculus, standard sequences (for $\beta$-reduction) can be described as follows: ``After each contraction of a redex $R$, index the $\lambda$'s of redexes to the left of $R$. Redexes with indexed $\lambda$'s are not allowed to be contracted anymore. Indexed $\lambda$'s remain indexed after contractions
of other redexes'' \cite[p.\,297]{barendregt84nh}.}
Actually, in a call-by-value $\lambda$-calculus (such as $\lambda_v^\sigma$), this ``left-to-right'' order is more delicate to define, since $\beta$-redexes can be fired 
only after their arguments have been reduced to a value,%
\footnote{E.g., according to \cite{barendregt84nh,Crary09} (and us), in $\lambda_v$ the $\beta_v$-reduction sequence $\Delta (II) \ToHeadBetav \Delta I \ToHeadBetav II$ is standard, even if the (only) $\beta_v$-redex 
in $\Delta(II)$ seems to be ``on the right'' of the $\beta_v$-redex $\Delta I$ 
reduced later. 
The subtlety is that in $\lambda_v$, 
unlike $\lambda$,  new redexes can be created in the following way: 
a $\beta_v$-reduction step in the argument of a $\beta$-(not $\beta_v$-)redex $R$ may turn the argument itself into a value, turning 
$R$ into a $\beta_v$-redex.}
but the essence is the same: a standard reduction sequence begins with head reduction steps, and then continues with internal reduction steps selecting redexes 
according to a ``left-to-right'' order.
Our choice to prioritize head reduction over internal reduction (followed also by \cite{Plotkin75,Crary09} for $\lambda_v$) entails that in a standard sequence some 
changes of positions from right to left for the selected redexes
may take place when passing from the head reduction phase to the internal reduction one: e.g. according to \cite{Plotkin75,Crary09} (and us), in $\lambda_v$ the sequence $(\lambda x. I\Delta)(y(II)) \ToHeadBetav (\lambda x.I\Delta) (zI) 
\to_{\beta_v} (\lambda x.\Delta)(zI)$ is standard, even if the $\beta_v$-redex $II$ is ``on the right'' of the $\beta_v$-redex $I\Delta$ (fired later as internal, since it is under the scope of a $\lambda$).
Actually, in $\lambda_v^\sigma$ another intricacy arises in defining a standard order: there are not only $\beta$-redexes but also $\sigma$-redexes and they may overlap. Our approach is to prioritize head $\beta_v$-redexes over head $\sigma$-redexes 
(this idea extends iteratively to subterms).

We 
define the notion of standard reduction sequence by closely following the 
approach used in \cite{Plotkin75,Crary09}%
{, so the redex-order is defined by induction on the structure of terms, without involving any (tricky) notion of residual redex.} 




\begin{defi}[Standard head sequence]
\label{def:head-seq}
  For any $k, m \in \Nat$ with $k \leq m$, 
  a \emph{standard head sequence}, denoted by $\sequence{M_0 ,\dots,M_k, \dots, M_m}{\head}$, is a finite sequence $(M_0, \dots, M_k, \dots, M_m)$ of terms
  such that $M_i \ToHeadBetav M_{i+1}$ for any $0 \leq i < k$, and
  $M_i \ToHeadSigma M_{i+1}$ for any $k\leq i < m$.
\end{defi}
In other words, a standard head sequence is a head $\vg$-reduction sequence where the head $\beta_v$-reduction steps precede all the head $\sigma$-reduction steps, without any order between head $\sigma_1$- and head $\sigma_3$-reduction steps.
Note that when $k = m$ (resp.~$k = 0$) then the standard head sequence consists only of head $\beta_v$-(resp.~head $\sigma$-)reduction steps.
It is easy to check that $\sequence{M}{\head}$ for every 
$M \in \Lambda$ (apply Definition~\ref{def:head-seq} with $m = 0$).

Using the above definition of standard head sequence, we define by mutual induction the notions of standard sequence and standard inner sequence of terms (Definition~\ref{def:std}). 

\begin{defi}[Standard and standard inner sequences]\label{def:std}
\emph{Standard sequences} and \emph{standard} \emph{inner sequences} of terms, denoted by $\sequence{M_0 ,\dots, M_m}{\mathit{std}}$ and $\sequence{M_0 ,\dots, M_m}{\mathit{in}}$ respectively (with $m \in \Nat$ and $M_0, \dots, M_m \in \Lambda$), are defined by mutual induction as follows:
\begin{enumerate}
  \item \label{1} if $\sequence{M_0 ,\dots, M_m}{\head}$ and $\sequence{M_m,\dots, M_{m+n}}{\mathit{in}}$, then  
  $\sequence{M_0 ,\dots, M_m, \dots, M_{m+n}}{\mathit{std}}$
  ;  
  \item \label{2} $\sequence{M}{\mathit{in}}$, for every $M \in \Lambda$;
  \item \label{3} if $\sequence{M_0 ,\dots, M_m}{\mathit{std}}$  
then $\sequence{\lambda z.M_0, \dots, \lambda z.M_m}{\mathit{in}}$
  ;
  \item \label{4} if  $\sequence{V_0, \dots, V_h}{\mathit{std}}$ 
  and $\sequence{M_0, \dots, M_m}{\mathit{in}}$, then 
  $\sequence{V_0M_0, \dots, V_h M_0, \allowbreak\dots, V_h M_m}{\mathit{in}}$ 
  (where $V_0, \dots, V_h \in \Lambda_v$); 
  \item \label{5} if  $\sequence{M_0, \dots, M_m}{\mathit{in}}$, $\sequence{L_0 ,\dots, L_l}{\mathit{std}}$  and $M_0\not\in\Lambda_v$, then 
$\sequence{M_0 L_0, \dots, M_m L_0, \dots ,M_m L_l}{\mathit{in}}$
.
  \end{enumerate}
\end{defi}

\begin{rem}\label{rmk:standard-to-reduction}
  It is easy to show (by mutual induction on the definition of standard and standard inner sequences) that, given $n \in \Nat$ and $N_0, \dots, N_n \in \Lambda$, if $\sequence{N_0, \dots, N_n}{\mathit{in}}$ (resp.~$\sequence{N_0, \dots, N_n}{\mathit{std}}$) then $N_i \ToInt N_{i+1}$ (resp.~$N_i \to_\vg N_{i+1}$) for any $0 \leq i < n$.
\end{rem}

In fact, the presence of a standard or standard inner sequence means that not only there is a $\vg$-reduction sequence or an internal $\vg$-reduction sequence, respectively, but also that this $\vg$-reduction sequence is performed 
selecting $\vg$-redexes according to the aforementioned ``left-to-right'' order, up to some intricacies already pointed out on p.~\pageref{def:head-seq}
.
Indeed, in Definition~\ref{def:std}, the rule \eqref{1}\,---\,the only one yielding 
standard sequences\,---\,says that standard sequences start by 
reducing first head $\beta_v$-redexes, then head $\sigma$-redexes and then internal $\vg$-redexes, where the head $\beta_v$-redex in a term is its (unique, if any) \emph{leftmost-outermost} $\beta_v$-redex not under the scope of $\lambda$'s, and head $\sigma$-redexes in a term are its (possibly not unique) \emph{leftmost-outermost} $\sigma$-redexes not under the scope of $\lambda$'s.
Rules \eqref{4}-\eqref{5} in Definition~\ref{def:std} intuitively mean that the positions of the $\vg$-redexes 
reduced in a standard inner sequence move from left to right.

In order to give informative examples about standard and standard inner sequences
, for any $\Rule \in \{\beta_v, \sigma_1, \sigma_3, \sigma\}$ we set 
$\ToIntRule \,=\, \to_{\Rule} \cap \ToInt$.
 
\begin{exa}
  Let $L = (\lambda y.Ix)(z (\Delta I))(II)$: one has that
  $L \!\ToIntBetav\! (\lambda y.Ix)(z(\Delta I))I \!\ToIntBetav\! (\lambda y.x)(z(\Delta I))I$ and $L \ToHeadSigmaOne (\lambda y.Ix(II))(z(\Delta I)) \ToHeadBetav (\lambda y.Ix(II))(z(II))$ 
  are not standard sequences; 
  but $L \!\ToIntBetav\! (\lambda y.Ix)(z(\Delta I))I$ and $L \ToHeadBetav (\lambda y.Ix)(z(II))(II) \ToHeadBetav (\lambda y.Ix)(zI)(II) \allowbreak\ToHeadSigmaOne (\lambda y.Ix(II))(zI) \allowbreak\ToIntBetav (\lambda y.x(II))(zI) \ToIntBetav (\lambda y.xI)(zI)$ 
  are standard sequences.
\end{exa}

    

The next lemma states that standard head and inner sequences are standard sequences.

\begin{lem}\label{lemma:inner-standard} 
  Given $n \in \Nat$, if $\sequence{N_0 ,\dots, N_n}{in}$ (resp.\ $\sequence{N_0 , \dots, N_n}{\head}$) then
  $\sequence{N_0 ,\dots, N_n}{std}$.
\end{lem}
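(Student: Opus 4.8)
The plan is to read off both implications directly from the defining clauses of Definition~\ref{def:std}, degenerating one of the two component sequences to a trivial one-term sequence in each case; no genuine induction is needed, so this lemma is really a consistency check confirming that the base cases of the mutual induction in Definition~\ref{def:std} embed head and inner sequences into standard ones.

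First I would treat the head case. Given $\sequence{N_0, \dots, N_n}{\head}$, I would invoke clause~\eqref{2} of Definition~\ref{def:std} to obtain the one-term standard inner sequence $\sequence{N_n}{\mathit{in}}$, and then apply clause~\eqref{1} taking the head part to be the whole given sequence (so that, in the notation of that clause, $m = n$ and $M_i = N_i$) and the inner part to be $\sequence{N_n}{\mathit{in}}$ (so that its own length is $0$). Clause~\eqref{1} then yields $\sequence{N_0, \dots, N_n}{\mathit{std}}$, which is exactly what is required.

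Next I would treat the inner case. Given $\sequence{N_0, \dots, N_n}{\mathit{in}}$, I would use the one-term standard head sequence $\sequence{N_0}{\head}$, which is valid by Definition~\ref{def:head-seq} with $m = 0$ (as remarked immediately after that definition). Applying clause~\eqref{1} with this trivial head part (so $m = 0$ and $M_0 = N_0$) and with the inner part equal to the whole given sequence $\sequence{N_0, \dots, N_n}{\mathit{in}}$ again produces $\sequence{N_0, \dots, N_n}{\mathit{std}}$.

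There is no real obstacle: the only point requiring care is the index bookkeeping in clause~\eqref{1}, where the shared term $M_m$ is simultaneously the last term of the head part and the first term of the inner part. In the head case this shared term is $N_n$ (with the inner part trivial), while in the inner case it is $N_0$ (with the head part trivial). Matching these up correctly is the entire content of the argument.
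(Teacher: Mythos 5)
Your proposal is correct and matches the paper's own proof exactly: both cases are handled by degenerating one component of clause~\eqref{1} of Definition~\ref{def:std} to a one-term sequence (the trivial head sequence $\sequence{N_0}{\head}$ for the inner case, and the trivial inner sequence $\sequence{N_n}{\mathit{in}}$ from clause~\eqref{2} for the head case). No further comment is needed.
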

\begin{proof}
  $\sequence{N_0}{\head}$ (resp.\ $\sequence{N_n}{in}$ by Definition~\ref{def:std}.\ref{2}), 
so $\sequence{N_0 ,\dots, N_n}{std}$ by Definition~\ref{def:std}.\ref{1}.
\end{proof}

In particular, $\sequence{N}{\mathit{std}}$ for any $N \in \Lambda$: apply Definition~\ref{def:std}.\ref{2} and Lemma~\ref{lemma:inner-standard} for $n = 0$.

Note that the concatenation of two standard sequences is not standard, in general: take for instance a standard inner sequence followed by a standard head sequence.

For all $V_0, \dots, V_h \in \Lambda_v$, $\sequence{V_0, \allowbreak\dots, V_h}{\mathit{std}}$ iff $\sequence{V_0, \dots, V_h}{\mathit{in}}$: 
the left-to-right implication follows from 
the rule \eqref{1} of Definition~\ref{def:std} (the only one yielding standard sequences) and Remarks~\ref{rmk:abs}.\ref{rmk:abs.novalue}-\ref{rmk:abs.notovalue} ($\sequence{V_0, \dots, V_h}{\head}$ is impossible for $h >0$); 
the converse holds by Lemma~\ref{lemma:inner-standard}.

We can now state and prove the standardization theorem for $\lambda_v^\sigma$, one of the main result of this paper: if $M$ $\vg$-reduces to $M'$ then there exists a standard sequence from $M$ to $M'$.
The idea to build this standard sequence is to sequentialize (as stated in Theorem~\ref{thm:trifurcate}) the $\vg$-reduction sequence from $M$ to $M'$ iteratively according to a ``left-to-right'' order.

\begin{thm}[Standardization] Let $M$ and $M'$ be terms.
\label{thm:standardization}
  \begin{enumerate}
    \item\label{thm:standardization.head} If $M \ToHeadReflTrans M' $ then there is a standard head sequence $\sequence{M ,\dots, M'}{\head}$.
    \item \label{thm:standardization.inner} If $M \ToIntTrans M' $ then there is a standard inner sequence $\sequence{M ,\dots, M'}{\mathit{in}}$.
    \item \label{thm:standardization.standard} If $M  \to_\vg^*  M'$ then there is a standard sequence $\sequence{M ,\dots, M'}{\mathit{std}}$.
  \end{enumerate}
\end{thm}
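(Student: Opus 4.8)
The plan is to establish \ref{thm:standardization.head} outright, reduce \ref{thm:standardization.standard} to \ref{thm:standardization.inner}, and carry the real weight in \ref{thm:standardization.inner}. For \ref{thm:standardization.head}: if $M \ToHeadReflTrans M'$, then Lemma~\ref{lemma:commutation}.\ref{lemma:commutation.general} supplies $N$ with $M \ToHeadBetavReflTrans N \ToHeadSigmaReflTrans M'$, which is \emph{verbatim} a standard head sequence $\sequence{M,\dots,M'}{\head}$ by Definition~\ref{def:head-seq}. For \ref{thm:standardization.standard}: I feed $M \to_\vg^* M'$ into Sequentialization (Theorem~\ref{thm:trifurcate}) to get $M \ToHeadBetavReflTrans L \ToHeadSigmaReflTrans N \ToIntTrans M'$; the prefix is a standard head sequence $\sequence{M,\dots,N}{\head}$, item~\ref{thm:standardization.inner} applied to $N \ToIntTrans M'$ yields a standard inner sequence $\sequence{N,\dots,M'}{\mathit{in}}$, and Definition~\ref{def:std}.\ref{1} glues them into $\sequence{M,\dots,M'}{\mathit{std}}$.

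The core is \ref{thm:standardization.inner}. Given $M \ToIntTrans M'$, I write $M = VN_1\dots N_n$ (Remark~\ref{rem:shape}) and use the case analysis behind Lemma~\ref{rmk:preliminary}.\ref{rmk:preliminary.interinclusion}: since no step is a head step, the spine is preserved, so $M' = V'N_1'\dots N_n'$ with $V \to_\vg^* V'$ occurring under the leading $\lambda$ (whence $V,V'$ are values, Remark~\ref{rmk:fromvalue}), $N_1 \ToIntTrans N_1'$ internal, and $N_i \to_\vg^* N_i'$ for $i \ge 2$. As these subreductions sit in pairwise disjoint subterms they commute, so I may reorder the sequence to reduce one application layer entirely before the next, matching the phased shape demanded by clauses \ref{4}--\ref{5}. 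I then recurse on the spine: for $n=0$ with $V=\lambda z.P$ I invoke item~\ref{thm:standardization.standard} on $P \to_\vg^* P'$ and close with Definition~\ref{def:std}.\ref{3} (the variable case $V=x$ is Definition~\ref{def:std}.\ref{2}); for $n=1$ I peel $M=VN_1$, combining item~\ref{thm:standardization.standard} on $V\to_\vg^* V'$ (automatically a sequence of values) with the inner sequence for $N_1 \ToIntTrans N_1'$ via Definition~\ref{def:std}.\ref{4}; for $n\ge 2$ I peel the last argument as $M=PN_n$ with $P=VN_1\dots N_{n-1}\notin\Lambda_v$, apply item~\ref{thm:standardization.inner} to $P\ToIntTrans P'$ and item~\ref{thm:standardization.standard} to $N_n\to_\vg^* N_n'$, and close with Definition~\ref{def:std}.\ref{5}. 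Every recursive appeal issued here lands on a \emph{proper subterm} of $M$.

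The main obstacle is the well-foundedness of this mutual recursion, and the subtle point is the bounce \ref{thm:standardization.standard}$\to$\ref{thm:standardization.inner}: after sequentialization the internal residual $N$ is a \emph{reduct} of $M$, not a subterm, and head $\beta_v$-steps can duplicate redexes, so $N$ may be strictly larger than $M$ (e.g.\ $\Delta(\lambda z.II)\ToHeadBetav(\lambda z.II)(\lambda z.II)$). Hence a measure on the size of the \emph{source} collapses. The fix is threefold. First, count reductions by their number of \emph{parallel} steps rather than single $\vg$-steps: via Lemma~\ref{rmk:preliminary}.\ref{rmk:preliminary.preConfluence} and~\ref{rmk:preliminary}.\ref{rmk:preliminary.interinclusion} I read $M\to_\vg^* M'$ as $m$ $\Rightarrow$-steps and $M\ToIntTrans M'$ as some $k$ $\Inter$-steps, and one checks that the internal residual from Theorem~\ref{thm:trifurcate} needs at most $m$ such steps while the spine components of a $k$-step internal reduction need at most $k$ each (duplicated redexes are absorbed into one parallel step, which is exactly why the count is stable). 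Second, measure by the size of the \emph{target} $M'$, which Sequentialization leaves fixed and which the spine recursion of \ref{thm:standardization.inner} strictly shrinks. I then run the induction on the lexicographic triple $(\text{parallel-step count},\ \mathrm{size}(M'),\ f)$ with $f$ flagging \ref{thm:standardization.standard} above \ref{thm:standardization.inner}: the bounce \ref{thm:standardization.standard}$\to$\ref{thm:standardization.inner} keeps the first two coordinates and strictly lowers $f$, while every call emitted from \ref{thm:standardization.inner} goes to a proper subterm, strictly decreasing $\mathrm{size}(M')$ with the step count non-increasing. Finally, Lemma~\ref{lemma:inner-standard} records that inner and head sequences are standard, which is what lets Definition~\ref{def:std}.\ref{1} and the base cases typecheck.
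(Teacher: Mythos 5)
Your proof is correct and follows essentially the same route as the paper: item~(\ref{thm:standardization.head}) is exactly Lemma~\ref{lemma:commutation}.\ref{lemma:commutation.general}, item~(\ref{thm:standardization.standard}) is Sequentialization plus items~(\ref{thm:standardization.head})--(\ref{thm:standardization.inner}) glued by Definition~\ref{def:std}.\ref{1}, and item~(\ref{thm:standardization.inner}) is a recursion on the applicative spine of the target using rules~(\ref{2})--(\ref{5}) together with Sequentialization on the sub-reductions. The only difference is bookkeeping: the paper proves item~(\ref{thm:standardization.inner}) by plain structural induction on the \emph{target} $M'$, inlining the content of item~(\ref{thm:standardization.standard}) for the proper subterms, so the well-foundedness apparatus you build around the bounce (\ref{thm:standardization.standard})$\to$(\ref{thm:standardization.inner})\,---\,parallel-step counts and the priority flag\,---\,is sound but unnecessary, since the size of the target (which Sequentialization leaves fixed and which every recursive call from item~(\ref{thm:standardization.inner}) strictly decreases) already suffices as the measure.
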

\begin{proof}
Theorem~\ref{thm:standardization}.\ref{thm:standardization.standard} is an immediate consequence of Theorems~\ref{thm:standardization}.\ref{thm:standardization.head}-\ref{thm:standardization.inner} and Theorem~\ref{thm:trifurcate}: indeed, if $M \to_\vg^* M'$ then there is a term $M''$ such that $M \ToHeadReflTrans M'' \ToIntTrans M'$ by sequentialization (Theorem~\ref{thm:trifurcate}), moreover $M \ToHeadReflTrans M''$ implies that there is a sequence $\sequence{M ,\dots, M''}{\head}$ by Theorem~\ref{thm:standardization}.\ref{thm:standardization.head}, and $M'' \ToIntTrans M'$ implies that there is a sequence $\sequence{M'', \dots, M'}{in}$ by Theorem~\ref{thm:standardization}.\ref{thm:standardization.inner}.
According to the rule \eqref{1} of Definition~\ref{def:std}, $\sequence{M, \dots, M'', \dots, M'}{std}$.

It remains to prove Theorems~\ref{thm:standardization}.\ref{thm:standardization.head}-\ref{thm:standardization.inner}.
Now, Theorem~\ref{thm:standardization}.\ref{thm:standardization.head} is exactly our Lemma~\ref{lemma:commutation}.\ref{lemma:commutation.general}, already proved.

Theorem~\ref{thm:standardization}.\ref{thm:standardization.inner} is proved by induction on $M' \in \Lambda$, using Theorem~\ref{thm:standardization}.\ref{thm:standardization.head}.
  \begin{itemize}
    \item If  $M'= z$ then $M=z$ by Lemma
    \ref{rmk:preliminary}.\ref{rmk:preliminary.interExpansion}, thus $\sequence{z}{in}$ by the rule \eqref{2} of Definition~\ref{def:std}.
    
    \item If $M'= \lambda z.L'$ then there is $L \in \Lambda$ such that $M =\lambda z.L$ and $L \to_\vg^* L'$, by 
    Lemma~\ref{rmk:preliminary}.\ref{rmk:preliminary.interExpansion}. 
    By sequentialization (Theorem~\ref{thm:trifurcate}), there exists a term $N$ such that $L \ToHeadReflTrans N \ToIntTrans L'$.
    By Theorem~\ref{thm:standardization}.\ref{thm:standardization.head}, from $L \ToHeadReflTrans N$ it follows that there is a sequence $\sequence{L, \dots, N}{\head}$.
    By induction hypothesis applied to $N \ToIntTrans L'$, there is a sequence $\sequence{N,\dots,L'}{\mathit{in}}$.
    According to the rule \eqref{1} of Definition~\ref{def:std}, $\sequence{L,\dots, N, \dots,L'}{\mathit{std}}$.
    By the rule \eqref{3} of Definition~\ref{def:std}, $\sequence{\lambda  z.L , \dots, \lambda z.N, \dots, \lambda z.L'}{in}$, that is $\sequence{M, \dots, M'}{in}$.
    
    \item If $M'=N'L'$ then $M = NL$ for some $N, L \in \Lambda$ by Remark~\ref{rmk:fromvalue}, since $
    \ToIntTrans \,\subseteq\, \to_\vg^*$ and $M' \notin \Lambda_v$.
    By Lemma~\ref{rmk:preliminary}.\ref{rmk:preliminary.interinclusion}, 
    $NL \InterTrans N'L'$; 
    clearly, for each step of $\Inter$ in $NL \InterTrans N'L'$, the last rule of its derivation is an instance of the rule $\mathit{right}$ for $\Inter$ (the other rules deal with values, see Definition \ref{def:internal}).
    There are two sub-cases.
    \begin{itemize}
      \item If $N \in \Lambda_v$ then $N \Rightarrow^* N'$ and $L \InterTrans L'$, so $N \to_\vg^* N'$ and $L \ToIntTrans L'$ by Lemmas~\ref{rmk:preliminary}.\ref{rmk:preliminary.preConfluence}-\ref{rmk:preliminary.interinclusion}.
      By sequentialization (Theorem~\ref{thm:trifurcate}), there is a term $N''$ such that $N \ToHeadReflTrans N'' \ToIntTrans N'$, and actually $N = N''$ by Lemmas~\ref{rmk:abs}.\ref{rmk:abs.novalue}-\ref{rmk:abs.notovalue} since $N$ is a value; thus, $N \ToIntTrans N'$.
      By induction hypothesis applied to $N \ToIntTrans N'$ and $L \ToIntTrans L'$, there are sequences $\sequence{N, \dots, N'}{in}$ (hence $\sequence{N, \dots, N'}{std}$ by Lemma~\ref{lemma:inner-standard}) and $\sequence{L,\dots,L'}{\mathit{in}}$.
      In particular, according to Remark~\ref{rmk:standard-to-reduction}, if $\sequence{N, \dots, N'}{std} = (N_0, \dots, N_n)$ for some $n \in \Nat$ and $N_0, \dots, N_n \in \Lambda$ (with $N_0 = N$ and $N_n = N'$), then $N_i \to_\vg N_{i+1}$ for all $0 \leq i < n$, and hence $N_0, \dots, N_n$ (that is, all the terms in $\sequence{N, \dots, N'}{std}$) are values by Remark~\ref{rmk:fromvalue}, since $N_0$ is a value.
      By applying the rule \eqref{4} of Definition~\ref{def:std}, 
      $\sequence{NL,\dots,N'L, \dots, N' L' }{in}$. 
	
      \item If $N \notin \Lambda_v$ (i.e.~$N = V\!M_{1} \dots M_{m}$ with $m > 0$, by Remark~\ref{rem:shape}) then $N \InterTrans\! N'$ and $L \Rightarrow^*\! L'$, so $N \ToIntTrans\! N'$ and $L \to_\vg^* L'$ 
      by Lemmas~\ref{rmk:preliminary}.\ref{rmk:preliminary.preConfluence}-\ref{rmk:preliminary.interinclusion}. 
      By sequentialization (Theorem~\ref{thm:trifurcate}), $L \ToHeadReflTrans L'' \ToIntTrans L'$ for some term $L''$.
      By Theorem~\ref{thm:standardization}.\ref{thm:standardization.head}, there is a sequence $\sequence{L, \dots, L''}{\head}$.
      By induction hypothesis applied to $N \ToIntTrans N'$ and $L'' \ToIntTrans L'$, there are sequences $\sequence{N, \dots, N'}{in}$ and $\sequence{L'',\dots,L'}{\mathit{in}}$.
      According to the rule \eqref{1} of Definition~\ref{def:std}, $\sequence{L ,\dots, L'', \dots, L'}{std}$.
      By applying the rule \eqref{5} of Definition~\ref{def:std}, 
      $\sequence{NL ,\dots,N' L ,\dots, N' L'}{in}$, that is $\sequence{M, \dots, M'}{in}$.
    \qedhere
    \end{itemize}
  \end{itemize}
\end{proof}


\noindent Theorem~\ref{thm:standardization} gives only a \emph{weak standardization}: it rearranges a $\vg$-reduction sequence from $M$ to $M'$ so as to obtain a standard sequence 
from $M$ to $M'$, but a standard sequence selects $\vg$-redexes following a \emph{partial} (and not total, in general) \emph{order} on $\vg$-redexes.
Indeed, a standard sequence is not uniquely determined by its starting and end terms, 
and this is essentially due to 
two facts (exemplified by Examples~\ref{ex:sigma-no-total}-\ref{ex:betav-sigma-no-total}
, respectively):
\begin{enumerate}
  \item as already remarked on pp.~\pageref{def:headsigma}-\pageref{fig:sigma-overlap}, head $\sigma$-redexes may overlap and be incomparable;
  \item in a standard (head) sequence, there is no restriction on when ending a head $\beta_v$-reduction phase and beginning a head $\sigma$-reduction phase.
\end{enumerate}

\begin{exa} 
\label{ex:sigma-no-total}
  The following $\sigma$-reduction sequences (fired $\sigma$-redexes are underlined)
  \begin{align*}
    \underline{I (\Delta I) I} &\ToHeadSigmaOne \underline{(\lambda x. xI) (\Delta I)} \ToHeadSigmaThree (\lambda z. (\lambda x. xI) (zz)) I \qquad \textup{and}\\
    \underline{I (\Delta I}) I  &\ToHeadSigmaThree \underline{(\lambda z. I (zz))II} \ToHeadSigmaOne (\lambda z. \underline{I (zz) I})I \ToIntSigmaOne (\lambda z. (\lambda x. xI) (zz)) I
  \end{align*}
 are both\,---\,different\,---\,standard sequences from $I (\Delta I) I$ to $(\lambda z. (\lambda x. xI) (zz)) I$.
\end{exa}

\begin{exa}
  \label{ex:betav-sigma-no-total}
  The following head $\vg$-reduction sequences (fired $\vg$-redexes are underlined)
  \begin{align*}
    I( \underline{\Delta\Delta}) I &\ToHeadBetav \underline{I(\Delta\Delta) I} \ToHeadSigmaOne (\lambda x.x I)(\Delta\Delta) & &\textup{and} & 
   \underline{I(\Delta\Delta) I} &\ToHeadSigmaOne (\lambda x.x I)(\Delta\Delta)
  \end{align*}
   are both\,---\,different\,---\,standard (head) sequences from $I (\Delta \Delta) I$ to $(\lambda x. xI) (\Delta\Delta)$.
\end{exa}


Finally, we compare our notion of standardization with that for Plotkin's $\lambda_v$ given in \cite[p.~137]{Plotkin75} and \cite{Crary09}.
To make the comparison possible we neglect $\sigma$-reduction and we recall that
$\ToHeadBetav$ is exactly Plotkin's left-reduction \cite[p.~136]{Plotkin75}. 
As remarked in \cite[
p.~149]{herbelin09lncs}, 
both 
$(\lambda z.II) (II) \ToIntBetav (\lambda z.I) (II)  \ToHeadBetav (\lambda z.I) I$ 
and
$(\lambda z.II) (II) \ToHeadBetav (\lambda z.II) I  \ToIntBetav (\lambda z.I) I$
are standard sequences from $(\lambda z.II) (II)$ to $(\lambda z.I)I$ according to \cite{Plotkin75,Crary09}.
However, only the second sequence is standard in our sense 
{(our standardization restricted to $\to_{\beta_v}$ is exactly the parametric standardization of \cite{paolini04iandc} for 
$\lambda_v$, which imposes a total order on $\beta_v$-redexes).}
Without the distinction in Definition~\ref{def:std} between standard and standard inner sequences,
both the above sequences would be standard; indeed, \cite{Plotkin75,Crary09} do not make this distinction and their standardization imposes only a partial order on $\beta_v$-redexes.


\section{Conservativity}\label{sect:conservative}

We now present our main contribution: the shuffling calculus $\lambda_v^\sigma$ is a \emph{conservative} extension of $\lambda_v$.
To be precise, we will prove 
that $\lambda_v^\sigma$ is sound with respect to the 
observational equivalence introduced by Plotkin in \cite{Plotkin75} for $\lambda_v$ (Corollary~\ref{cor:observational}),  
and that 
the 
notions of potential valuability and solvability  for $\lambda_v$, introduced in \cite{PaoliniRonchi99}, 
coincide with the respective notions for $\lambda_v^\sigma$ (Theorem~\ref{thm:valsolv}). 
This justifies the idea that $\lambda_v^\sigma$ is a useful tool for studying 
properties of $\lambda_v$, as stated in \cite{carraro14lncs}. 
All these results can be 
proved using standardization for $\lambda_v^\sigma$. 
Actually, the following corollary of sequentialization 
(Theorem~\ref{thm:trifurcate}) is enough.


\begin{cor}[Reduction to a value]
\label{cor:value}
Let $M \in \Lambda$ and $V \in \Lambda_v$.
\begin{enumerate}
  \item\label{cor:value.one} If $M \to_{\vg}^* V$ then there exists $V' \in \Lambda_v$ such that $M \ToHeadBetavReflTrans V' 
  \ToIntTrans V$.
  \item\label{cor:value.two} 
  $M \ToHeadBetavReflTrans V$ if and only if $M \ToHeadReflTrans V$.
\end{enumerate}
\end{cor}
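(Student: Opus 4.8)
The plan is to derive both statements directly from the sequentialization theorem (Theorem~\ref{thm:trifurcate}) and the commutation lemma (Lemma~\ref{lemma:commutation}.\ref{lemma:commutation.general}), exploiting throughout the single crucial fact that head $\sigma$-reduction can never \emph{produce} a value (Lemma~\ref{rmk:abs}.\ref{rmk:abs.notovalue}). This one observation is the engine that collapses the head $\sigma$-phase in both parts.

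For part~\ref{cor:value.one}, I would first apply sequentialization to $M \to_\vg^* V$, obtaining $L, N \in \Lambda$ with $M \ToHeadBetavReflTrans L \ToHeadSigmaReflTrans N \ToIntTrans V$; the goal is then to show that the intermediate head $\sigma$-phase is empty and that its endpoint is a value. To this end I would observe that $N$ must itself be a value: since $V$ is either a variable or an abstraction, the internal-expansion lemma (Lemma~\ref{rmk:preliminary}.\ref{rmk:preliminary.interExpansion}) applied to $N \ToIntTrans V$ forces $N = V$ when $V$ is a variable, and $N = \lambda x.N''$ for some $N''$ (with $N'' \to_\vg^* N'$) when $V = \lambda x.N'$; in either case $N \in \Lambda_v$. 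With $N$ established to be a value, the head $\sigma$-phase $L \ToHeadSigmaReflTrans N$ must consist of zero steps, for a genuine step $L \ToHeadSigma N$ is impossible because $\ToHeadSigma$ never reduces \emph{to} a value (Lemma~\ref{rmk:abs}.\ref{rmk:abs.notovalue}). Hence $L = N \in \Lambda_v$, and setting $V' := N$ gives $M \ToHeadBetavReflTrans V' \ToIntTrans V$, as required.

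Part~\ref{cor:value.two} is then quick. The left-to-right implication is immediate, since $\ToHeadBetav \, \subseteq \, \ToHead$ entails $\ToHeadBetavReflTrans \, \subseteq \, \ToHeadReflTrans$. For the converse, starting from $M \ToHeadReflTrans V$ I would invoke the commutation lemma (Lemma~\ref{lemma:commutation}.\ref{lemma:commutation.general}) to factor it as $M \ToHeadBetavReflTrans N \ToHeadSigmaReflTrans V$ for some $N \in \Lambda$; then the very same argument as above\,---\,that $\ToHeadSigma$ cannot reduce to the value $V$ (Lemma~\ref{rmk:abs}.\ref{rmk:abs.notovalue})\,---\,collapses the $\sigma$-phase to zero steps, giving $N = V$ and hence $M \ToHeadBetavReflTrans V$.

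I do not anticipate a serious obstacle: the result is a clean corollary of the machinery already in place, and both parts run on the same uniform principle that head $\sigma$-reduction produces no values. The only point demanding a little care is the appeal to the internal-expansion lemma in part~\ref{cor:value.one}, which must be carried out for both shapes of $V$ (variable and abstraction) in order to conclude that $N$ is a value \emph{before} the head $\sigma$-phase can be collapsed; once that is secured, everything else is routine.
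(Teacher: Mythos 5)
Your proof is correct and follows essentially the same route as the paper's: part~\ref{cor:value.one} via sequentialization (Theorem~\ref{thm:trifurcate}) plus Lemma~\ref{rmk:preliminary}.\ref{rmk:preliminary.interExpansion} to see that $N$ is a value and Lemma~\ref{rmk:abs}.\ref{rmk:abs.notovalue} to collapse the head $\sigma$-phase, and part~\ref{cor:value.two} via Lemma~\ref{lemma:commutation}.\ref{lemma:commutation.general} plus the same collapsing argument. Your version merely spells out the case analysis (variable vs.\ abstraction) that the paper leaves implicit in its citation of Lemma~\ref{rmk:preliminary}.\ref{rmk:preliminary.interExpansion}.
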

\begin{proof}\hfill
  \begin{enumerate}
    \item By sequentialization (Theorem~\ref{thm:trifurcate}), $M \ToHeadBetavReflTrans L \ToHeadSigmaReflTrans N \ToIntTrans V$ for some $N, L \in \Lambda$.
    By Lemma~\ref{rmk:preliminary}.\ref{rmk:preliminary.interExpansion}, $N \in \Lambda_v$ and thus $L = N$ according to Lemma~\ref{rmk:abs}.\ref{rmk:abs.notovalue}.

    \item $\Leftarrow$: By Lemma~\ref{lemma:commutation}.\ref{lemma:commutation.general}, $M \ToHeadBetavReflTrans N \ToHeadSigmaReflTrans V$ for some $N$, and $N = V$ by Lemma~\ref{rmk:abs}.\ref{rmk:abs.notovalue}. 
    
    \noindent $\Rightarrow$: Trivial, since $\ToHeadBetav \, \subseteq \, \ToHead$.
    \qedhere
  \end{enumerate}
\end{proof}

\noindent Corollary~\ref{cor:value} gives a first conservative result of $\lambda_v^\sigma$ with respect to $\lambda_v$: roughly
, it says that if a term $M$ $\vg$-reduces to a value (this is the case in particular for all closed head $\vg$-normalizable terms, as proven in \cite{guerrieri15wpte}) then $\sigma$-reduction steps are ``useless'' since head $\beta_v$-reduction\,---\,i.e.~Plotkin's evaluation for $\lambda_v$\,---\,reduces $M$ to a value (Corollary~\ref{cor:value}.\ref{cor:value.one}) and this value is the same as the one reached by means of head $\vg$-reduction 
(Corollary~\ref{cor:value}.\ref{cor:value.two}).

\begin{rem}[Uniqueness of head $\vg$-normal forms that are values]
\label{rmk:unique}
  Incidentally, notice that, in spite of the non-confluence of head $\vg$-reduction shown in Figure~\ref{fig:sigma-overlap}, Corollary~\ref{cor:value}.\ref{cor:value.two} entails that if $M \ToHeadReflTrans V \in \Lambda_v$ then $V$ is the unique head $\vg$-normal form of $M$. 
  Indeed, let $N$ be a head $\vg$-normal form of $M$ (namely, $M \ToHeadReflTrans N$ and $N$ is head $\vg$-normal): by confluence of $\to_\vg$ (Proposition~\ref{prop:general-properties}
  ), there is a term $L$ such that $V \to_\vg^* L \,\, {}_\vg^*\!\!\leftarrow N$, in particular $N \ToIntTrans L$ because $N$ is head $\vg$-normal; 
  by Remark~\ref{rmk:fromvalue}, $L \in \Lambda_v$ (since $V$ is a value) and then $N \in \Lambda_v$ by Lemma~\ref{rmk:preliminary}.\ref{rmk:preliminary.interExpansion};
  according to Corollary~\ref{cor:value}.\ref{cor:value.two}, $M \ToHeadBetavReflTrans N$ and hence $N = V$, since head $\beta_v$-reduction is deterministic and values are head $\beta_v$-normal (Lemma~\ref{rmk:abs}.\ref{rmk:abs.novalue}).
  More details about terms having a unique head $\vg$-normal form are in \cite{guerrieri15wpte}.
\end{rem}
 
Let us recall the notion of 
observational 
equivalence introduced by Plotkin \cite{Plotkin75} for $\lambda_v$.
Informally, two terms are observationally equivalent if they can be substituted for each other in all contexts without observing any difference in their behaviour, where ``behaviour'' 
means to test if call-by-value evaluation (head $\beta_v$-reduction) terminates on a value or not.

\begin{defi}[Halting, observational equivalence]\label{def:halt}
  Let $M \in \Lambda$.
  \begin{itemize}
  \item We say that (\emph{the evaluation of}) \emph{$M$ halts} if
    there exists $V \in \Lambda_v$ such that $M \ToHeadBetavReflTrans
    V$.
\item    The (\emph{call-by-value}) \emph{observational equivalence} is an
    equivalence relation $\cong$ on $\Lambda$ defined by: $M \cong N$
    if, for every context $\mathtt{C}$, one has that $\ctxC{M}$ halts
    iff $\ctxC{N}$ halts.%
  \end{itemize}
\end{defi}

\noindent Plotkin's original definition of call-by-value observational equivalence 
\cite[p.~144]{Plotkin75} 
also requires that $\ctxC{M}$ and $\ctxC{N}$ are closed terms, according to the tradition identifying programs with closed terms. However, the two equivalences coincide.

Clearly, the notions of halting and observational equivalence can be defined also for $\lambda_v^\sigma$, using 
$\ToHead$ instead of $\ToHeadBetav$ in Definition~\ref{def:halt}.
But head $\sigma$-reduction plays no role neither in deciding the halting problem for evaluation (Corollary~\ref{cor:value}.\ref{cor:value.one}), nor in reaching a particular value (Corollary~\ref{cor:value}.\ref{cor:value.two}). 
Therefore, we can conclude that the notions of halting and observational equivalence in $\lambda_v^\sigma$ 
\emph{coincide} with those in $\lambda_v$, respectively.

Now we compare the equational theory of $\lambda_v^\sigma$ with Plotkin's observational equivalence.

\begin{thm}[Adequacy of $\vg$-reduction]\label{thm:adequacy}
    If $M \to_{\vg}^* M'$ then: $M$ halts iff $M'$ halts.
\end{thm}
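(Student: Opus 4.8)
The plan is to prove the two implications of the biconditional separately, using confluence of $\to_\vg$ (Proposition~\ref{prop:general-properties}) together with Corollary~\ref{cor:value}.\ref{cor:value.one}, which converts any $\vg$-reduction to a value into a head $\beta_v$-reduction to a value.

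First I would treat the implication ``$M$ halts $\Rightarrow$ $M'$ halts''. Assuming $M \ToHeadBetavReflTrans V$ for some $V \in \Lambda_v$, we get $M \to_\vg^* V$, and since also $M \to_\vg^* M'$, confluence of $\to_\vg$ supplies a term $L$ with $M' \to_\vg^* L$ and $V \to_\vg^* L$. Because $V$ is a value and values are closed under $\vg$-reduction (Remark~\ref{rmk:fromvalue}), $L$ is a value, so $M' \to_\vg^* L \in \Lambda_v$; Corollary~\ref{cor:value}.\ref{cor:value.one} then yields a value $V'$ with $M' \ToHeadBetavReflTrans V'$, that is, $M'$ halts.

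For the converse ``$M'$ halts $\Rightarrow$ $M$ halts'' the argument is even more direct and needs no confluence: from $M' \ToHeadBetavReflTrans V' \in \Lambda_v$ and $\ToHeadBetavReflTrans \,\subseteq\, \to_\vg^*$ we get $M' \to_\vg^* V'$, which composed with $M \to_\vg^* M'$ gives $M \to_\vg^* V'$ with $V'$ a value; a single application of Corollary~\ref{cor:value}.\ref{cor:value.one} then produces a value $V''$ with $M \ToHeadBetavReflTrans V''$, so $M$ halts.

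I do not expect a genuine obstacle here: the statement is a short corollary of the sequentialization machinery already developed. The only point worth flagging is the asymmetry between the two directions\,---\,confluence of $\to_\vg$ is genuinely needed in the first implication, to relocate the value reached from $M$ onto a common reduct of $M'$, whereas the second direction follows merely by concatenating reductions and invoking Corollary~\ref{cor:value}.\ref{cor:value.one} once.
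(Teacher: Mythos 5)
Your proof is correct and follows essentially the same route as the paper's: the easy direction by composing $M \to_\vg^* M' \to_\vg^* V'$ and applying Corollary~\ref{cor:value}.\ref{cor:value.one}, and the other direction via confluence of $\to_\vg$ plus Remark~\ref{rmk:fromvalue} to see that the common reduct is a value, followed again by Corollary~\ref{cor:value}.\ref{cor:value.one}. The asymmetry you flag between the two implications is exactly the one present in the paper's own argument.
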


\begin{proof}
  If $M'$ halts then $M' \ToHeadBetavReflTrans V \in \Lambda_v$ and hence $M \to_{\vg}^* M' \to_{\vg}^* V$ since $\ToHeadBetav \, \subseteq \, \to_{\vg}$. By Corollary~\ref{cor:value}.\ref{cor:value.one}, there exists $V' \in \Lambda_v$ such that $M \ToHeadBetavReflTrans V'$. Thus, $M$ halts.
  
  Conversely, if $M$ halts then $M \ToHeadBetavReflTrans V \in \Lambda_v$, so $M \to_{\vg}^* V$ since $\ToHeadBetav \, \subseteq \, \to_{\vg}$. 
  By confluence of $\to_\vg$ (Proposition~\ref{prop:general-properties}, since $M \!\to_{\vg}^*\! M'$) and Remark~\ref{rmk:fromvalue} (as $V \!\in\! \Lambda_v$), $V \to_{\vg}^* V'\!$ and $M' \to_{\vg}^* V'$ for some $V' \in \Lambda_v$. 
  By Corollary~\ref{cor:value}.\ref{cor:value.one}, $M' \ToHeadBetavReflTrans V''$ for some $V'' \!\in\! \Lambda_v$. Therefore, $M'$ halts.
\end{proof}



\begin{cor}[Soundness with respect to $\lambda_v$]\label{cor:observational}
  If $M =_\vg N$ then $M \cong N$.
\end{cor}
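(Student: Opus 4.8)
The plan is to reduce the statement to the adequacy theorem (Theorem~\ref{thm:adequacy}) by exploiting the congruence structure of $=_\vg$ together with confluence of $\to_\vg$. First I would recall that, by Definition~\ref{def:rewriting}, $=_\vg$ is the congruence relation generated by $\mapsto_\vg$, being the reflexive–transitive–symmetric closure of the contextual closure $\to_\vg$ of $\mapsto_\vg$; in particular it is closed under contexts, so the hypothesis $M =_\vg N$ immediately yields $\ctxC{M} =_\vg \ctxC{N}$ for every context $\mathtt{C}$.

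Next I would invoke confluence of $\to_\vg$ (Proposition~\ref{prop:general-properties}). As already observed right after that proposition, $=_\vg$ coincides with the relation of having a common $\vg$-reduct. Hence, for each context $\mathtt{C}$, there is a term $L$ such that $\ctxC{M} \to_\vg^* L$ and $\ctxC{N} \to_\vg^* L$.

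Finally I would apply the adequacy theorem (Theorem~\ref{thm:adequacy}) twice, once to $\ctxC{M} \to_\vg^* L$ and once to $\ctxC{N} \to_\vg^* L$: the former gives that $\ctxC{M}$ halts iff $L$ halts, the latter that $\ctxC{N}$ halts iff $L$ halts. Composing these two equivalences yields that $\ctxC{M}$ halts iff $\ctxC{N}$ halts. Since $\mathtt{C}$ was arbitrary, this is exactly $M \cong N$ according to Definition~\ref{def:halt}.

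I do not expect any genuine obstacle here, as the corollary is essentially a repackaging of adequacy. The only point deserving a word of care is the first step, namely using that $=_\vg$ is a true congruence so that it passes through the (capture-allowing) substitution of a term for the hole of $\mathtt{C}$; but this is immediate from the way $=_\vg$ is defined as the symmetric and transitive closure of the contextual closure of $\mapsto_\vg$.
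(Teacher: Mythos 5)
Your proposal is correct and follows essentially the same route as the paper: confluence yields a common reduct, and adequacy (Theorem~\ref{thm:adequacy}) is applied twice to transfer the halting property. The only cosmetic difference is that the paper first finds a common reduct $L$ of $M$ and $N$ and then uses closure of $\to_\vg^*$ under contexts to get $\ctxC{M} \to_\vg^* \ctxC{L} \mathrel{{}_\vg^*{\leftarrow}} \ctxC{N}$, whereas you first pass to $\ctxC{M} =_\vg \ctxC{N}$ via congruence and then apply confluence; both orderings are equally valid.
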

\begin{proof}
  Let $\mathtt{C}$ be a context. 
  By confluence of $\to_\vg$ (Proposition~\ref{prop:general-properties}), $M =_\vg N$ implies that there exists $L\in \Lambda$ such that $M \to_\vg^* L$ and $N \to_\vg^* L$, hence $\ctxC{M} \to_\vg^* \ctxC{L}$ and $\ctxC{N} \to_\vg^* \ctxC{L}$. 
  By Theorem~\ref{thm:adequacy}, $\ctxC{M}$ halts iff $\ctxC{L}$ halts iff $\ctxC{N}$ halts. Therefore, $M \cong N$.
\end{proof}

Plotkin \cite[p.~144]{Plotkin75} has already proved that $M =_{\beta_v} N$ implies $M \cong N$: 
we point out that our Corollary~\ref{cor:observational} is not obvious since  $\lambda_v^\sigma$ equates more than Plotkin's $\lambda_v$
(indeed, $=_{\beta_v} \, \subseteq \, =_\vg$ since $\to_{\beta_v} \, \subseteq \, \to_\vg$, and Example~\ref{ex:reductions} shows that this inclusion is strict).
Corollary~\ref{cor:observational} means that $\lambda_v^\sigma$ is sound with respect to the operational semantics of $\lambda_v$. 
In a way, adding $\sigma$-reduction rules to $\beta_v$-reduction is harmless with respect to Plotkin's notion of observational equivalence for $\lambda_v$: $\lambda_v^\sigma$ does not equate too much.

The converse of Corollary~\ref{cor:observational} does not hold since $\lambda x. x (\lambda y. xy) \cong \Delta$ but $\lambda x. x (\lambda y. xy)$ and $\Delta$ are different $\vg$-normal forms, so $\lambda x. x (\lambda y. xy) \neq_\vg \Delta$ 
by confluence of $\to_\vg$ (Proposition~\ref{prop:general-properties}).

\medskip

%
%
%
Another remarkable consequence of Corollary~\ref{cor:value}.\ref{cor:value.one} is Theorem~\ref{thm:valsolv} below: the notions of potential valuability and solvability for the shuffling calculus $\lambda_v^{\sigma}$ (studied in \cite{carraro14lncs}) 
coincide with the 
corresponding ones for Plotkin's $\lambda_v$ 
(studied in \cite{PaoliniRonchi99,ronchi04book,PaoliniPimentelRonchi05,paolini11tcs}). 

\begin{defi}[Potential valuability, solvability]\label{def:valsolv}
  Let $N$ be a term and $x_1, \dots, x_k$ be pairwise distinct variables (with $k \in \Nat$) such that $\Fv{N} = \{x_1, \dots, x_k\}$:
  \begin{itemize}
    \item \label{def:valsolv.potval}
    $N$ is \emph{$\vg$-potentially valuable} (resp.~\emph{$\beta_v$\!-potentially valuable}) if there are 
    values $V_1, \dots, \!V_k,\allowbreak  V
    $ such that $N\msub{V_1}{x_1}{V_k} {x_k}  \to_{\vg}^* V$ (resp.~$N\msub{V_1}{x_1}{V_k} {x_k}  \allowbreak\to_{\beta_v}^* V$);
    \item  \label{def:valsolv.solv}
    $N$ is \emph{$\vg$-solvable} (resp.~\emph{$\beta_v$-solvable}\!) whenever there are $n \in \Nat$ and terms $M_1,\dots,M_n$ 
    such that $(\lambda x_1\ldots x_k.N)M_1\cdots M_n \to_{\vg}^* I$ (resp.\ $(\lambda x_1\ldots x_k.N)M_1\cdots M_n \to_{\beta_v}^* I$).
  \end{itemize}
\end{defi}

\noindent The notions of potential valuability and solvability are parametric with respect to the reduction rules, so any variant of the $\lambda$-calculus has its own notions of potential valuability and solvability: Definition~\ref{def:valsolv} introduces them for $\lambda_v$ and $\lambda_v^\sigma$.
Clearly, potential valuability is interesting only in a call-by-value setting, where a $\beta$-redex can be 
reduced only when its argument is a value: potentially valuable terms are those that, \emph{up to a suitable substitution}, can be evaluated or 
placed in argument position without 
yielding a stuck $\beta$-redex.


The relevance of $\beta$-solvability for ordinary (call-by-name) $\lambda$-calculus is clearly presented in \cite{barendregt84nh}, where this notion has been proved to grasp the idea of ``meaningful program'', i.e., a program that can produce any given output when supplied by suitable arguments. 
It is well known that, in $\lambda$, $\beta$-solvability is operationally characterized by head $\beta$-reduction: a term is $\beta$-solvable iff it is head $\beta$-normalizable.
In a call-by-value setting, $\beta_v$-solvability and $\vg$-solvability are just the corresponding notions of solvability for $\lambda_v$ and $\lambda_v^\sigma$, respectively. 

In  \cite{PaoliniRonchi99,ronchi04book,paolini11tcs} it has been proved that $\beta_v$-solvable terms are a proper subset of the {$\beta_v$-potentially valuable} terms, and it has been pointed out that $\beta_{v}$-reduction is too weak in order to characterize both these properties: an operational characterization of {$\beta_v$-potential valuability} and $\beta_v$-solvability cannot be given inside $\lambda_v$ because of the problem of ``premature'' $\beta_v$-normal forms described in Section~\ref{sect:intro}, e.g.~the terms $M$ and $N$ in Eq.~\ref{eq:premature} are $\beta_v$-normal but neither $\beta_v$-solvable nor $\beta_v$-potentially valuable.
In fact, $\beta_v$-solvability and \mbox{$\beta_v$-potential} valuability have been operationally characterized using two lazy strategies on\,---\,\emph{call-by-name}\,---\,$\beta$-reduction (see~\cite[Theorems 3.1.9 and 3.1.14]{ronchi04book}), which is disappointing and unsound for $\lambda_v$: according to these lazy strategies, stuck $\beta$-redexes can be fired (even if the argument is not a value), for instance $(\lambda y.M)(xI)$ reduces to $M \sub{xI}{y}$.
 
On the other hand, concerning $\lambda_v^\sigma$,
Theorems 24-25 in \cite{carraro14lncs} give semantic and opera\-tional characterizations of $\vg$-potentially valuability and $\vg$-solvability.
Interestingly, the operational characterizations rest on $\vg$-reduction strategies and then are \emph{internal} to $\lambda_v^\sigma$.
Let us recall these theorems (see Proposition \ref{prop:shuffling-characterization} below) and, firstly, the notions involved in it. 

For every term $M$ with $\Fv{M} \subseteq \{x_1, \dots, x_n\}$ and $\vec{x} = (x_1, \dots, x_n)$, we denote by $\Sem{M}{\vec{x}}$ (resp.~$\Sem{M}{\vec{x}}^\Strat$) its \emph{semantics} (resp.~\emph{stratified semantics}) in a relational 
model for $\lambda_v^\sigma$ and $\lambda_v$. 
All the details about this denotational model 
are in \cite{carraro14lncs}, for our purpose it is enough to recall that $\Sem{M}{\vec{x}}$ is a set such that $\Sem{M}{\vec{x}}^\Strat \subseteq \Sem{M}{\vec{x}}$, and if $M \to_\vg N$ then $\Sem{M}{\vec{x}} = \Sem{N}{\vec{x}}$.

The reductions $\ToWeak$ and $\ToStrat$ are the closures of $\mapsto_{\beta_v} \cup \mapsto_{\sigma_1} \cup \mapsto_{\sigma_3}$ under weak and stratified contexts, respectively, where weak contexts (denoted by $\mathtt{W}$) and stratified contexts (denoted by $\mathtt{S}$) are special kinds of 
contexts defined as follows (see \cite{carraro14lncs} for more details):
\begin{align*}
  \mathtt{W} &\Coloneqq \Chole{\cdot} \mid \mathtt{W}M \mid M\mathtt{W} \mid (\lambda x. \mathtt{W})M &
  \mathtt{S} &\Coloneqq \mathtt{W} \mid \lambda x. \mathtt{S} \mid \mathtt{S}M \, .
\end{align*}

Note that $\ToWeak$ and $\ToStrat$ are two (non-deterministic but confluent) sub-reductions of $\to_\vg$.

\begin{prop}[Semantic and operational characterization of $\vg$-potential valuability and $\vg$-solvability, \cite{carraro14lncs}]
\label{prop:shuffling-characterization}
  Let $M$ be a term with $\Fv{M} \subseteq \{x_1, \dots, x_n\}$ and $\vec{x} = (x_1, \dots, x_n)$.
  \begin{enumerate}
    \item\label{prop:shuffling-characterization.valuability} \emph{Semantic and operational characterization of $\vg$-potential valuability (\cite[Theorem~24]{carraro14lncs})}: 
      $M$ is $\vg$-poten\-tially valuable iff
      $\Sem{M}{\vec{x}} \neq \emptyset$ iff
      $M$ is 
      $\Weak$-normalizable iff $M$ is strongly $\Weak$-normalizable.
    \item\label{prop:shuffling-characterization.solvability} \emph{Semantic and operational characterization of $\vg$-solvability (\cite[Theorem~25]{carraro14lncs})}:
      $M$ is $\vg$-solvable iff \allowbreak
      $\Sem{M}{\vec{x}}^\Strat \neq \emptyset$ iff
      $M$ is 
      $\Strat$-normalizable iff $M$ is strongly $\Strat$-normalizable.

  \end{enumerate}
\end{prop}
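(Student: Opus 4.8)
The statement is quoted from \cite[Theorems~24 and~25]{carraro14lncs}, so the proof is given there; what follows is how I would reconstruct it. Each of the two items asserts that four conditions are equivalent, so the economical plan is to prove a single cycle of implications rather than each biconditional in isolation. For item~\ref{prop:shuffling-characterization.valuability} I would establish
\[
  \text{str.\ }\Weak\text{-norm.} \;\Rightarrow\; \Weak\text{-norm.} \;\Rightarrow\; \vg\text{-pot.\ val.} \;\Rightarrow\; \Sem{M}{\vec{x}} \neq \emptyset \;\Rightarrow\; \text{str.\ }\Weak\text{-norm.}
\]
and, for item~\ref{prop:shuffling-characterization.solvability}, the same cycle with $\Strat$ in place of $\Weak$, $\vg$-solvability in place of $\vg$-potential valuability, and $\Sem{M}{\vec{x}}^\Strat$ in place of $\Sem{M}{\vec{x}}$.

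The two cheap arcs are, first, that strong $\Weak$-normalizability trivially implies $\Weak$-normalizability (a term with no infinite $\to_\Weak$-sequence reaches a $\Weak$-normal form along any maximal reduction), and, second, that the operational notion entails the defining one. For the latter I would analyze the shape of $\Weak$-normal forms and show that if $M$ is $\Weak$-normalizable then one can pick values $V_1, \dots, V_k$ so that $M\msub{V_1}{x_1}{V_k}{x_k}$ further $\vg$-reduces to a value, which is exactly $\vg$-potential valuability; the stratified analogue instead shows that the suitably abstracted and applied $\Strat$-normal form $\vg$-reduces to $I$, giving $\vg$-solvability.

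The substantive arcs are the two involving the relational model. \emph{Soundness}\,---\,the defining notion forces a nonempty denotation\,---\,is the easier of the two: it follows from the invariance recalled just before the statement ($M \to_\vg N$ implies $\Sem{M}{\vec{x}} = \Sem{N}{\vec{x}}$), together with a substitution (compositionality) lemma for the interpretation and the fact that every value has nonempty semantics; since a $\vg$-potentially valuable term reduces, after substitution of values, to a value, its denotation is nonempty, and the stratified interpretation is handled likewise using $\Sem{M}{\vec{x}}^\Strat \subseteq \Sem{M}{\vec{x}}$. \emph{Adequacy}\,---\,a nonempty denotation implies \emph{strong} $\Weak$-normalizability\,---\,is where the real difficulty lies, and it simultaneously supplies the collapse of normalizability to strong normalizability that closes the cycle. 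Here I would run a reducibility (logical-relations) argument indexed by the points of the interpretation: define families of reducibility candidates stable under $\to_\Weak$ and its anti-reduction, threaded through the mutual recursion between terms and values, and prove an adequacy lemma to the effect that any $M$ whose denotation contains a point is strongly $\Weak$-normalizable. I expect the main obstacle to be the verification that these candidates are closed under the $\sigma_1$- and $\sigma_3$-steps: because of the overlaps between $\sigma$- and $\beta_v$-redexes discussed earlier in the paper, the reducibility bookkeeping is markedly less routine than in the pure call-by-value ($\beta_v$-only) setting, and the stratified case for solvability adds the further subtlety of tracking reduction under abstractions permitted by stratified contexts.
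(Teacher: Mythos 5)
The paper gives no proof of Proposition~\ref{prop:shuffling-characterization}: it is imported verbatim from \cite[Theorems~24--25]{carraro14lncs} and used as a black box, so there is nothing in this paper to compare your argument against. You correctly recognize this, and your reconstruction (a cycle of implications per item, with the trivial arcs handled by shape analysis of $\Weak$-/$\Strat$-normal forms, soundness by invariance of the interpretation under $\to_\vg$, and adequacy-plus-strong-normalization by a reducibility argument over the relational model) is a plausible account of how such a result is established, though its fidelity to the actual proof can only be checked against \cite{carraro14lncs} itself.
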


\noindent Thanks to standardization for $\lambda_v^\sigma$ (actually, Corollary~\ref{cor:value}.\ref{cor:value.one}), we can prove Theorem~\ref{thm:valsolv} 
below, which reconciles the results 
about solvability and potential valuability for $\lambda_v^\sigma$ and $\lambda_v$.

\begin{thm}[Potential valuability and solvability for $\lambda_v^\sigma$ and $\lambda_v$]
\label{thm:valsolv}
  Let $M$ be a term:
  \begin{enumerate}
    \item \label{thm:valsolv.potval}
    $M$ is $\vg$-potentially valuable if and only if $M$ is $\beta_v$-potentially valuable;
    \item  \label{thm:valsolv.solv}
    $M$ is $\vg$-solvable if and only if $M$ is $\beta_v$-solvable.
  \end{enumerate}
\end{thm}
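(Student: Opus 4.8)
The plan is to push everything back to Corollary~\ref{cor:value}.\ref{cor:value.one}, which already carries the real weight: it guarantees that whenever a term $\vg$-reduces to a value, it in fact reaches a value by head $\beta_v$-reduction alone. Since $\to_{\beta_v} \,\subseteq\, \to_{\vg}$, the right-to-left implications of both items are immediate, because any witnessing $\beta_v$-reduction (to a value in the potential-valuability case, or to $I$ in the solvability case) is already a $\vg$-reduction; so $\beta_v$-potential valuability (resp.\ $\beta_v$-solvability) trivially entails its $\vg$-counterpart. All the content lies in the left-to-right directions.

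For item~\ref{thm:valsolv.potval}, I would assume $M$ is $\vg$-potentially valuable: with $\Fv{M} = \{x_1, \dots, x_k\}$ there are values $V_1, \dots, V_k, V$ such that $M\msub{V_1}{x_1}{V_k}{x_k} \to_{\vg}^* V$. By Corollary~\ref{cor:value}.\ref{cor:value.one} there is a value $V'$ with $M\msub{V_1}{x_1}{V_k}{x_k} \ToHeadBetavReflTrans V' \ToIntTrans V$; since $\ToHeadBetavReflTrans \,\subseteq\, \to_{\beta_v}^*$, this gives $M\msub{V_1}{x_1}{V_k}{x_k} \to_{\beta_v}^* V'$ with $V' \in \Lambda_v$, witnessing $\beta_v$-potential valuability with the very same substitution. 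This case is a single, direct application.

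For item~\ref{thm:valsolv.solv}, I would assume $M$ is $\vg$-solvable, so there are arguments $M_1, \dots, M_n$ with $(\lambda x_1\ldots x_k.M)M_1\cdots M_n \to_{\vg}^* I$. The key observation is that $I = \lambda x.x$ is itself a value, so Corollary~\ref{cor:value}.\ref{cor:value.one} applies and yields a value $V'$ with $(\lambda x_1\ldots x_k.M)M_1\cdots M_n \ToHeadBetavReflTrans V' \ToIntTrans I$, whence $(\lambda x_1\ldots x_k.M)M_1\cdots M_n \to_{\beta_v}^* V'$. The only genuinely new step, and the main obstacle, is that $V'$ need not be $I$ on the nose: I only know $V' \ToIntTrans I$. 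To close the gap I would peel off the outer binder: by Lemma~\ref{rmk:preliminary}.\ref{rmk:preliminary.interExpansion} applied to $I = \lambda x.x$, we get $V' = \lambda x.P$ with $P \to_{\vg}^* x$; since $x$ is again a value, a second application of Corollary~\ref{cor:value}.\ref{cor:value.one} followed by Lemma~\ref{rmk:preliminary}.\ref{rmk:preliminary.interExpansion} forces $P \ToHeadBetavReflTrans x$, i.e.\ $P \to_{\beta_v}^* x$. Reducing under the binder (using that $\to_{\beta_v}$ is a full contextual closure), $V' = \lambda x.P \to_{\beta_v}^* \lambda x.x = I$, and concatenating with $(\lambda x_1\ldots x_k.M)M_1\cdots M_n \to_{\beta_v}^* V'$ produces a $\beta_v$-reduction to $I$ using the same arguments $M_1, \dots, M_n$. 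Hence $M$ is $\beta_v$-solvable.

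I expect the only delicate point to be this last manoeuvre in item~\ref{thm:valsolv.solv}: Corollary~\ref{cor:value}.\ref{cor:value.one} lands on \emph{some} value rather than the prescribed $I$, so one must strip the outer $\lambda$, re-run the corollary on the body, and re-insert the $\lambda$ via congruence. Everything else is a transcription of the inclusion $\ToHeadBetavReflTrans \,\subseteq\, \to_{\beta_v}^*$, so the argument is short once sequentialization and its corollary are in hand.
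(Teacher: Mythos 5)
Your proposal is correct and follows essentially the same route as the paper's own proof: both directions reduce to Corollary~\ref{cor:value}.\ref{cor:value.one}, and for solvability you peel off the outer $\lambda$ via Lemma~\ref{rmk:preliminary}.\ref{rmk:preliminary.interExpansion}, re-apply the corollary to the body to get $P \to_{\beta_v}^* x$, and reinsert the binder by contextual closure, exactly as in the paper. The delicate point you flag (the corollary landing on \emph{some} value rather than $I$) is precisely the point the paper handles the same way.
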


\begin{proof} In both points, 
 the implication from right to left is trivial since $\to_{\beta_v} \, \subseteq \, \to_{\vg}$. Let us prove the other direction.
 Let $\Fv{M} = \{x_1, \dots, x_m\}$ for some $m \in \Nat$.
  \begin{enumerate}
    \item Since $M$ is $\vg$-potentially valuable, there exist 
    some values $V, V_1, \dots, V_m 
    $ 
    such that $M\msub{V_1}{x_1}{V_m}{x_m} \to_{\vg}^* V$; then, by Corollary~\ref{cor:value}.\ref{cor:value.one} and because $\ToHeadBetav \, \subseteq \, \to_{\beta_v}$, 
\[M\msub{V_1}{x_1}{V_m}{x_m} \allowbreak\to_{\beta_v}^* V'\]
   for some $V' \in \Lambda_v$. So, $M$ is $\beta_v$-potentially valuable.

    \item Since $M$ is $\vg$-solvable, there exist 
    terms $N_1,\dots,N_n$ (for some $n
    \geq 0$) such that $(\lambda x_1\ldots x_m.M)N_1\cdots N_n \to_{\vg}^* I$; then, by Corollary~\ref{cor:value}.\ref{cor:value.one} and because $\ToHeadBetav \, \subseteq \, \to_{\beta_v}$, there exists $V \in \Lambda_v$ such that $(\lambda x_1\ldots x_m.M)N_1\cdots N_n \allowbreak\to_{\beta_v}^* V 
    \ToIntTrans I$ . 
    According to Lemma~\ref{rmk:preliminary}.\ref{rmk:preliminary.interExpansion}
    , $V = \lambda x.N$ for some $N \in \Lambda$ such that $N \to_\vg^* x$.
    By Corollary~\ref{cor:value}.\ref{cor:value.one}, there is $V' \in \Lambda_v$ such that $N \ToHeadBetavReflTrans V' \ToIntTrans x$, hence $V' = x$ by Lemma~\ref{rmk:preliminary}.\ref{rmk:preliminary.interExpansion} again. Since $\ToHeadBetav \, \subseteq \, \to_{\beta_v}$, $N \to_{\beta_v}^* x$ and thus $V = \lambda x.N \to_{\beta_v}^* I$, so $M$ is $\beta_v$-solvable.
    \qedhere
  \end{enumerate}
\end{proof}


\noindent According to Theorem~\ref{thm:valsolv}, 
the notions of potential valuability and solvability for $\lambda_v^{\sigma}$ \emph{coincide} with the respective ones for Plotkin's $\lambda_v$.
So, 
the semantic (via a relational model) and operational (via two sub-reductions of $\to_\vg$) characterizations of $\vg$-potential valuability and $\vg$-solvability given 
in Proposition~\ref{prop:shuffling-characterization} are also semantic and operational characterizations of $\beta_v$-potential valuability and $\beta_v$-solvability. 
The difference is that 
these notions are characterized operationally \emph{inside} $\lambda_v^\sigma$ (using call-by-value reductions), 
while it is impossible to characterize them operationally inside $\lambda_v$.
This shows how $\lambda^{\sigma}_{v}$ is a useful, conservative and ``complete'' tool for studying semantic and operational properties of Plotkin's $\lambda_{v}$.

\medskip
For the sake of completeness, we mention another conservativity result of $\lambda_v^\sigma$ with respect to $\lambda_v$, proved in \cite[Theorem 21]{guerrieri15wpte} 
thanks to our sequentialization: it shows that the notions of head reduction for $\lambda_v^\sigma$ and $\lambda_v$ are equivalent from the termination viewpoint.

\begin{prop}[Head normalization, \cite{guerrieri15wpte}]
\label{prop:headnormalization}
  Let $N \in \Lambda$. The following are equivalent:
  \begin{multicols}{2}
  \begin{enumerate}
    \item\label{prop:headnormalization.weak} $N$ is head $\vg$-normalizable;
    \item\label{prop:headnormalization.betav} $N$ is head $\beta_v$-normalizable;
    \item\label{prop:headnormalization.equivalence} $N =_\vg L$ for some head $\vg$-normal $L$;
    \item\label{prop:headnormalization.strong} $N$ is strongly head $\vg$-normalizable.
  \end{enumerate}
  \end{multicols}
\end{prop}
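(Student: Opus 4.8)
The plan is to establish the four properties as a cycle of implications, using three ingredients: the determinism of $\ToHeadBetav$, the strong normalization of $\ToHeadSigma$ (which holds since $\ToHeadSigma \,\subseteq\, \to_\sigma$, cf.\ Proposition~\ref{prop:general-properties}), and the commutation of head reductions (Lemma~\ref{lemma:commutation}). For a term $M$, let $h(M) \in \Nat \cup \{\infty\}$ be the length of the unique maximal head $\beta_v$-reduction starting from $M$ (well-defined by determinism of $\ToHeadBetav$), and let $s(M) \in \Nat$ be the length of the longest $\ToHeadSigma$-reduction from $M$ (finite, since $\ToHeadSigma$ is strongly normalizing and finitely branching). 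The backbone is the cycle $\ref{prop:headnormalization.weak}\Rightarrow\ref{prop:headnormalization.betav}\Rightarrow\ref{prop:headnormalization.strong}\Rightarrow\ref{prop:headnormalization.weak}$, augmented by the easy $\ref{prop:headnormalization.weak}\Leftrightarrow\ref{prop:headnormalization.equivalence}$.

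The first step I would carry out is a small invariance lemma: \emph{head $\beta_v$-normalizability is preserved in both directions along a single head $\sigma$-step}. This is exactly the contrapositive content of Lemma~\ref{lemma:commutation}.\ref{lemma:commutation.onestep}. Indeed, if $M \ToHeadSigma M'$, then any $\beta_v$-step performed after the $\sigma$-step can be pulled before it, so an infinite head $\beta_v$-reduction from $M'$ lifts, step by step, to an infinite one from $M$ (and conversely a head $\beta_v$-redex created in $M'$ was already present in $M$). Hence $h(M) = \infty$ iff $h(M') = \infty$, and in particular finiteness of $h$ is invariant along $\ToHeadSigma$. With this, $\ref{prop:headnormalization.weak}\Rightarrow\ref{prop:headnormalization.betav}$ goes as follows: if $N \ToHeadReflTrans P$ with $P$ head $\vg$-normal, Lemma~\ref{lemma:commutation}.\ref{lemma:commutation.general} gives $N \ToHeadBetavReflTrans Q \ToHeadSigmaReflTrans P$; since $P$ is head $\beta_v$-normal, $h(P)=0$, so $h(Q)<\infty$ by the invariance along $Q\ToHeadSigmaReflTrans P$, and therefore $h(N)<\infty$ because $N \ToHeadBetavReflTrans Q$ is the forced (deterministic) prefix of the head $\beta_v$-reduction of $N$.

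The main obstacle is the uniform-normalization implication $\ref{prop:headnormalization.betav}\Rightarrow\ref{prop:headnormalization.strong}$: passing from \emph{some} terminating head reduction to strong normalization, despite $\ToHead$ being non-deterministic and even non-confluent (Figure~\ref{fig:sigma-overlap}). My plan is to exhibit the lexicographic measure $\mu(M) = (h(M), s(M))$ and show it strictly decreases along every $\ToHead$-step issued from a term with $h<\infty$. A head $\beta_v$-step sends $h(M)$ to $h(M)-1$ by determinism, decreasing the first component. A head $\sigma$-step $M \ToHeadSigma M'$ cannot increase $h$ (by Lemma~\ref{lemma:commutation}.\ref{lemma:commutation.onestep}, each $\beta_v$-step of $M'$ lifts to one of $M$, so $h(M) \ge h(M')$), and when it leaves $h$ unchanged it strictly decreases $s$, since $s(M) \ge s(M')+1$ by prepending the step. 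As finiteness of $h$ is preserved by all head steps (first paragraph), starting from $N$ with $h(N)<\infty$ keeps $\mu$ in $\Nat \times \Nat$ throughout; strict lex-decrease then forbids any infinite head $\vg$-reduction, giving \ref{prop:headnormalization.strong}. The converse $\ref{prop:headnormalization.strong}\Rightarrow\ref{prop:headnormalization.weak}$ is immediate, as any maximal head reduction from a strongly head $\vg$-normalizing term is finite and ends in a head $\vg$-normal form.

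It remains to connect \ref{prop:headnormalization.equivalence}. The implication $\ref{prop:headnormalization.weak}\Rightarrow\ref{prop:headnormalization.equivalence}$ is trivial, taking $L=P$ and using $\ToHead \,\subseteq\, \to_\vg$. For $\ref{prop:headnormalization.equivalence}\Rightarrow\ref{prop:headnormalization.weak}$, I would use confluence of $\to_\vg$ (Proposition~\ref{prop:general-properties}) to obtain a common reduct $W$ with $N \to_\vg^* W$ and $L \to_\vg^* W$; since $L$ is head $\vg$-normal, sequentialization (Theorem~\ref{thm:trifurcate}) forces the head phase of $L \to_\vg^* W$ to be empty, i.e.\ $L \ToIntTrans W$. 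The remaining point\,---\,and the delicate part of this last implication\,---\,is that head $\vg$-normalizability is invariant under internal $\vg$-reduction in both directions; I expect to derive this from the postponement results already established (Lemma~\ref{lemma:postpone} and Corollary~\ref{cor:postponeinternal}), which let one commute an internal step past a head reduction, thereby transporting a head-normalizing reduction of $W$ back to $N$ (using again Theorem~\ref{thm:trifurcate} to split $N \to_\vg^* W$ as $N \ToHeadReflTrans R \ToIntTrans W$, and then the equivalence $\ref{prop:headnormalization.weak}\Leftrightarrow\ref{prop:headnormalization.betav}$ already proved). This closes the cycle and yields the full equivalence.
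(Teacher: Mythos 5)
A preliminary remark: the paper itself does not prove Proposition~\ref{prop:headnormalization} --- it is imported from \cite{guerrieri15wpte} --- so your proposal cannot be matched against an in-paper argument and has to stand on its own. Its skeleton is sensible, and the lexicographic measure $\mu(M)=(h(M),s(M))$ for \ref{prop:headnormalization.betav}$\Rightarrow$\ref{prop:headnormalization.strong} is correct: iterating Lemma~\ref{lemma:commutation}.\ref{lemma:commutation.onestep} does lift a head $\beta_v$-reduction of any length from $M'$ to one of the same length from $M$ whenever $M \ToHeadSigma M'$, so $h(M)\ge h(M')$, finiteness of $h$ is preserved along head steps, and $\mu$ lex-decreases; \ref{prop:headnormalization.strong}$\Rightarrow$\ref{prop:headnormalization.weak} and \ref{prop:headnormalization.weak}$\Rightarrow$\ref{prop:headnormalization.equivalence} are indeed immediate.

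The genuine gap is in your ``invariance lemma'', and it propagates. Lemma~\ref{lemma:commutation}.\ref{lemma:commutation.onestep} only pulls $\beta_v$-steps \emph{backwards} across a $\sigma$-step: it yields $h(M')=\infty\Rightarrow h(M)=\infty$, i.e.\ $h(M)<\infty\Rightarrow h(M')<\infty$ for $M\ToHeadSigma M'$. In \ref{prop:headnormalization.weak}$\Rightarrow$\ref{prop:headnormalization.betav} you stand at $Q\ToHeadSigmaReflTrans P$ with $h(P)=0$ and need $h(Q)<\infty$: that is the \emph{converse} implication, and neither the lemma nor its contrapositive gives it (your parenthetical ``a redex created in $M'$ was already present in $M$'' is again the pull-back direction). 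What is actually needed is a forward simulation: from $M\ToHeadSigma M'$ and $M\ToHeadBetav N$, produce a head $\beta_v$-step out of $M'$ closing the diagram. This is true --- a case analysis on the three rules of Definition~\ref{def:headsigma} shows that the head $\beta_v$-reductions of $M$ and $M'$ run in lockstep on the same argument subterm until the shuffled redex fires, after which they coincide --- but it is a separate induction that you have not carried out. The same one-way problem recurs in \ref{prop:headnormalization.equivalence}$\Rightarrow$\ref{prop:headnormalization.weak}: Lemma~\ref{lemma:postpone} gives, by contraposition, that internal steps preserve head $\vg$-normal \emph{forms} in the forward direction, which does handle the leg $L\ToIntTrans W$ (so $W$ is in fact head $\vg$-normal, not merely normalizable); but for the leg $R\ToIntTrans W$ you need the opposite direction --- that an internal step cannot \emph{destroy} a head redex --- and neither Lemma~\ref{lemma:postpone} nor Corollary~\ref{cor:postponeinternal} supplies it, nor does postponing internal steps by itself bound the length of the head phase so produced. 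Both missing facts are provable, but they require new ``forward'' commutation/preservation lemmas rather than contrapositives of the ones you cite.
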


The equivalence \eqref{prop:headnormalization.weak}$\Leftrightarrow$\eqref{prop:headnormalization.strong} means that normalization and strong normalization are equivalent for head $\vg$-reduction (for head $\beta_v$-reduction they are trivially
equivalent since head $\beta_v$-reduction is deterministic), therefore if one is interested in studying the termination of head $\vg$-reduction, no difficulty arises from its non-determinism.
The equivalence \eqref{prop:headnormalization.strong}$\Leftrightarrow$\eqref{prop:headnormalization.betav} or \eqref{prop:headnormalization.weak}$\Leftrightarrow$\eqref{prop:headnormalization.betav} says that the 
evaluation defined for Plotkin's $\lambda_v$ (head $\beta_v$-reduction) terminates if and only if the 
evaluation defined for $\lambda_v^\sigma$ (head $\vg$-reduction) terminates: $\sigma$-rules play no role in deciding the termination of a head $\vg$-reduction sequence (in a way, this generalizes Corollary \ref{cor:value}.\ref{cor:value.two}), they can only activate hidden $\beta_v$-redexes that are not in head position.
The equivalence \eqref{prop:headnormalization.weak}$\Leftrightarrow$\eqref{prop:headnormalization.equivalence} says that head $\vg$-reduction is complete to get head $\vg$-normal forms;
in particular, this entails that every $\vg$-normalizable term is head $\vg$-normalizable.

\medskip
Standardization is related to normalization.
In \cite[Theorem~24]{guerrieri15wpte} a family of normalizing strategies for $\lambda_v^\sigma$ has been introduced: a term $M$ is $\vg$-normalizable iff $M$ $\vg$-reduces to its $\vg$-normal form 
selecting $\vg$-redexes in a particular order defined
in \cite[Definition~22]{guerrieri15wpte}. 
Actually, these normalizing strategies 
are a special case of standard sequences.

\begin{defi}[Strict standard head sequence]
  A \emph{strict standard head sequence} is a finite sequence $(M_0, \dots, M_k, \dots, M_m)$ of terms
  (with $k \leq m$) such that $M_k$ is head $\beta_v$-normal, $M_m$ is head $\vg$-normal, $M_i \ToHeadBetav M_{i+1}$ for any $0 \leq i < k$, and
  $M_i \ToHeadSigma M_{i+1}$ for any $k\leq i < m$. 
\end{defi}

A \emph{strict standard sequence} is then defined by replacing the notion of standard head sequence with the notion of strict standard head sequence in Definition~\ref{def:std}.
So, normalization theorem proved in \cite[Theorem~24]{guerrieri15wpte} can be reformulated as follows
:

\begin{prop}[Normalization, \cite{guerrieri15wpte}]
  \label{prop:normalization}
  Let $M$ be a term: $M$ is $\vg$-normalizable iff there exists a strict standard sequence from $M$ to its $\vg$-normal form.
\end{prop}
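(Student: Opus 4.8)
The plan is to prove the two implications separately, the right-to-left one being immediate and the left-to-right one carrying all the content. For the backward direction, observe that a strict standard sequence is in particular a $\to_\vg$-reduction sequence: exactly as for Remark~\ref{rmk:standard-to-reduction}, a mutual induction on the construction of strict standard and strict standard inner sequences shows that every consecutive step is a $\to_\vg$-step (each strict standard head sequence being a $\ToHead$-sequence, hence a $\to_\vg$-sequence). Thus a strict standard sequence from $M$ to a $\vg$-normal term $N$ yields $M \to_\vg^* N$ with $N$ $\vg$-normal, so $M$ is $\vg$-normalizable and $N$ is its (unique, by confluence, Proposition~\ref{prop:general-properties}) $\vg$-normal form.

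For the forward direction, assume $M$ is $\vg$-normalizable with $\vg$-normal form $N$; I must produce a strict standard sequence from $M$ to $N$ (throughout, $\sequence{\cdots}{\mathit{std}}$, $\sequence{\cdots}{\mathit{in}}$ and $\sequence{\cdots}{\head}$ denote the strict versions). The crucial refinement of Theorem~\ref{thm:standardization} is a \emph{strict head-normalization} step. Since $N$ is $\vg$-normal it is head $\vg$-normal, so $M =_\vg N$ witnesses clause~(\ref{prop:headnormalization.equivalence}) of Proposition~\ref{prop:headnormalization}; hence $M$ is strongly head $\vg$-normalizable (clause~(\ref{prop:headnormalization.strong})). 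I then run the deterministic head $\beta_v$-reduction from $M$ to completion --- it terminates because $M$ is strongly head $\vg$-normalizable and every head $\beta_v$-step is a head $\vg$-step --- reaching a head $\beta_v$-normal $L$ with $M \ToHeadBetavReflTrans L$; next I run head $\sigma$-reduction from $L$ to completion --- it terminates since $\ToHeadSigma$ is a restriction of the strongly normalizing $\to_\sigma$ (Proposition~\ref{prop:general-properties}) --- reaching a head $\sigma$-normal $M'$ with $L \ToHeadSigmaReflTrans M'$. The key auxiliary fact is that \emph{head $\sigma$-reduction preserves head $\beta_v$-normality}; granting it, $M'$ is both head $\beta_v$-normal and head $\sigma$-normal, i.e.\ head $\vg$-normal, so $M \ToHeadBetavReflTrans L \ToHeadSigmaReflTrans M'$ is a strict standard head sequence. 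Finally $M \to_\vg^* M'$ and $M \to_\vg^* N$ ($N$ $\vg$-normal) give $M' \to_\vg^* N$ by confluence, and since $M'$ is head $\vg$-normal, sequentialization (Theorem~\ref{thm:trifurcate}) forces $M' \ToIntTrans N$.

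These pieces combine into an induction on the size of the $\vg$-normal form $N$ proving two statements: (STD)~if $M_0 \to_\vg^* N$ then $\sequence{M_0,\dots,N}{\mathit{std}}$; and (IN)~if $M_0$ is head $\vg$-normal and $M_0 \to_\vg^* N$ then $\sequence{M_0,\dots,N}{\mathit{in}}$. Statement~(STD) for $N$ is the strict head-normalization of the preceding paragraph (applied to any $M_0 \to_\vg^* N$) followed by~(IN) for the same $N$, glued by rule~(\ref{1}) of Definition~\ref{def:std}; applying~(STD) to the original $M$ gives the theorem. Statement~(IN) for $N$ is proved by cases on the shape $N = V N_1 \dots N_n$ (Remark~\ref{rem:shape}), closely mirroring the proof of Theorem~\ref{thm:standardization}.\ref{thm:standardization.inner}: if $N$ is a variable then $M_0 = N$ by Lemma~\ref{rmk:preliminary}.\ref{rmk:preliminary.interExpansion} (rule~(\ref{2})); if $N = \lambda z.N_0$ then $M_0 = \lambda z.M_0'$ with $M_0' \to_\vg^* N_0$ by the same lemma, and~(STD) on the smaller $N_0$ plus rule~(\ref{3}) conclude; if $N$ is an application $N = N_f N_a$ then $M_0 \notin \Lambda_v$ (Remark~\ref{rmk:fromvalue}, as $N \notin \Lambda_v$), so $M_0 = M_f M_a$ with $M_f M_a \InterTrans N_f N_a$, every step being an instance of rule $\mathit{right}$ (Lemma~\ref{rmk:preliminary}.\ref{rmk:preliminary.interinclusion}). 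Splitting this internal parallel reduction into its function and argument parts yields two sub-cases according to whether $M_f$ is a value: if it is, I invoke~(STD) on $M_f \to_\vg^* N_f$ (all terms of the resulting sequence being values by Remark~\ref{rmk:fromvalue}) and~(IN) on $M_a \ToIntTrans N_a$, closing with rule~(\ref{4}); if it is not, I invoke~(IN) on $M_f \ToIntTrans N_f$ and~(STD) on $M_a \to_\vg^* N_a$, closing with rule~(\ref{5}). The component fed to~(IN) is again head $\vg$-normal (because $M_0 = M_f M_a$ is, using Lemma~\ref{rmk:preliminary}.\ref{rmk:preliminary.apphead}), and the targets $N_f, N_a$ are proper subterms of $N$, hence have strictly smaller $\vg$-normal forms, so the induction is well-founded.

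The \textbf{main obstacle} is the auxiliary fact that head $\sigma$-reduction preserves head $\beta_v$-normality, as this is precisely what makes the produced sequence honestly strict: the head $\sigma$-phase must not resurrect a head $\beta_v$-redex. I would establish it by induction on the derivation of $P \ToHeadSigma P'$ (Definition~\ref{def:headsigma}). In the $\sigma_1$- and $\sigma_3$-axiom cases, head $\beta_v$-normality of $P$ forces its relevant argument to be a non-value (otherwise $P$ would already contain a head $\beta_v$-redex), and this non-value argument stays in head position in $P'$, so no head $\beta_v$-redex is created. In the $\mathit{right}$ case $P = V N M_1 \dots M_m \ToHeadSigma V N' M_1 \dots M_m = P'$ with $N \ToHeadSigma N'$: here $N$ is head $\beta_v$-normal (else $P$ would not be, by the $\mathit{right}$ rule for $\ToHeadBetav$), $N'$ is not a value (Lemma~\ref{rmk:abs}.\ref{rmk:abs.notovalue}), and $N'$ is head $\beta_v$-normal by the induction hypothesis, whence $P'$ is head $\beta_v$-normal. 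A secondary, routine point is the bookkeeping ensuring that the subterms fed to~(IN) remain head $\vg$-normal, which follows from Lemma~\ref{rmk:preliminary}.\ref{rmk:preliminary.apphead} and Remark~\ref{rmk:fromvalue} as indicated.
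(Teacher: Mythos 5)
Your proof is correct and follows the same route as the paper's (which is only sketched there, with the details deferred to \cite{guerrieri15wpte}): head $\beta_v$-normalize via Proposition~\ref{prop:headnormalization}, then head $\sigma$-normalize using strong normalization of $\to_\sigma$ together with preservation of head $\beta_v$-normality, then recurse on subterms exactly as in the proof of Theorem~\ref{thm:standardization}.\ref{thm:standardization.inner}. In particular, the auxiliary fact you isolate and verify\,---\,that $\ToHeadSigma$ preserves head $\beta_v$-normality\,---\,is precisely the claim the paper asserts without proof, and your case analysis establishing it is sound.
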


The proof of the left-to-right direction of Proposition~\ref{prop:normalization} (the right-to-left one is trivial) relies on Proposition~\ref{prop:headnormalization}, see \cite{guerrieri15wpte} for details: the idea is that, given a $\vg$-normalizable (and then head $\vg$-normalizable) term $M$, one 
performs\,---\,deterministically\,---\,head $\beta_v$-reduction steps from $M$ as long as a head $\beta_v$-normal form $N$ is reached
(
according to Proposition~\ref{prop:headnormalization}, a term is head $\vg$-normalizable iff it is head $\beta_v$-normalizable); then, one 
performs head $\sigma$-reduction steps from $N$ (where head $\sigma_1$- and head $\sigma_3$-reduction steps can be performed in whatever order) as long as a head $\vg$-normal form $L$ is reached (such a $L$ always exists because $\ToHeadSigma$ is strongly normalizing and preserves $\beta_v$-normal forms); finally, one performs internal $\vg$-reduction steps starting from $L$ by iterating 
this strategy on the subterms of $L$, according to the standard left-to-right order, as long as the $\vg$-normal form of $M$ is reached.

Clearly, Theorem~\ref{thm:standardization} fails if in its statement ``standard sequence'' is replaced by ``strict standard sequence'': 
$I\Delta I \ToHeadSigmaOne (\lambda.x I)\Delta$ is a standard sequence but there is no strict standard sequence from $I\Delta I$ to $(\lambda.x I)\Delta$, since $I\Delta I \ToHeadBetav \Delta I \ToHeadBetav II \ToHeadBetav I$ 
and $I$ is (head) $\vg$-normal.
Similarly, $(\Delta\Delta)(II) \ToIntBetav (\Delta\Delta)I$ is a standard sequence but there is no strict standard sequence from $(\Delta\Delta)(II) $ to $ (\Delta\Delta)I$, since $(\Delta\Delta)(II)$ is not head $\beta_v$-normalizable.

\section{Conclusions}\label{sect:conclusions}

It has been proved in \cite{PaoliniRonchi99,paolini02ictcs,ronchi04book,paolini11tcs} that $\beta_{v}$-reduction is too weak to characterize operationally some semantical properties of $\lambda_{v}$, such as separability, potentially valuability and solvability%
. The main motivation behind the introduction of $\lambda_v^\sigma$ in \cite{carraro14lncs} was 
to achieve a call-by-value language where potential valuability and solvability can be characterized operationally without resorting to reductions external the call-by-value paradigm: $\lambda_{v}^\sigma$ allows an internal operational characterization of such notions \cite[Theorems 24-25]{carraro14lncs}. 
In this paper we close the game, by proving that $\lambda_{v}^\sigma$ is a conservative extension of $\lambda_{v}$: in particular, $\lambda_v^\sigma$ is sound with respect to the operational semantics of $\lambda_v$ (Corollary~\ref{cor:observational}), and potential valuability and solvability for $\lambda_v^\sigma$ coincide with the respective notions for $\lambda_v$ (Theorem~\ref{thm:valsolv}).
So, $\lambda_v^\sigma$ is a useful framework for studying semantic and operational properties of $\lambda_v$. 
The technical tool on which the proofs of these conservativity properties are based is an interesting result in its own, namely standardization for $\lambda_{v}^\sigma$ (Theorem~\ref{thm:standardization}).

Standardization for $\lambda_v^\sigma$ has been proved 
using parallel reduction. Let us recall that parallel reduction in $\lambda$-calculus has been defined by Tait and Martin-L\"of in order to prove confluence of $\beta$-reduction, without referring to the tricky notion of residuals. 
Takahashi in \cite{takahashi1989jsl,Takahashi95} has simplified this technique and showed that it can be successfully applied also to prove standardization for $\lambda$.
However, in $\lambda_v^\sigma$ our parallel reduction $\Rightarrow$ cannot be used to prove confluence of $\to_\vg$, since $\Rightarrow$ does not enjoy the diamond property.
Indeed, consider
\begin{equation*}
\scalebox{.9}{\xymatrix@C=+0cm@R=+0.2cm@M=1mm@L=.8mm{
      & (\lambda x.M) \bigl((\lambda y.N) (zz)\bigr) L \ar@{=>}[dl]_->{\text{(by applying the rule } \sigma_1\text{)} \qquad\qquad\quad}
      \ar@{=>}[dr]^->{\text{ \qquad\qquad\qquad(by applying the rule } \sigma_3\text{)}}
      \\
      M_1 = (\lambda x.ML) \bigl((\lambda y.N) (zz)\bigr)  & &  \bigl(\lambda y.(\lambda x.M)N\bigr) (zz)L = M_2 \\
    }}
\end{equation*}
  It is easy to check that there is no term $M'$ such that $M_1 \Rightarrow M'$ and 
$M_2 \Rightarrow M'$. 

The proof  of the standardization theorem is based on a sequentialization property, imposing a total order between $\beta_{v}$-redexes, but a partial one between $\sigma$-redexes. 
We conjecture that a total order between all $\vg$-redexes can be provided by defining a suitable notion of head $\sigma$-reduction that properly interleaves head $\sigma_{1}$- and head $\sigma_{3}$-reduction steps. 
Anyway, we do not fully explored this possibility because we are unaware of interesting applications.

Postponements of head $\sigma$-reduction to head $\beta_v$-reduction (Lemma~\ref{lemma:commutation}) and of internal $\vg$-reduction to head $\vg$-reduction (Corollary~\ref{cor:postponeinternal}) suggest the idea 
that, in order to avoid 
the issues affecting $\lambda_v$ when dealing with open terms and stuck $\beta$-redexes,
it is 
enough to restrict our shuffling calculus $\lambda_v^\sigma$ by allowing 
(local head) $\sigma$-reduction steps only when a 
(local head) $\beta_v$-normal form is reached.
This approach generalizes the idea behind 
strict standard sequences defined in Section~\ref{sect:conservative}.
In fact, this restricted shuffling calculus is a ``minimalistic'' extension of Plotkin's $\lambda_v$ solving the problem of premature $\beta_v$-normal forms. 
Since values are head $\beta_v$-normal and $I$ is $\vg$-normal, Corollary~\ref{cor:value}.\ref{cor:value.one} and Proposition~\ref{prop:normalization} ensure that the conservativity result given by Theorem~\ref{thm:valsolv} (as well as Corollary~\ref{cor:observational}) 
would still hold in this restricted shuffling calculus.
But solving the problem of premature $\beta_v$-normal forms is only the first step in the direction of a deep analysis of $\lambda_{v}$ and, more generally, of call-by-value settings: the \emph{whole} shuffling calculus $\lambda_v^\sigma$ seems to be an adequate framework for this task (Corollary~\ref{cor:observational}  and Theorem~\ref{thm:valsolv} exemplify how call-by-value properties can be correctly studied inside the whole $\lambda_v^\sigma$) and its study is more elegant and simpler without imposing any ``clumsy'' syntactic restrictions on the definition of shuffling calculus reduction rules.

\subsection*{Future work.} We plan to continue to explore the call-by-value setting, using the shuffling calculus $\lambda_{v}^\sigma$.
As a first step, we would like to revisit and improve the Separability Theorem given in \cite{paolini02ictcs} for $\lambda_v$.
Still the issue is more complex than in the call-by-name, indeed in ordinary $\lambda$-calculus different $\beta\eta$-normal forms can be separated (by the B\"ohm Theorem),
while in $\lambda_{v}$ there are different normal forms that cannot be separated, but which are only semi-separable (e.g. $I$ and $\lambda z.(\lambda u.z)(zz)$).
We hope to completely characterize separable and semi-separable normal forms in $\lambda_{v}^\sigma$.
This should be a first step aimed to define a semantically meaningful notion of approximants.
Then, we should be able to provide a new insight on the denotational analysis of the call-by-value, maybe overcoming limitations as that of the absence of fully abstract filter models \cite[Theorem~12.1.25]{ronchi04book}.
Last but not least, an unexplored but challenging research direction is the use of our commutation $\sigma$-rules to improve and speed up the call-by-value evaluation.
We do not have any concrete evidence supporting such possibility, but since $\lambda_{v}^\sigma$ is strongly related to the calculi presented in \cite{herbelin09lncs,accattoli12lncs} (see \cite{AccattoliGuerrieri16} for a comparison), which are endowed with explicit substitutions, we 
believe that a sharp use of commutations could have a relevant impact on the evaluation.

%

\subsection*{Acknowledgements}
The authors wish to thank the anonymous referees for their insightful comments.



\appendix
\bibliographystyle{alpha}%
\bibliography{biblio}	





\end{document}